\DeclareMathAlphabet{\mathbbmsl}{U}{bbm}{m}{sl}
\newcommand{\Lacki}{\L\k{a}cki}
\newcommand{\Xomit}[1]{}
\newcommand{\tightened}{\addtolength{\itemsep}{-0.5\baselineskip}}
\newcommand{\halftightened}{\addtolength{\itemsep}{-0.25\baselineskip}}
\definecolor{lightgray}{RGB}{192,192,192}
\definecolor{lightlightgray}{RGB}{222,222,222}
\tikzset{3d/.style={shade, ball color=white, shading=ball, shape=circle, circular drop shadow={opacity=.2}}}
\tikzset{3dsquare/.style={rounded corners=2pt, fill=black!30!green, draw=olive, opacity=0.6, shape=rectangle, drop shadow={opacity=.2, fill=green}}}
\tikzset{general/.style={->,>=Stealth, minimum size = .75cm, every node/.style={scale=.8}}}
\tikzset{path/.style={-{Latex[right]}, densely dotted, very thick}}
\tikzset{pathleft/.style={-{Latex[left]}, densely dotted, very thick}}
\tikzset{edge/.style={-{Latex[right]}, thick}}
\tikzset{bundled/.style={very thick, {Latex[right]}-{Latex[right]}}}
\tikzset{bluearrow/.style={blue,-{Latex[right]}, very thick}}
\tikzset{redarrow/.style={very thick,-{Latex[left]}, red}}
\tikzset{redarrowright/.style={red,-{Latex[right]}, very thick}}
\newtheorem{theorem}{Theorem}[section]
\newtheorem{lemma}[theorem]{Lemma}
\title{Minimum Cuts and Shortest Cycles in Directed Planar Graphs via
  Noncrossing Shortest Paths\thanks{A preliminary version of this paper appeared as
  the master's thesis of the first author~\cite{Thesis}. 
  The journal version appeared in {\em SIAM Journal on Discrete Mathematics}~\cite{sidma2017}.}}
\author{Hung-Chun Liang\thanks{Graduate Institute of Computer Science
    and Information Engineering, National Taiwan University. Email:
    {\tt{sirbatostar@yahoo.com.tw}}.}
\and 
Hsueh-I Lu\thanks{Corresponding author. Department of Computer Science
  and Information Engineering, National Taiwan University.  This
  author also holds joint appointments in the Graduate Institute of
  Networking and Multimedia and the Graduate Institute of Biomedical
  Electronics and Bioinformatics, National Taiwan University. Address:
  1 Roosevelt Road, Section 4, Taipei 106, Taiwan, ROC. Research of
  this author is supported in part by MOST
  grant
  104--2221--E--002--044--MY3.  Email:
  {\tt{hil@csie.ntu.edu.tw}}. Web:
  {\url{www.csie.ntu.edu.tw/\~hil}}.}}
\begin{document}
\maketitle
\begin{abstract}
Let $G$ be an $n$-node simple directed planar graph with nonnegative
edge weights. We study the fundamental problems of computing (1) a
global cut of $G$ with minimum weight and (2) a~cycle of $G$ with
minimum weight.  The best previously known algorithm for the former
problem, running in $O(n\log^3 n)$ time, can be obtained from the
algorithm of \Lacki, Nussbaum, Sankowski, and Wulff-Nilsen for
single-source all-sinks maximum flows.  The best previously known
result for the latter problem is the $O(n\log^3 n)$-time algorithm of
Wulff-Nilsen.  By exploiting duality between the two problems in
planar graphs, we solve both problems in $O(n\log n\log\log n)$ time
via a divide-and-conquer algorithm that finds a shortest
non-degenerate cycle. The kernel of our result is an $O(n\log\log
n)$-time algorithm for computing noncrossing shortest paths among
nodes well ordered on a common face of a directed plane graph, which
is extended from the algorithm of Italiano, Nussbaum, Sankowski, and
Wulff-Nilsen for an undirected plane graph.
\end{abstract}

\section{Introduction}
Let $G$ be an $n$-node $m$-edge simple graph with nonnegative edge
weights.  $G$ is {\em unweighted} if the weights of all edges of $G$
are identical.  Let $C$ be a subgraph of $G$.  The {\em weight} $w(C)$
of $C$ is the sum of edge weights of $C$.  Let $G\setminus C$ denote
the graph obtained from $G$ by deleting the edges of $C$.  
Paths are allowed to repeat nodes throughout the paper.
For nodes
$s$ and $t$, an {\em $st$-path} of $G$ is a path of $G$ from $s$ to
$t$ and an {\em $st$-cut} of $G$ is a subgraph $C$ of $G$ such that
there are no $st$-paths in $G\setminus C$.  A {\em (global) cut} of
$G$ is an $st$-cut of $G$ for some nodes $s$ and $t$ of $G$.
A {\em cycle} of $G$ is an $ss$-path of $G$ for some node $s$ of $G$.
\begin{itemize}
\item 
The {\em minimum-cut problem} on $G$ seeks a cut of $G$ with minimum
weight.  For instance, the $v_1v_3$-cut consisting of edge $v_2v_3$ is
the minimum cut of the graph in Figure~\ref{figure:figure1}(a).  The
best known algorithm on directed $G$, due to Hao and
Orlin~\cite{Hao94}, runs in $O(mn\log\frac{n^2}{m})$ time.  On
undirected $G$, Nagamochi and Ibaraki~\cite{Naga92} and Stoer and
Wagner~\cite{Stoer97} solved the problem in $O(mn+n^2\log n)$ time and
Karger~\cite{Karg00} solved the problem in expected $O(m\log^3 n)$
time.  Kawarabayashi and Thorup~\cite{KawarabayashiT15} recently
announced the first known $o(mn)$-time algorithm on undirected
unweighted $G$, improving upon the algorithm of Gabow~\cite{Gabow95}
designed twenty years ago.

\item 
The {\em shortest-cycle problem} on $G$ seeks a cycle of $G$ with
minimum weight.  For instance, cycle $v_2v_3v_2$ with weight $6$ is
the shortest cycle of the graph in Figure~\ref{figure:figure1}(a).
Since a shortest directed cycle containing edge $ts$ is obtainable
from a shortest $st$-path, the problem on directed graphs can be
reduced to computing all-pairs shortest paths in, e.g.,~$O(mn+n^2\log
n)$ time~\cite{clr}.  Vassilevska~Williams and Williams~\cite{will10}
argued that finding a truly subcubic algorithm for the problem might
be hard.  For directed (respectively, undirected) unweighted $G$, Itai
and Rodeh~\cite{itai78} solved the problem in $O(\mu(n)\log n)$
(respectively, $O(\min(mn,\mu(n)))$) time, where
$\mu(n)=O(n^{2.373})$~\cite{will12} is the time for multiplying two
$n\times n$ matrices.
\end{itemize}
If $G$ is undirected and planar, Chalermsook, Fakcharoenphol, and
Nanongkai~\cite{chal} showed that the time complexity of both
aforementioned problems on $G$ is $O(\log n)$ times that of finding an
$st$-cut of $G$ with minimum weight for any given nodes $s$ and $t$.
Plugging in the $O(n\log n)$-time algorithms, e.g., of 
Frederickson~\cite{Frederickson87},
Borradaile and
Klein~\cite{borr09}, and Erickson~\cite{Erickson10}, the reduction of
Chalermsook et al.~solved both problems in $O(n\log^2 n)$
time. Plugging in the $O(n\log\log n)$-time algorithm of Italiano,
Nussbaum, Sankowski, and Wulff-Nilsen~\cite{ital}, the reduction of
Chalermsook et al.~solved both problems in $O(n\log n\log\log n)$
time.  The best known result for both problems on $G$ is the
$O(n\log\log n)$-time algorithm of \Lacki\
and Sankowski~\cite{lacki}, relying upon the $st$-cut oracle of
Italiano et al.~\cite{ital}.

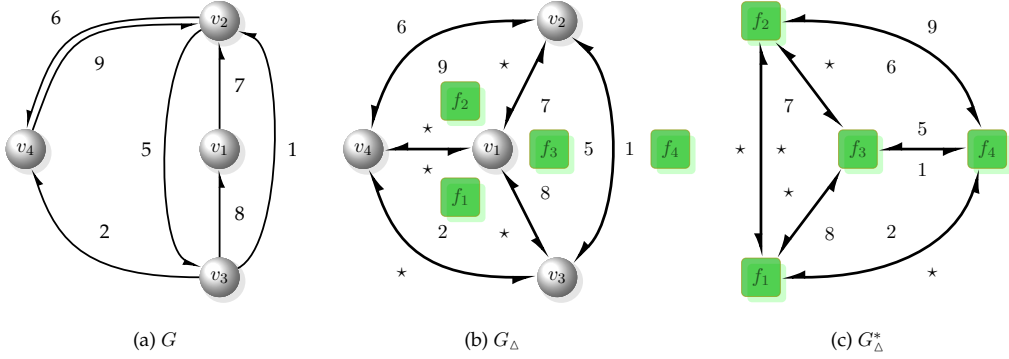
\begin{figure}[t]
\begin{center}
\scalebox{0.85}{
\begin{tikzpicture}[general]
\node [3d] (1) at (0,0) {$v_1$};
\node [3d] (2) at (0,2) {$v_2$};
\node [3d] (3) at (0,-2) {$v_3$};
\node [3d] (4) at (-3,0) {$v_4$};

\draw [edge] (1) -- node[auto, swap] {7} (2);
\draw [edge] (3) -- node[auto, swap] {8} (1);
\draw [edge] (3) .. controls +(1,.5) and +(1,-.5) .. node [auto, swap] {1}  (2);
\draw [edge] (2) .. controls +(-1,-.5) and +(-1,.5) .. node [auto, swap] {5}  (3);
		
\draw [edge] (2.180-10) .. controls +(-2,0) and +(0.5,1.5) .. node[auto, swap] {6} (4.90);
\draw [edge] (4.90-20) .. controls +(0.5,1.4) and +(-2,0) ..  node[auto, swap] {9} (2.180+10);
\draw [edge] (3) .. controls +(-2,0) and +(0.5,-1.5) .. node[auto, swap] {2} (4);		

\node at (-1,-3) {(a) $G$};
\end{tikzpicture}
\quad
\begin{tikzpicture}[general]

\node[3dsquare] at (-1.5,-0.75) {$f_1$};
\node[3dsquare] at (-1.5,0.75) {$f_2$};
\node[3dsquare] at (-0.125,0) {$f_3$};
\node[3dsquare] at (1.75,0) {$f_4$};

\node [3d] (1) at (-1,0) {$v_1$};
\node [3d] (2) at (0,2) {$v_2$};
\node [3d] (3) at (0,-2) {$v_3$};
\node [3d] (4) at (-3,0) {$v_4$};

\draw [bundled] (2) .. controls +(1,-1) and +(1,1) .. node[auto] {$1$} node [auto, swap] {$5$} (3);
\draw [bundled] (1) to node[auto] {$\star$} node [auto, swap] {$7$} (2);
\draw [bundled] (3) to node[auto] {$\star$} node [auto, swap] {$8$} (1);
\draw [bundled] (4) .. controls +(0.5,1.5) and +(-2,0) .. node[auto] {$6$} node [auto, swap] {$9$} (2);
\draw [bundled] (4) .. controls +(0.5,-1.5) and +(-2,0) .. node[auto] {$2$} node [auto, swap] {$\star$} (3);
\draw [bundled] (1) to node[auto] {$\star$} node[auto, swap]  {$\star$} (4);

\node at (-1,-3) {(b) $G_\vartriangle$};
\end{tikzpicture}
\quad
\begin{tikzpicture}[general]
\node [3dsquare] (1) at (-1.5,-2) {$f_1$};
\node [3dsquare] (2) at (-1.5,2) {$f_2$};
\node [3dsquare] (3) at (0,0) {$f_3$};
\node [3dsquare] (4) at (2,0) {$f_4$};

\draw [bundled] (1) to node[auto]{$\star$} node [auto,swap]{$\star$} (2);
\draw [bundled] (1) to node[auto]{$\star$} node [auto,swap]{$8$} (3);
\draw [bundled] (2) to node[auto]{$\star$} node [auto,swap]{$7$} (3);
\draw [bundled] (1) .. controls +(2,0) and +(-0.5,-1.5) .. node[auto]{$2$} node [auto,swap]{$\star$} (4);
\draw [bundled] (2) .. controls +(2,0) and +(-0.5,1.5) .. node[auto]{$9$} node [auto,swap]{$6$} (4);
\draw [bundled] (3) to node[auto]{$5$} node [auto,swap]{$1$} (4);

\node at (0,-3) {(c) $G_\vartriangle^*$};
\end{tikzpicture}
}
\end{center}
\caption{(a) A simple planar graph $G$. (b) A simple bidirected plane
  graph $G_\vartriangle$ obtained from $G$ by adding edges with weights
  $\star=0$ (respectively, $\star=\infty$) if we are seeking a minimum
  cut (respectively, shortest cycle) of $G$. (c) The dual of
  $G_\vartriangle$.}
\label{figure:figure1}
\end{figure}

This paper addresses both problems for the case that $G$ is directed
and planar.  While the minimum-cut problem has been thoroughly studied
for undirected planar graphs, surprisingly no prior work is
specifically for directed planar graphs.  Djidjev~\cite{djid10}
claimed that his technique for unweighted undirected planar graphs
solves the shortest-cycle problem on unweighted directed planar $G$ in
$O(n^{3/2})$ time and left open the problem of finding a shortest
cycle in unweighted directed planar $G$ in $o(n^{3/2})$ time.  Weimann
and Yuster~\cite{weim} gave an $O(n^{3/2})$-time algorithm for the
shortest-cycle problem, which should be adjustable to solve the
minimum-cut problem also in $O(n^{3/2})$ time (via similar techniques
to our proof for Lemma~\ref{lemma:lemma4.2}
in~\S\ref{section:section4} to handle degeneracy in shortest cycles).
Wulff-Nilsen~\cite{wulff09} reduced the time for the shortest-cycle
problem on $G$ to $O(n\log^3 n)$, but it is unclear how to adjust his
algorithm to solve the minimum-cut problem without increasing the
required time by too much.  The algorithm of 
\Lacki, 
Nussbaum, Sankowski, and Wulff-Nilsen~\cite{lacki12} for single-source
all-sinks maximum flows solves the minimum-cut problem on directed
planar $G$ in $O(n\log^3 n)$ time.  Below is our result:
\begin{theorem}
\label{theorem:theorem1}
It takes $O(n\log n\log\log n)$ time to solve the minimum-cut and
shortest-cycle problems on an $n$-node simple directed planar graph
with nonnegative edge weights.
\end{theorem}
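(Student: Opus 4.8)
The plan is to reduce, via planar duality, both the minimum-cut and the shortest-cycle problem on $G$ to a single task---computing a shortest cycle in a simple bidirected directed plane graph $H$ with $O(n)$ nodes---and then to solve that task by a divide-and-conquer algorithm whose inner engine is the directed noncrossing-shortest-paths subroutine promised by the abstract. First I would carry out the reduction. Build the simple bidirected plane graph $G_\vartriangle$ of Figure~\ref{figure:figure1}(b) by adding the missing antiparallel edges (and a few more, to make $G_\vartriangle$ connected with a convenient face structure), with weight $0$ when we seek a minimum cut and weight $\infty$ when we seek a shortest cycle. For the shortest-cycle problem this already suffices: the padding edges have infinite weight, so a shortest cycle of $G_\vartriangle$ is a shortest cycle of $G$. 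For the minimum-cut problem, pass to the dual $G_\vartriangle^{*}$ of Figure~\ref{figure:figure1}(c); under the standard correspondence between directed edge cuts of a plane digraph and directed cycles of its dual, and by the way $G_\vartriangle$ is built, a minimum cut of $G$ corresponds to a shortest cycle of $G_\vartriangle^{*}$. Either way it remains to compute a shortest cycle of an $O(n)$-node simple bidirected directed plane graph $H$.

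Next I would split off the degenerate cycles. Call a cycle of $H$ \emph{degenerate} if it is a $2$-cycle $u\to v\to u$; under the reduction these are exactly the cycles that can collapse onto a single antiparallel edge pair (for minimum cut, the single-edge cuts of $G$; for shortest cycle, the $2$-cycles of $G$, such as $v_2v_3v_2$ in Figure~\ref{figure:figure1}(a)). In a simple $H$ there are only $O(n)$ of them, one per antiparallel pair, so the shortest degenerate cycle is found in $O(n)$ time---this is the content of the argument behind Lemma~\ref{lemma:lemma4.2}. It then remains to find a shortest \emph{non-degenerate} cycle $C$, which, being simple and visiting at least three nodes, is a Jordan curve in the plane.

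The non-degenerate case is where divide and conquer enters. Compute a shortest-path tree of $H$ and use it to extract a balanced separator $S$ consisting of a constant number of shortest paths of $H$; split $H$ along $S$ into two plane subgraphs $H_1$ and $H_2$, each carrying a copy of $S$ on a single face, where the $O(|S|)$ nodes of $S$ appear in a known order. A shortest non-degenerate cycle $C$ either avoids $S$---and then it lies inside $H_1$ or inside $H_2$ and is found by recursing on that side---or it meets $S$. In the latter case, cut $C$ at $C\cap S$ into maximal arcs; each arc lies in $H_1$ or in $H_2$ and has both endpoints on $S$, and---because $C$ is a Jordan curve while $S$ lies on the boundary of each side---the arcs in $H_1$ form a noncrossing family with respect to the order on $S$, as do those in $H_2$. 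Hence the shortest $C$ meeting $S$ is recovered by computing shortest noncrossing paths among the well-ordered $S$-nodes on the common face of $H_1$, doing the same for $H_2$, and recombining the two noncrossing families into one shortest cycle by a shortest-path computation in a sparse auxiliary graph over the $S$-nodes that respects their order. Running the $O(n\log\log n)$-time directed noncrossing-shortest-paths algorithm on each side, choosing the separators so that each recursion level has total size $O(n)$ and the depth is $O(\log n)$, and noting that the recombination and the planar shortest-path computations (linear time per piece) contribute only $O(n\log n)$ in total, the algorithm runs in $O(n\log n\log\log n)$ time; together with the $O(n)$-time degenerate case this proves the theorem.

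The step I expect to be the main obstacle is the one that handles cycles meeting the separator, together with the kernel it relies on. One must prove that a shortest non-degenerate cycle crossing $S$ genuinely decomposes into \emph{noncrossing} arcs on each side \emph{in the directed setting}, with the orientations of the arcs fitting together consistently around $C$; one must arrange tie-breaking among shortest paths so that this noncrossing structure is actually exposed to the subroutine; and one must show that the two noncrossing families can be stitched back into a single simple cycle by a near-linear-time computation rather than an all-pairs one. Supporting all of this is the kernel itself: extending the noncrossing-shortest-paths algorithm of Italiano, Nussbaum, Sankowski, and Wulff-Nilsen from undirected to directed plane graphs while keeping the running time $O(n\log\log n)$. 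That extension, rather than the outer divide-and-conquer, is where the heaviest technical work lies.
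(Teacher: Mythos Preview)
Your outer scaffolding (duality reduction, divide-and-conquer on shortest-path separators, noncrossing-shortest-paths kernel) matches the paper, but the middle layer---how you handle a shortest non-degenerate cycle that meets the separator---has a genuine gap, and it is exactly where the paper does something you do not anticipate.

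First, a factual error that undercuts your plan: a non-degenerate cycle need \emph{not} be simple, so it is not a Jordan curve. The paper's definition of ``degenerate'' is that the cycle traverses both $st$ and $ts$ for some pair $s,t$; it is not merely ``is a $2$-cycle.'' (See Figure~\ref{figure:figure2}(a): $C_1=u_1vu_2u_3vu_4u_1$ is non-degenerate and non-simple.) Consequently your decomposition of $C$ into a ``noncrossing family of arcs'' on each side of the separator is not justified, and the ``recombination in a sparse auxiliary graph'' is both undefined and not obviously near-linear: the number of arcs is unbounded, and the noncrossing-paths lemma (Lemma~\ref{lemma:lemma4.1}) only gives distances $d(u_i,v_i)$ for \emph{nested} terminal pairs, not for all pairs of separator nodes, so you do not have the edge weights your auxiliary graph would need.

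What the paper actually does is prove a structural lemma that eliminates the recombination problem entirely. The separator $C$ is a \emph{segmented} cycle---two shortest paths $P_1,P_2$ from a common root plus one edge. The proof of Lemma~\ref{lemma:lemma3.1} shows (via a shortcut-and-minimality argument) that among shortest non-degenerate cycles one can choose $C^*$ that is a $P_i$-cycle for each $i$: one subpath aligns with $P_i$, one subpath deviates from $P_i$, and the deviating subpath enters $\textit{int}_G(C)$ at one end and $\textit{ext}_G(C)$ at the other. So $C^*$ crosses each $P_i$ essentially once. Then (Lemma~\ref{lemma:lemma4.2}) one \emph{incises} $G$ along $P$ in Reif's sense, turning $C^*$ into a single $u_iv_i$-path (or $v_iu_i$-path) between the two copies of an internal node of $P$ on the outer face of the incised graph; minimizing over $i$ is precisely the format of Lemma~\ref{lemma:lemma4.1}. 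There is no stitching step. Two further details you omitted are also load-bearing: the reduction to $O(1)$-degree via the $\textsc{Split}$ operation (needed so that cycles through the endpoints of $P$ can be enumerated in $O(n)$ time), and the careful argument that $\textsc{Split}$ preserves shortest non-degenerate cycles even though it can make them non-simple.
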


As pointed out by anonymous reviewers, Mozes, Nikolaev, Nussbaum, and Weimann~\cite{MozesNNW15} recently announced an $O(n\log\log n)$-time algorithms for the minimum-cut problem. 
However, unlike our Theorem~\ref{theorem:theorem1}, their algorithm 
requires the condition that there is a unique shortest path between any two nodes. 
For general directed planar graphs with nonnegative edge weights, 
they apply 
an isolation lemma~\cite{MotwaniR95, MulmuleyVV87} to perturb
the edge weights to meet the condition with high probability. 
Thus, their results are Monte Carlo randomized algorithms.

\subsection*{Related work}
The only known nontrivial linear-time algorithm for the shortest-cycle
problem, due to Chang and Lu~\cite{Chang13}, works on undirected
unweighted planar graphs.  For undirected $G$, if $G$ is embedded on
an orientable surface of genus $g$, Erickson, Fox, and
Nayyeri~\cite{eric12} solved the problem in $g^{O(g)}n\log{\log{n}}$
time, based on the algorithm of 
\Lacki\ and Sankowski~\cite{lacki} for undirected planar graphs.  If $G$ is
undirected and unweighted and is $2$-cell embedded on an orientable
surface of genus $g=O(n^{\alpha})$ with $0<\alpha<1$,
Djidjev~\cite{djid10} solved the problem in $O(g^{3/4}n^{5/4}\log{n})$
time.  On undirected unweighted $O(1)$-genus $G$, Weimann and
Yuster~\cite{weim} solved the problem in $O(n\log n)$ time.  For
directed planar $G$, even if $G$ is unweighted, our
Theorem~\ref{theorem:theorem1} remains the best known algorithm.  If
$G$ is unweighted and embedded on a genus-$g$ surface, the technique
of Djidjev~\cite{djid10} solved the problem in $O(g^{1/2}n^{3/2})$
time.  The shortest-cycle problem on $G$ with negative edge weights
can be reduced to one with nonnegative edge weights using the standard
reweighting technique via a shortest-path tree in
$G$~(e.g.,~\cite{fr,GabowT89,Goldberg95,KleinMW10,MozesW10}).  Cygan,
Gabow, and Sankowski~\cite{cygan12} studied the problem on graphs
whose edge weights are bounded integers.  Yuster~\cite{yust11} studied
the version on undirected $G$ asking for each node a shortest cycle
containing the node.  See
e.g.,~\cite{cabe09,cabe13,cabe10,ericH02,worah10,fox13,fox14} for
algorithms that compute shortest cycles with prescribed topological
properties.  See,
e.g.,~\cite{itai78,ling09,moni83,rodi11,rodi12,yust97} for
approximation algorithms of the shortest-cycle problem.
		
The closely related problem that seeks a minimum $st$-cut for given
nodes $s$ and $t$ and its dual problem that seeks a maximum $st$-flow
have been extensively studied even for only planar graphs (see,
e.g.,~\cite{borr09,Erickson10,KhullerN93,Weihe97}).  A minimum
$st$-cut of $G$ can be obtained in $O(m+n)$ time from a maximum
$st$-flow $f$ of $G$ by identifying the edges from the nodes of $G$
reachable from $s$ to the nodes of $G$ not reachable from $s$ in the
residual graph of $G$ with respect to $f$.  No efficient reductions
for the other direction are known.  Orlin~\cite{Orlin13} gave the only
known $O(mn)$-time algorithms for the maximum $st$-flow problem on
general graphs with integral edge weights.
For undirected planar $G$, Reif~\cite{reif} gave an $O(n\log^2
n)$-time algorithm for the minimum $st$-cut problem.
Frederickson~\cite{Frederickson87} improved the time complexity of
Reif's algorithm to $O(n\log n)$.  The best known algorithms for both
problems, due to Italiano et al.~\cite{ital}, run in $O(n\log\log n)$
time.  
The attempt of Janiga and Koubek~\cite{JanigaK92} to generalize Reif's algorithm to directed planar $G$ turned out to be flawed~\cite{EricksonN11a,KaplanN11,MozesNNW15}.
Borradaile and Klein~\cite{borr09} and Erickson~\cite{Erickson10} gave
$O(n\log n)$-time algorithms for both problems on directed planar
graphs.
On directed planar unweighted $G$, Brandes and Wagner~\cite{bran00}
and Eisenstat and Klein~\cite{eise13} solved both problems in $O(n)$
time.  
The algorithm of Kaplan and Nussbaum~\cite{KaplanN11} 
is capable of exploiting
the condition that nodes $s$ and $t$ are close.  For directed planar
$G$, the $O(n\log^3 n)$-time algorithm of 
\Lacki\ et al.~\cite{lacki12}
obtains the minimum weights of $st$-cuts for any given $s$ and all
nodes $t$ of $G$.  For any given node subsets $S$ and $T$ of directed
planar $G$, the $O(n\log^3 n)$-time algorithm of Borradaile, Klein,
Mozes, Nussbaum, and Wulff-Nilsen~\cite{borr11} computes a subgraph
$C$ of $G$ with minimum weight such that there is no $st$-path in
$G\setminus C$ for any $s\in S$ and $t\in T$.  On undirected planar
$G$, Borradaile, Sankowski, and Wulff-Nilsen~\cite{BorradaileSW15}
gave an $O(n\log^4 n)$-time algorithm to compute a Gomory-Hu
cut-equivalent tree~\cite{gomo61}, a compact representation of
$st$-cuts with minimum weights for all nodes $s$ and $t$.  
	
The kernel of our result is an $O(n\log\log n)$-time algorithm for
computing noncrossing shortest paths among nodes well ordered on a
common face of a directed plane graph, which is extended from the
algorithm of Italiano et al.~\cite{ital} for an undirected plane
graph. A closely related NP-hard {\em shortest-noncrossing-paths
  problem} seeks noncrossing paths between $k$ given terminal pairs
on $h$ faces with minimum total weight in a plane graph. Takahashi,
Suzuki, and Nishizeki~\cite{TakahashiSN96} solved the problem for
undirected plane graphs with $h\leq 2$ in $O(n\log k)$ time.
Papadopoulou~\cite{Papadopoulou99} addressed the geometric version of
the problem, where the terminal pairs are on the boundaries of $h$
polygonal obstacles in the plane with complexity $n$ and gave an
$O(n)$-time algorithm for the case $h\leq 2$.  Erickson and
Nayyeri~\cite{EricksonN11} generalized the result of Takahashi
\textit{et al.}, solving the problem for undirected planar graphs in
$2^{O(h^2)}n\log k$ time. They also generalized the result of
Papadopoulou to solve the geometric version in $2^{O(h^2)}n$ time.
Each of these algorithms computes an implicit representation of the
answers, which may have total size $\Omega(kn)$.
Polishchuk and Mitchell~\cite{PolishchukM07} addressed the problem of
finding noncrossing thick paths with minimum total weight.
Takahashi, Suzuki, and Nishizeki~\cite{TakahashiSN93} also considered
the rectilinear version of the problem.

\subsection*{Technical overview and outline} 
Our proof for Theorem~\ref{theorem:theorem1} consists of a series of
reductions.  Based upon the duality between simple cycles and minimal
cuts in plane graphs, Section~\ref{section:section2} gives an
$O(n)$-time reduction from the minimum-cut and shortest-cycle problems
in an $n$-node planar graph to the problem of finding a shortest
non-degenerate cycle in an $n$-node $O(1)$-degree plane graph $G$
(Lemma~\ref{lemma:lemma2.1}).  Let $C$ be a balanced separator of $G$
that corresponds to a fundamental cycle with respect to a
shortest-path tree of $G$.  A shortest non-degenerate cycle that does
not cross $C$ can be recursively computed from the subgraphs of $G$
separated by $C$.  Although we cannot afford to compute a shortest
non-degenerate cycle that crosses $C$, Section~\ref{section:section3}
reduces the problem of finding a shortest non-degenerate cycle to
finding a $C$-short cycle, i.e., a non-degenerate cycle that crosses
$C$ with the property that if it is not shortest, then a shortest
non-degenerate cycle that does not cross $C$ has to be a shortest
non-degenerate cycle of $G$ (Lemma~\ref{lemma:lemma3.1}).  This
reduction is a divide-and-conquer recursive algorithm using the
balanced separator $C$ and thus introduces an $O(\log n)$-factor
overhead in the running time.  A cycle of $G$ that crosses a shortest
path $P$ of $G$ can be shortcutted into a non-degenerate cycle that
crosses $P$ at most once.  Section~\ref{section:section4} reduces the
problem of finding a $C$-short cycle to that of finding a
$(C,P)$-short cycle, i.e., a non-degenerate cycle whose weight is no
more than that of any non-degenerate cycle that crosses a shortest
path $P$ of $G$ in $C$ exactly once (Lemma~\ref{lemma:lemma4.2}).  By
the technique of Reif~\cite{reif} that incises $G$ along $P$,
Section~\ref{section:section4} further reduces the problem of finding
a $(C,P)$-short cycle to that of finding shortest noncrossing paths
among nodes well ordered on the boundary of external face
(Lemma~\ref{lemma:lemma4.1}).  
As a matter of fact, this 
shortest-noncrossing-paths problem can be solved by the $O(n\log n)$-time
algorithm of Klein~\cite{mssp}, already yielding improved $O(n\log^2
n)$-time algorithms for the minimum-cut and shortest-cycle problems. 
(Mozes et al.~\cite{MozesNNW15} also mentioned that
$O(n\log^2
n)$-time algorithms can be obtained by plugging in the $O(n\log n)$-time minimum $st$-cut
algorithm of Borradaile and Klein~\cite{borr09}
into a directed version of the reduction algorithm
of Chalermsook et al.~\cite{chal}.)
To achieve the time complexity of Theorem~\ref{theorem:theorem1},
Section~\ref{section:section5} solves the problem in $O(n\log\log n)$
time by extending the algorithm of Italiano et al.~\cite{ital} for an
undirected plane graph.  Section~\ref{section:section6} concludes the
paper.

\section{Reduction to finding shortest non-degenerate cycles}
\label{section:section2}

Directed graph $G$ is {\em bidirected} if, for any two nodes $s$ and
$t$ of $G$, $st$ is an edge of $G$ if and only if $ts$ is an edge of
$G$.  The graph in Figure~\ref{figure:figure1}(a) is not bidirected.
The {\em degree} of node $v$ in bidirected $G$ is the number of
neighbors of $v$ in $G$.  The {\em degree} of bidirected $G$ is the
maximum degree of the nodes in $G$.  A {\em bidirected plane} graph is
a bidirected planar graph equipped with a plane embedding in which
edges between two adjacent nodes are bundled together.
Figures~\ref{figure:figure1}(b) and~\ref{figure:figure1}(c) show two
bidirected plane graphs $G_\vartriangle$ and $G_\vartriangle^*$.  
A cycle passing each
node at most once is {\em simple}.  A cycle is {\em degenerate} if it
is a node or passes both edges $st$ and $ts$ for two nodes $s$ and
$t$.  A cycle not simple (respectively, degenerate) is {\em
  non-simple} (respectively, {\em non-degenerate}).  Cycle $C_1$ in
Figure~\ref{figure:figure2}(a) is non-degenerate and non-simple.  In
the graph $G$ of Figure~\ref{figure:figure1}(a), cycle $v_2v_3v_2$ is
degenerate and simple, cycle $v_2v_3v_4v_2$ is non-degenerate and
simple, and cycle $v_1v_2v_4v_2v_3v_1$ is degenerate and non-simple.
The shortest degenerate cycle of $G$ is $v_2v_3v_2$ with weight
$6$. The shortest non-degenerate cycle of $G$ is $v_2v_3v_4v_2$ with
weight $16$.  Theorem~\ref{theorem:theorem1} can be proved by the
following lemma:

\begin{lemma}
\label{lemma:lemma2.1}
It takes $O(n\log n\log\log n)$ time to compute a shortest
non-degenerate cycle in an $n$-node $O(1)$-degree simple bidirected
plane graph with nonnegative edge weights.
\end{lemma}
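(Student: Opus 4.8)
The plan is a divide-and-conquer recursion of depth $O(\log n)$ in which the dominant work at each level is a near-linear-time noncrossing-shortest-paths computation. At the top of the recursion on $G$, I would first compute a shortest-path tree $T$ of $G$ rooted at an arbitrary node in $O(n)$ time (possible because $G$ is planar with nonnegative weights). Since $G$ is a plane graph of $O(1)$ degree, a Lipton--Tarjan-style argument yields a non-tree edge whose fundamental cycle $C$ with respect to $T$ is a balanced separator: cutting $G$ along $C$ gives two subgraphs each carrying at most a constant fraction of the faces, and each of the two tree-paths composing $C$ is itself a shortest path of $G$; such a $C$ is found in $O(n)$ time. A shortest non-degenerate cycle of $G$ either does not cross $C$, in which case it lies inside one of the two pieces and is found recursively, or it crosses $C$. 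With the standard accounting for fundamental-cycle separators the recursion has depth $O(\log n)$ and the subgraphs at any one level have total size $O(n)$, so --- exactly as recorded by Lemma~\ref{lemma:lemma3.1} --- it remains to handle the crossing case for a single separator in $O(n\log\log n)$ time, the separator recursion then costing only an extra $O(\log n)$ factor.

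Handling the crossing case is the crux. We cannot afford to compute a genuinely shortest non-degenerate cycle crossing $C$; it suffices to compute a \emph{$C$-short} cycle, a non-degenerate cycle crossing $C$ light enough that taking its minimum against the recursively returned cycles still gives a globally shortest non-degenerate cycle. Two reductions get us there. First, a shortest cycle confined to a region may be \emph{degenerate} (it passes both $st$ and $ts$ for some $s,t$); a separate lower-order search that swaps such a cycle for a non-degenerate one of equal weight --- the content of Lemma~\ref{lemma:lemma4.2} --- reduces finding a $C$-short cycle to finding a \emph{$(C,P)$-short} cycle, a non-degenerate cycle no heavier than any non-degenerate cycle that crosses a shortest subpath $P$ of $C$ exactly once (``exactly once'' loses nothing, since a cycle crossing a shortest path can always be short-cut so as to cross it at most once). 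Second, following Reif, I would incise $G$ along $P$: a non-degenerate cycle crossing $P$ once becomes a path between the two copies of $P$ in the incised graph, whose endpoints are well ordered on the boundary of the external face, and a $(C,P)$-short cycle is extracted from a suitable family of noncrossing shortest paths between those boundary nodes (Lemma~\ref{lemma:lemma4.1}).

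The remaining ingredient --- and the source of the $\log\log n$ factor --- is an $O(n\log\log n)$-time algorithm computing noncrossing shortest paths among well-ordered boundary nodes of a \emph{directed} plane graph. I would build this by extending the recursive construction of Italiano, Nussbaum, Sankowski, and Wulff-Nilsen from the undirected to the directed setting: split the boundary with a balanced separating subpath, recursively compute the noncrossing paths on the two sides, and splice them, using the dense-distance-graph machinery underlying their algorithm to keep each level near-linear. Composing everything, each separator costs $O(n\log\log n)$ time, and the $O(\log n)$-depth separator recursion multiplies this by $O(\log n)$, for $O(n\log n\log\log n)$ total. I expect the directed noncrossing-shortest-paths routine to be the main obstacle: directed distances are asymmetric, so the leftmost/rightmost shortest-path structure that makes the undirected recursion correct must be re-established with care, and the routine must remain truly near-linear --- any stray logarithmic factor there would wipe out the $\log\log n$ bound.
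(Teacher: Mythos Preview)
Your proposal follows essentially the same approach as the paper: divide-and-conquer via fundamental-cycle separators of a shortest-path tree, with the crossing case handled by the chain Lemma~\ref{lemma:lemma3.1} $\to$ Lemma~\ref{lemma:lemma4.2} $\to$ Lemma~\ref{lemma:lemma4.1}, the last extending Italiano et al.\ to the directed setting. One correction: you mischaracterize Lemma~\ref{lemma:lemma4.2} --- it is not a ``lower-order search that swaps'' degenerate cycles for non-degenerate ones, but the main $O(n\log\log n)$ routine that computes a $(C,P)$-short cycle via Reif's incision and Lemma~\ref{lemma:lemma4.1}; the degeneracy argument (shortcutting a cycle along a shortest subpath of $C$ and extracting a non-degenerate subcycle from the result) lives in the proof of Lemma~\ref{lemma:lemma3.1}, where it is used to show that a shortest non-degenerate cycle not contained in $\textit{int}_G(C)$ or $\textit{ext}_G(C)$ must already be a $(C,P)$-cycle.
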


\begin{proof}[Proof of Theorem~\ref{theorem:theorem1}]
Adding edges with weights $0$ (respectively, $\infty$) to the input
graph does not affect the weight of minimum cuts (respectively,
shortest cycles).  Hence, we may assume without loss of generality
that the input graph $G_\vartriangle$ has at least four nodes and is a simple
bidirected plane graph such that each face of $G_\vartriangle$ is a triangle.
See Figures~\ref{figure:figure1}(a) and~\ref{figure:figure1}(b) for
examples.  Let the {\em dual} $G_\vartriangle^*$ of $G_\vartriangle$ be the simple
bidirected plane graph on the $2n-4$ faces of $G_\vartriangle$ sharing the
same set of $6n-12$ edges with $G_\vartriangle$ that is obtainable in $O(n)$
time from $G_\vartriangle$ as follows: For any two adjacent nodes $s$ and $t$
of $G_\vartriangle$, there are directed edges $fg=st$ and $gf=ts$ in
$G_\vartriangle^*$, where $f$ and $g$ are the two faces of $G_\vartriangle$ incident
with the bundled edges between $s$ and $t$ such that face $g$
immediately succeeds face $f$ in clockwise order around node $s$ of
$G_\vartriangle$. See Figure~\ref{figure:figure1}(c) for an example.  Observe
that $C$ is a minimal cut of $G_\vartriangle$ if and only if $C$ is a simple
non-degenerate cycle of $G_\vartriangle^*$.  By nonnegativity of edge weights,
a shortest non-degenerate cycle of $G_\vartriangle^*$ is a minimum cut of
$G_\vartriangle$.  For instance, the shortest non-degenerate cycle of the
graph $G_\vartriangle^*$ in Figure~\ref{figure:figure1}(c) is $f_1f_4f_3f_1$
with weight $5$. It corresponds to the $v_1v_3$-cut
$\{v_1v_3,v_2v_3,v_4v_3\}$ of $G_\vartriangle$, which in turn corresponds to
the minimum cut $\{v_2v_3\}$ of $G$.  Although the degenerate cycle
$f_1f_4f_1$ is a shortest cycle of $G_\vartriangle^*$, it does not correspond
to a cut of $G$ in the above manner. Since each node of $G_\vartriangle^*$ has
exactly three neighbors, the statement of the theorem for the
minimum-cut problem follows from applying Lemma~\ref{lemma:lemma2.1}
on $G_\vartriangle^*$.

\begin{figure}[t]
\begin{center}
\scalebox{0.85}{
\begin{tikzpicture}[general,]
\node[3d] (v)  at (0,0) {$v$};
\node[3d] (u1) at (-2, 2) {$u_1$};
\node[3d] (u2) at ( 2, 2) {$u_2$};
\node[3d] (u3) at ( 2,-2) {$u_3$};
\node[3d] (u4) at (-2,-2) {$u_4$};
\draw[bluearrow, densely dashed]  (u1.-45-25) -- (v.90+45+25);
\draw[bluearrow, densely dashed]  (v.45-25) -- (u2.180+45+25);
\draw[bluearrow, densely dashed]  (u2.270-25) -- (u3.90+25);
\draw[bluearrow, densely dashed]  (u3.90+45-25) -- (v.-45+25);
\draw[bluearrow, densely dashed]  (v.180+45-25) -- (u4.45+25);
\draw[bluearrow, densely dashed]  (u4.90-25) -- (u1.-90+25);

\draw[redarrow, densely dotted]  (u1.-45+25) -- (v.90+45-25);
\draw[redarrow, densely dotted]  (v.180+45+25) -- (u4.45-25);
\draw[redarrow, densely dotted]  (u4.90+25) -- (u1.-90-25);

\draw[bundled] (v) -- node[auto, swap]{1} node[auto]{0} (u1);
\draw[bundled] (v) -- node[auto, swap]{0} node[auto]{2} (u2);
\draw[bundled] (v) -- node[auto, swap]{4} node[auto]{0} (u3);
\draw[bundled] (v) -- node[auto, swap]{0} node[auto]{3} (u4);
\draw[bundled] (u1) -- node[auto, swap]{5} node[auto]{0} (u4);
\draw[bundled] (u4) -- node[auto, swap]{6} node[auto]{7} (u3);
\draw[bundled] (u3) -- node[auto, swap]{8} node[auto]{0} (u2);

\node at (0,-3) {(a) $G_1$, {\color{blue}$C_1$}, and {\color{red}$C^*_1$}};
\end{tikzpicture}
\qquad\qquad
\begin{tikzpicture}[general]
\node[3d] (v1) at (-2/2.5, 2/2.5) {$v_1$};
\node[3d] (v2) at ( 2/2.5, 2/2.5) {$v_2$};
\node[3d] (v3) at ( 2/2.5,-2/2.5) {$v_3$};
\node[3d] (v4) at (-2/2.5,-2/2.5) {$v_4$};
\node[3d] (u1) at (-2, 2) {$u_1$};
\node[3d] (u2) at ( 2, 2) {$u_2$};
\node[3d] (u3) at ( 2,-2) {$u_3$};
\node[3d] (u4) at (-2,-2) {$u_4$};

\draw[bluearrow, densely dashed]  (u1.   -45-25) -- (v1.  90+45+25);
\draw[bluearrow, densely dashed]  (v1.      -25) -- (v2.-180   +25);
\draw[bluearrow, densely dashed]  (v2.    45-25) -- (u2. 180+45+25);
\draw[bluearrow, densely dashed]  (u2.270   -25) -- (u3.  90   +25);
\draw[bluearrow, densely dashed]  (u3.90 +45-25) -- (v3.    -45+25);
\draw[bluearrow, densely dashed]  (v3.180   -25) -- (v4.        25);
\draw[bluearrow, densely dashed]  (v4.180+45-25) -- (u4.     45+25);
\draw[bluearrow, densely dashed]  (u4.90    -25) -- (u1. -90   +25);

\draw[redarrow, densely dotted]   (u1.   -45+25) -- (v1.  90+45-25);
\draw[redarrow, densely dotted]   (v1.      +25) -- (v2.-180   -25);
\draw[redarrow, densely dotted]   (v2.270   +25) -- (v3.  90   -25);
\draw[redarrow, densely dotted]   (v3.180   +25) -- (v4.       -25);
\draw[redarrow, densely dotted]   (v4.180+45+25) -- (u4.     45-25);
\draw[redarrow, densely dotted]   (u4.90    +25) -- (u1. -90   -25);

\draw[bundled] (v1) -- node[auto, swap]{1} node[auto]{0} (u1);
\draw[bundled] (v2) -- node[auto, swap]{0} node[auto]{2} (u2);
\draw[bundled] (v3) -- node[auto, swap]{4} node[auto]{0} (u3);
\draw[bundled] (v4) -- node[auto, swap]{0} node[auto]{3} (u4);
\draw[bundled] (u1) -- node[auto, swap]{5} node[auto]{0} (u4);
\draw[bundled] (u4) -- node[auto, swap]{6} node[auto]{7} (u3);
\draw[bundled] (u3) -- node[auto, swap]{8} node[auto]{0} (u2);
\draw[bundled] (v1) -- node[auto, swap]{0} node[auto]{0} (v2);
\draw[bundled] (v2) -- node[auto, swap]{0} node[auto]{0} (v3);
\draw[bundled] (v3) -- node[auto, swap]{0} node[auto]{0} (v4);

\node at (0,-3) {(b) $G_2$, {\color{blue}$C^*_2$}, and {\color{red}$C_2$}};
\end{tikzpicture}
}
\end{center}
\caption{Bidirected plane graphs $G_1$ and $G_2$ and their edge
  weights are in black solid lines. Shortest non-degenerate cycles
  $C_1=u_1vu_2u_3vu_4u_1$ and $C^*_2=u_1v_1v_2u_2u_3v_3v_4u_4u_1$ are
  in blue dashed lines. Shortest non-degenerate cycles $C^*_1=u_1vu_4u_1$ and
  $C_2=u_1v_1v_2v_3v_4u_4u_1$ are in red dotted lines.}
\label{figure:figure2}
\end{figure}
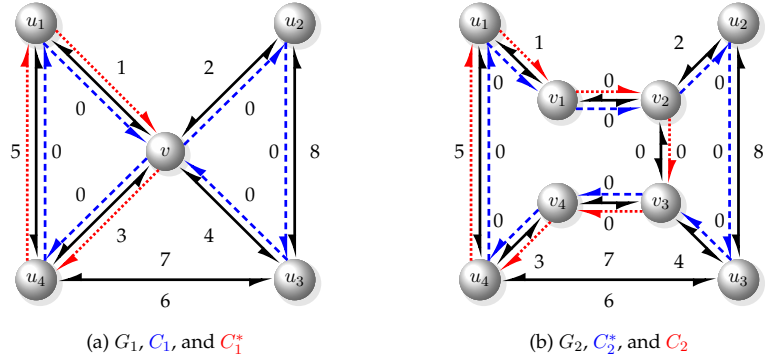

By nonnegativity of edge weights, it takes $O(n)$ time to obtain a
shortest degenerate cycle of $G_\vartriangle$ by examining the $O(n)$
degenerate cycles of $G_\vartriangle$ on exactly two nodes. By
Lemma~\ref{lemma:lemma2.1}, the statement of the theorem for the
shortest-cycle problem is immediate from the following claim:
\begin{quote}
{\em It takes $O(n)$ time to obtain from $G_\vartriangle$ an $O(n)$-node
  $O(1)$-degree simple bidirected plane graph $G$ such that a shortest
  non-degenerate cycle of $G_\vartriangle$ can be computed from a shortest
  non-degenerate cycle of $G$ in $O(n)$ time.}
\end{quote}
Let $G_1$ and $G_2$ be simple bidirected plane graphs with nonnegative
edge weights such that $G_2$ is obtained from $G_1$ by the following
$O(d)$-time operation on a degree-$d$ node $v$ of $G_1$ with $d\geq
4$: If $u_1,\ldots,u_d$ are the neighbors of $v$ in $G_1$ in clockwise
order around $v$, then $\mbox{\sc Split}(v)$
\begin{itemize}
\tightened
\item adds zero-weight path $v_1v_2\cdots v_{d-1} v_d v_{d-1}\cdots
  v_2 v_1$ with new nodes $v_1,\ldots,v_d$,
\item replaces edge $u_iv$ by edge $u_iv_i$ with the same weight for
  each $i=1,\ldots, d$,
\item replaces edge $vu_i$ by edge $v_iu_i$ with the same weight for
  each $i=1,\ldots, d$, and
\item deletes $v$.
\end{itemize}
See Figure~\ref{figure:figure2} for an example of $G_1$ and $G_2$.  An
$O(n)$-node $O(1)$-degree simple bidirected plane graph $G$ can be
obtained in $O(n)$ time from $G_\vartriangle$ by iteratively applying
$\mbox{\sc Split}$ on each node $v$ of $G_\vartriangle$ with degree $d \geq
4$.  To prove the claim, it suffices to ensure the following
statement:
\begin{quote}
{\em A shortest non-degenerate cycle $C_1$ of $G_1$ is obtainable in
  $O(d)$ time from a shortest non-degenerate cycle $C^*_2$ of $G_2$.
}
\end{quote}
For each $u_iu_j$-path $P$ of $C^*_2$ with $1\leq i\ne j\leq d$ such
that $P$ has at least two edges and all internal nodes of $P$ are in
$\{v_1,\ldots,v_d\}$, we replace $P$ by path $u_ivu_j$. By
$w(P)=w(u_ivu_j)$, we have $w(C_1)=w(C^*_2)$.  Since $C^*_2$ is
non-degenerate, so is the resulting $O(d)$-time obtainable cycle $C_1$
of $G_1$.  Since $C_1$ may pass $v$ more than once, $C_1$ could be
non-simple.  See Figure~\ref{figure:figure2} for an example of $C^*_2$
and $C_1$.  It remains to show $w(C_1)=w(C^*_1)$ for any shortest
simple non-degenerate cycle $C^*_1$ of $G_1$. By nonnegativity of edge
weights, we have $w(C_1)\geq w(C^*_1)$ even if $C_1$ is non-simple.
Let $C_2$ be the cycle of $G_2$ that is obtained from $C^*_1$ as
follows: If there is a path $u_ivu_j$ with $1\leq i\ne j\leq d$ in
$C^*_1$, then replace it by path $u_iv_i\cdots v_ju_j$. By
$w(u_ivu_j)=w(u_iv_i\cdots v_ju_j)$, we have $w(C_2)=w(C^*_1)$.
Otherwise, let $C_2=C^*_1$.  Since $C^*_1$ is simple, there is at most
one path $u_ivu_j$ in $C^*_1$.  Since $C^*_1$ is non-degenerate, so is
$C_2$.  See Figure~\ref{figure:figure2} for an example of $C^*_1$ and
$C_2$.  By $w(C_1)=w(C^*_2)\leq w(C_2)=w(C^*_1)$, $C_1$ is a shortest
non-degenerate cycle of $G_1$.
\end{proof}

The rest of the paper proves Lemma~\ref{lemma:lemma2.1}.

\section{Divide-and-conquer via balanced separating cycles}
\label{section:section3}
	
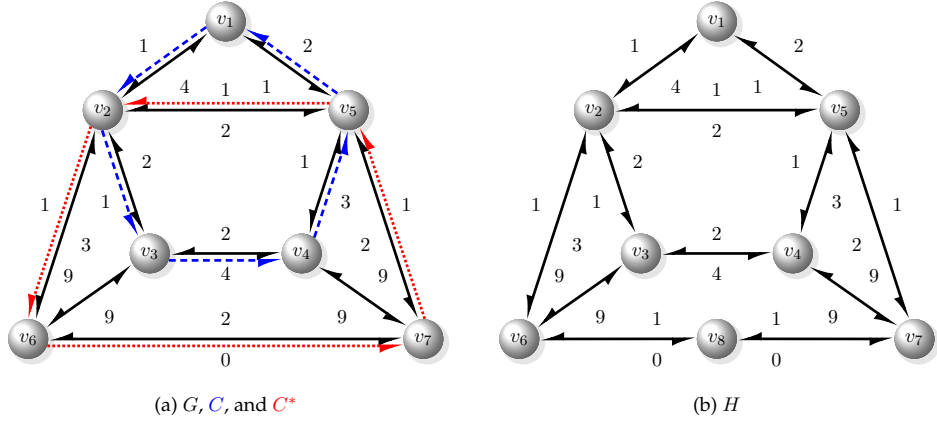
\begin{figure}[t]
\begin{center}
\scalebox{0.85}{
\begin{tikzpicture}[general]
\node [3d] (v1) at (        90:2) {$v_1$};
\node [3d] (v2) at (     90+72:2) {$v_2$};
\node [3d] (v3) at (   90+2*72:2) {$v_3$};
\node [3d] (v4) at (   90+3*72:2) {$v_4$};
\node [3d] (v5) at (   90+4*72:2) {$v_5$};
\node [3d] (v6) at (90+2*72-10:4.25) {$v_6$};
\node [3d] (v7) at (90+3*72+10:4.25) {$v_7$};
\foreach \x/\y/\wb/\wa in 
{
	v1/v2/4/1, v2/v3/2/1, v3/v4/2/4, v4/v5/1/3,  v5/v1/1/2, 
	v5/v2/2/1, v2/v6/3/1, v6/v7/2/0, v7/v5/2/1,
	v3/v6/9/9, v4/v7/9/9
}
\draw [bundled] 
(\x) to 
node[auto,     ] {$\wb$} 
node[auto, swap] {$\wa$} 
(\y);

\foreach \x/\y/\sx/\sy in {v1/v2/270-54-20/36+20, v2/v3/-72-20/180-72+20, v3/v4/-20/180+20, v4/v5/72-20/180+72+20, v5/v1/180-36-20/-36+20}
\draw [bluearrow, densely dashed] (\x.\sx) -- (\y.\sy);

\foreach \x/\y/\sx/\sy in {v6/v7/-20/180+20, v5/v2/180-20/20, v7/v5/90+16-20/270+16+20, v2/v6/270-16-20/90-16+20}
\draw [redarrowright, densely dotted] (\x.\sx) -- (\y.\sy);

\node at (270:4) {(a) $G$, {\color{blue}$C$}, and {\color{red}$C^*$}};
\end{tikzpicture}
\qquad
\begin{tikzpicture}[general]
\node [3d] (v1) at (        90:2) {$v_1$};
\node [3d] (v2) at (     90+72:2) {$v_2$};
\node [3d] (v3) at (   90+2*72:2) {$v_3$};
\node [3d] (v4) at (   90+3*72:2) {$v_4$};
\node [3d] (v5) at (   90+4*72:2) {$v_5$};
\node [3d] (v6) at (90+2*72-10:4.25) {$v_6$};
\node [3d] (v7) at (90+3*72+10:4.25) {$v_7$};
\node [3d] (v8) at (270:2.95) {$v_8$};
\foreach \x/\y/\wb/\wa in 
{
	v1/v2/4/1, v2/v3/2/1, v3/v4/2/4, v4/v5/1/3, v5/v1/1/2, 
	v5/v2/2/1, v2/v6/3/1, 
	v7/v5/2/1,
	v3/v6/9/9, v4/v7/9/9
}
\draw [bundled] 
(\x) to 
node[auto      ] {$\wb$} 
node[auto, swap] {$\wa$} 
(\y);

\draw[bundled] (v6) to node [auto, near end] {$1$} node [auto, swap, near end] 
{$0$} (v8);

\draw[bundled] (v7) to node [auto, near end] {$0$} node [auto, swap, near end] 
{$1$} (v8);

\node at (270:4) {(b) $H$};
\end{tikzpicture}
}
\end{center}

\caption{(a) The bidirected plane graph $G$ and its edge weights are
  in black.  The blue dashed cycle $C=v_1v_2v_3v_4v_5v_1$ is a segmented
  cycle of $G$ whose segments are shortest paths $P_1=v_1v_2v_3$ and
  $P_2=v_1v_5v_4$ of $G$.  The shortest non-degenerate cycle of
  $\textit{int}_G(C)$ is $v_1v_2v_5v_1$ with weight $5$.  The shortest
  non-degenerate cycle of $\textit{ext}_G(C)$ is
  $v_2v_6v_7v_5v_4v_3v_2$ with weight $7$.  The red dotted cycle
  $C^*=v_2v_6v_7v_5v_2$ with weight $4$ is the unique $C$-short
  non-degenerate cycle of $G$ and the unique shortest non-degenerate
  cycle of $G$. (b) A bidirected plane graph $H$.}
\label{figure:figure3}
\end{figure}

Let $C$ be a simple non-degenerate cycle of a bidirected plane graph
$G$ with nonnegative edge weights.  Let $\textit{int}_G(C)$
(respectively, $\textit{ext}_G(C)$) denote the subgraph of $G$ 
consisting of the nodes and edges on the boundary of the faces of $G$ 
inside (respectively, outside) of $C$.  
A
non-degenerate cycle $C_3$ of $G$ is {\em $C$-short} if one of $C_1$,
$C_2$, and $C_3$ is a shortest non-degenerate cycle of $G$, where
$C_1$ (respectively, $C_2$) is a shortest non-degenerate cycle of
$\textit{int}_G(C)$ (respectively, $\textit{ext}_G(C)$).  We say that
$C$ is {\em segmented} if it consists of the following three paths in
order: (1) a shortest path $P_1$, (2) an edge, and (3) the reverse of
a shortest path $P_2$, where one of $P_1$ and $P_2$ is allowed to be a
node.  Let shortest paths $P_1$ and $P_2$ be the {\em segments} of
$C$.  See Figure~\ref{figure:figure3}(a) for an example.  This section
proves Lemma~\ref{lemma:lemma2.1} using
Lemmas~\ref{lemma:lemma3.1},~\ref{lemma:lemma3.2},
and~\ref{lemma:lemma3.3}.  Section~\ref{section:section4} proves
Lemma~\ref{lemma:lemma3.1}.

\begin{lemma}
\label{lemma:lemma3.1}
Let $G$ be an $n$-node $O(1)$-degree simple bidirected plane graph
with nonnegative edge weights.  Given a segmented simple
non-degenerate cycle $C$ of $G$ together with its segments, it takes
$O(n\log \log n)$ time to compute a $C$-short non-degenerate cycle of
$G$.
\end{lemma}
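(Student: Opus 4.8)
The plan is to prove Lemma~\ref{lemma:lemma3.1} by two linear-time reductions that funnel the task into the shortest-noncrossing-paths problem that \S\ref{section:section5} solves in $O(n\log\log n)$ time; since only a constant number of such instances arise, the bound follows. First I would pin down what ``$C$-short'' really requires. Write $C=P_1\cdot e\cdot\overline{P_2}$, where $P_1$ is a shortest $ab$-path, $P_2$ is a shortest $ac$-path, $e=bc$, and every node of $C$ lies on $P_1$ or $P_2$. It suffices to output a non-degenerate cycle $C_3$ of $G$ whose weight is at most that of every non-degenerate cycle of $G$ that crosses $C$ (and to return $C_3=C$, which is non-degenerate by hypothesis, if no such crossing cycle exists): a shortest non-degenerate cycle of $G$ either crosses $C$, and is then no lighter than $C_3$, or --- not crossing the simple closed curve $C$ --- lies entirely in $\textit{int}_G(C)$ or in $\textit{ext}_G(C)$ and is therefore matched in weight by $C_1$ or by $C_2$; in every case one of $C_1,C_2,C_3$ is a shortest non-degenerate cycle of $G$, so $C_3$ is $C$-short.

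Next I would reduce ``crosses $C$'' to ``crosses a segment exactly once.'' Let $D$ be a non-degenerate cycle of $G$ that crosses $C$. Because $P_1$ is a shortest path, if $D$ meets $P_1$ transversally at two or more nodes one can take the first and last such nodes along $P_1$ and splice out the subwalk of $D$ between them, substituting a subpath of $P_1$: this does not increase $w(D)$ and strictly reduces the number of crossings with $P_1$. Iterating over $P_1$, then over $P_2$, then disposing of the single edge $e$ directly, yields a non-degenerate cycle $D'$ with $w(D')\le w(D)$ that either no longer crosses $C$ --- hence lies in $\textit{int}_G(C)$ or $\textit{ext}_G(C)$ and is already accounted for by $C_1$ or $C_2$ --- or crosses exactly one of $P_1,P_2$ exactly once. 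Consequently it is enough to compute, for each segment $P\in\{P_1,P_2\}$, a \emph{$(C,P)$-short} cycle, i.e.\ a non-degenerate cycle of $G$ no heavier than any non-degenerate cycle of $G$ that crosses $P$ exactly once, and to let $C_3$ be the lighter of the two (or $C$ if neither exists). The delicate point here, which I expect to be the main obstacle, is that a naive splice along a shortest path can collapse a non-degenerate cycle into a degenerate one; the rerouting must therefore choose at each splice which side of the shortest path to follow so as not to create a $uv$--$vu$ pair --- a case analysis kept finite by the $O(1)$ degree bound, and exactly the kind of degeneracy bookkeeping that deserves to be isolated as a separate lemma in \S\ref{section:section4}.

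Finally I would compute a $(C,P)$-short cycle for a fixed segment $P=p_0p_1\cdots p_k$ by Reif's incision technique~\cite{reif}: cut $G$ open along $P$ so that each internal node $p_i$ is duplicated into a ``left'' copy $p_i'$ and a ``right'' copy $p_i''$, creating a new face whose boundary read once is $p_0,p_1',\dots,p_{k-1}',p_k,p_{k-1}'',\dots,p_1''$. A non-degenerate cycle of $G$ crossing $P$ exactly once, at $p_i$, corresponds with the same weight to a path of the incised graph between the two boundary copies of $p_i$, and conversely; so a $(C,P)$-short cycle is recovered, with $O(n)$ extra time for re-stitching and for discarding the few degenerate candidates, from a lightest member of a family of shortest $p_i'p_i''$-paths, one per $i$. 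The terminal pairs $\{p_i',p_i''\}$ are nested along the boundary of that single new face --- that is, well ordered on a common face of a directed plane graph --- and for nested terminal pairs an individually shortest path for each $i$ can be taken from a pairwise-noncrossing system; hence this is precisely an instance of the shortest-noncrossing-paths problem solved in \S\ref{section:section5} in $O(n\log\log n)$ time (extending the undirected algorithm of Italiano et al.~\cite{ital}). Since the incised graph still has $O(n)$ nodes and $O(1)$ degree, there are only two segments, and all other steps are linear, the whole computation runs in $O(n\log\log n)$ time.
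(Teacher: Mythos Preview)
Your outline matches the paper's proof: reduce to cycles that interact with a single segment, incise along that segment \`a la Reif, and invoke the noncrossing-shortest-paths algorithm of \S\ref{section:section5}. Two points where the paper's execution is tighter than yours are worth noting.

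First, instead of rerouting an arbitrary crossing cycle along the shortest-path segments --- which, as you correctly flag, risks creating degeneracies --- the paper picks among all shortest non-degenerate cycles one, $C^*$, that \emph{minimizes the number of edges outside $P_1\cup P_2$}, and then argues by contradiction that if $C^*$ meets $P_i$ it is already a $P_i$-cycle (one aligned subpath plus one deviation). This extremal choice eliminates the case analysis you anticipate, and in fact shows that $C^*$ is a $(C,P_1)$-cycle, so a \emph{single} call to Lemma~\ref{lemma:lemma4.2} (for $P_1$) suffices; you do not need to treat both segments.

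Second, your phrase ``crosses $P$ exactly once'' is not quite the right invariant (a closed curve cannot literally cross an internal arc once); the correct notion is the paper's $(C,P)$-cycle --- an aligned subpath of $P$ followed by a deviation whose first and last edges lie on opposite sides of $C$. Your incision sketch also omits two details that Lemma~\ref{lemma:lemma4.2} handles explicitly: cycles through the \emph{endpoints} of $P$ are treated separately (there are $O(1)$ incident edges, so this is $O(n)$ via Lemma~\ref{lemma:lemma3.2}), and the duplicate copy $P'$ of $P$ in the incised graph is given weight $\infty$, so that a simple shortest $u_iv_i$-path cannot use both copies of an edge of $P$ --- this is exactly what guarantees the recovered cycle is non-degenerate, rather than relying on ad hoc ``discarding.''
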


\begin{lemma}[Henzinger, Klein, Rao, and Subramanian~\cite{henz}]
\label{lemma:lemma3.2}
It takes $O(n)$ time to compute a shortest-path tree rooted at any
given node of an $n$-node connected simple directed planar graph with
nonnegative edge weights.
\end{lemma}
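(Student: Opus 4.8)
This statement is the linear-time planar single-source shortest-path result of Henzinger, Klein, Rao, and Subramanian~\cite{henz}, so a ``proof'' here is really a pointer to their paper; nevertheless, the reconstruction I would give runs as follows. The plan is to execute Dijkstra's algorithm, but to replace the generic $O(\log n)$-per-operation heap by a data structure built on a recursive planar decomposition, so that the total bookkeeping cost telescopes to $O(n)$. First I would compute a planar embedding of the underlying graph in linear time, and then a hierarchical recursive division obtained by repeatedly invoking the planar separator theorem: $G$ is the root region; each level-$i$ region is split into $O(1)$ subregions at level $i{+}1$; region sizes shrink rapidly down the hierarchy; a level-$i$ region of size $r_i$ has only $O(\sqrt{r_i})$ boundary vertices; and each vertex lies in $O(1)$ regions per level. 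This yields a rooted tree of regions, and with every region $R$ I store the minimum tentative distance over the currently reached-but-unsettled (``active'') boundary vertices of $R$, and, when $R$ is a leaf, over all active vertices of $R$.

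Next I would run Dijkstra against this structure. To extract the global minimum tentative distance, descend from the root, at each node choosing the child region with the smallest stored value, ending at an active vertex $v$ in a leaf region; then settle $v$, relax the edges leaving $v$ (all of which are confined to the $O(1)$ regions containing $v$ at each level), update the tentative distances of the affected neighbors, and walk back up the region tree from the touched leaves refreshing the stored minima. The point that separates $O(n)$ from $O(n\log n)$ is that one must \emph{not} re-descend from the root on every settle: one keeps a finger into the region tree and backs up only as far as the nearest ancestor whose stored minimum actually changed, charging the total length of all these up-and-down walks to $O\!\left(\sum_i (\text{number of level-}i \text{ boundary vertices})\right)$, which the fast shrinkage of region sizes makes $O(n)$; inside a leaf region, which has only a bounded number of vertices, brute-force scanning is affordable.

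The main obstacle is exactly this amortized analysis: showing that, summed over all $n$ settle operations and all edge relaxations, the cost of maintaining and querying the region-tree minima is $O(n)$. It hinges on getting a recursive division with precisely the right parameters — few boundary vertices per region, rapid decrease of region sizes, $O(1)$ region membership per vertex, $O(1)$ fan-out — and then on a charging scheme that pays for each step of a tree walk by a distinct boundary vertex or a distinct relaxation; once that is in place, correctness is the standard Dijkstra argument and the remaining ingredients (the $O(n)$-time embedding and the $O(n)$-time separator/division routines, cf.\ the nonnegativity of weights which is what licenses Dijkstra at all) are routine.
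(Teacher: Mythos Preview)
The paper does not prove Lemma~\ref{lemma:lemma3.2}; it is stated as a black-box citation to Henzinger, Klein, Rao, and Subramanian~\cite{henz} and used without further argument. You correctly recognized this in your opening sentence, so in that sense your proposal matches the paper exactly.

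Your additional reconstruction of the HKRS argument is a reasonable high-level sketch of the right ideas---recursive planar division, a Dijkstra variant whose bookkeeping is organized around the region hierarchy, and an amortized charging argument that sums boundary-vertex counts across levels to $O(n)$. That said, the actual HKRS mechanism is not quite ``keep a finger in the region tree and back up only when the stored minimum changes.'' Their algorithm schedules work by repeatedly invoking a bounded number of Dijkstra steps inside a region, then recursing into subregions or popping out, with the number of steps at each level calibrated to the region parameters; the linear bound comes from a careful choice of the recursive-division parameters (the $r_i$ sequence) together with a counting argument on how many times each boundary vertex can be reactivated. Your sketch conveys the flavor but would not by itself yield a proof without importing those specifics from~\cite{henz}. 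Since the paper only needs the statement as a tool, this level of detail is already more than the paper provides.
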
	

\begin{lemma}[Lipton and Tarjan~\cite{LiptonT79}, Goodrich~\cite{Goodrich95}, {Klein, Mozes, and Sommer~\cite[Lemma~1]{r-divi}}]
\label{lemma:lemma3.3}
Let $G_\vartriangle$ be an $n$-node simple undirected plane triangulation with
nonnegative face weights summing to $1$ such that the weight of each
face of $G_\vartriangle$ is at most $\frac{1}{4}$.  Given any spanning tree
$T$ of $G_\vartriangle$, it takes $O(n)$ time to obtain an edge $e$ of
$G_\vartriangle\setminus T$ such that the total weight of the faces of
$G_\vartriangle$ inside (respectively, outside) of the simple cycle in
$T\cup\{e\}$ is no more than $\frac{3}{4}$.
\end{lemma}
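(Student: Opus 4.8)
The plan is to reduce Lemma~\ref{lemma:lemma3.3} to the classical weighted-tree edge-separator problem by passing to the planar dual. First I would build the dual graph $G_\vartriangle^*$ in $O(n)$ time. Since $T$ is a spanning tree of the connected plane graph $G_\vartriangle$, the set of edges of $G_\vartriangle^*$ dual to the $2n-5$ edges of $G_\vartriangle\setminus T$ forms a spanning tree $T^*$ of $G_\vartriangle^*$ (the standard interdigitating-trees fact), also computable in $O(n)$ time. Because every face of the triangulation $G_\vartriangle$ --- including the outer face --- is bounded by exactly three edges, $G_\vartriangle^*$ is $3$-regular, so $T^*$ has maximum degree $3$; note that, unlike the length-bounded form of the separator lemma, we need no breadth-first assumption on $T$ here, since only the balance of the face weights matters.

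Next I would carry out the weighted-tree step. Put on each node of $T^*$ the weight of the corresponding face of $G_\vartriangle$, so the node weights of $T^*$ are nonnegative, sum to $1$, and are each at most $\tfrac{1}{4}$. Root $T^*$ at an arbitrary node and, by one post-order traversal in $O(n)$ time, compute for each node $v$ the total weight $s(v)$ of the subtree rooted at $v$. Let $v$ be a node with $s(v)\ge\tfrac{1}{4}$ such that $s(c)<\tfrac{1}{4}$ for every child $c$ of $v$ --- for instance a deepest node with $s\ge\tfrac{1}{4}$ --- which can be located in $O(n)$ time once all the $s(\cdot)$ are known. The root does not qualify: if all of its (at most three) children $c$ had $s(c)<\tfrac{1}{4}$, then $1=s(\mathrm{root})<\tfrac{1}{4}+3\cdot\tfrac{1}{4}=1$, a contradiction. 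Hence $v$ has a parent $p(v)$ and at most two children, so $s(v)<\tfrac{1}{4}+2\cdot\tfrac{1}{4}=\tfrac{3}{4}$. Removing the tree edge $\{v,p(v)\}$ from $T^*$ therefore splits the node weight of $T^*$ into the two parts $s(v)\in[\tfrac{1}{4},\tfrac{3}{4})$ and $1-s(v)\in(\tfrac{1}{4},\tfrac{3}{4}]$, each at most $\tfrac{3}{4}$.

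Finally I would output the edge $e$ of $G_\vartriangle\setminus T$ whose dual is $\{v,p(v)\}$. For correctness, deleting $\{v,p(v)\}$ from $T^*$ partitions the node set of $G_\vartriangle^*$ --- that is, the set of faces of $G_\vartriangle$ --- into the subtree rooted at $v$ and its complement, and by planar duality these two sets are exactly the faces lying inside and outside the unique simple cycle of $T\cup\{e\}$. Thus the total face weight on either side of that cycle is $s(v)$ or $1-s(v)$, each at most $\tfrac{3}{4}$, as required, and every step above runs in $O(n)$ time. The main obstacle is this last correctness claim: one must verify carefully, via a Jordan-curve argument for the interdigitating spanning trees $T$ and $T^*$, that the two components of $T^*-\{v,p(v)\}$ correspond precisely to the faces on the two sides of the fundamental cycle of $e$. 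A secondary point worth isolating is the maximum-degree-$3$ property of $T^*$, which hinges on the outer face of $G_\vartriangle$ also being a triangle and is exactly what lets the elementary two-children argument replace the level-structure construction of Lipton and Tarjan.
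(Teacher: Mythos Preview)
Your proof is correct and is essentially the standard interdigitating-trees argument for the fundamental-cycle separator lemma. Note, however, that the paper does not prove Lemma~\ref{lemma:lemma3.3} at all: it is stated as a cited result (Lipton--Tarjan, Goodrich, Klein--Mozes--Sommer) and used as a black box in the proof of Lemma~\ref{lemma:lemma2.1}. Your write-up is a faithful reconstruction of the argument one finds in those sources: pass to the dual, use that the cotree $T^*$ has maximum degree~$3$ because every face of a plane triangulation is a triangle, apply the elementary weighted edge-separator argument for bounded-degree trees, and translate back via the correspondence between components of $T^*$ minus an edge and the two sides of the corresponding fundamental cycle. The two points you flag as obstacles --- the Jordan-curve correspondence and the degree bound on $T^*$ --- are indeed the only places requiring care, and both are standard; there is nothing to compare against in the present paper.
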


\begin{figure}[t]
\begin{center}
\scalebox{0.85}{
\begin{tikzpicture}[general]
\node [3d] (v1) at (        90:2) {$v_1$};
\node [3d] (v2) at (     90+72:2) {$v_2$};
\node [3d] (v3) at (   90+2*72:2) {$v_3$};
\node [3d] (v4) at (   90+3*72:2) {$v_4$};
\node [3d] (v5) at (   90+4*72:2) {$v_5$};
\node [3d] (v6) at (90+2*72-10:4.25) {$v_6$};
\node [3d] (v7) at (90+3*72+10:4.25) {$v_7$};
\foreach \x/\y/\wb/\wa in 
{
	v1/v2/4/1, v2/v3/2/1, v3/v4/2/4, v4/v5/1/3, v5/v1/1/2, 
	v5/v2/2/1, v2/v6/3/1, v6/v7/2/0, v7/v5/2/1,
	v3/v6/9/9, v4/v7/9/9
}
\draw [bundled] 
(\x) to 
node[auto,     ] {$\wb$} 
node[auto, swap] {$\wa$} 
(\y);

\foreach \x/\y/\sx/\sy in {v1/v2/270-54-20/36+20, v2/v3/-72-20/180-72+20, 
v5/v4/180+72-20/72+20, v1/v5/-36-20/180-36+20, v2/v6/270-16-20/90-16+20, v6/v7/-20/180+20}
\draw [bluearrow, densely dashed] (\x.\sx) -- (\y.\sy);

\node at (270:4) {(a) $G$ and {\color{blue}$T$}};
\end{tikzpicture}
\qquad
\begin{tikzpicture}[general]
\node [3d] (v1) at (        90:2) {$v_1$};
\node [3d] (v2) at (     90+72:2) {$v_2$};
\node [3d] (v3) at (   90+2*72:2) {$v_3$};
\node [3d] (v4) at (   90+3*72:2) {$v_4$};
\node [3d] (v5) at (   90+4*72:2) {$v_5$};
\node [3d] (v6) at (90+2*72-10:4.25) {$v_6$};
\node [3d] (v7) at (90+3*72+10:4.25) {$v_7$};
\foreach \x/\y/\wb/\wa in 
{
	v3/v4/2/4, 
	v5/v2/2/1, 
	v7/v5/2/1,
	v3/v6/9/9, 
	v4/v7/9/9
}
\draw [-,very thick] (\x) -- (\y);

\foreach \x/\y/\sx/\sy in {v1/v2/270-54-20/36+20, v2/v3/-72-20/180-72+20, 
v5/v4/180+72-20/72+20, v1/v5/-36-20/180-36+20, v2/v6/270-16-20/90-16+20, v6/v7/-20/180+20}
\draw [-,very thick, color=blue, densely dashed] (\x) -- (\y);

\node at (270:4) {(a) $G_\vartriangle$ and {\color{blue}$T_0$}};

\draw [-,very thick, color=red, densely dotted] (v2) -- (v4) .. controls +(-1,-.7) .. (v6);
\draw [-,very thick, color=red, densely dotted] (v1) .. controls +(-3,-0.5) and +(0,4) .. (v6);
\draw [-,very thick, color=red, densely dotted] (v1) .. controls +( 3,-0.5) and +(0,4) .. (v7);

\foreach \x/\y/\w in {-.725/-.875, 0/1.125, 2/-1.25, 1/-2.4, -2/-1.5, 2.75/1.75}
\node at (\x,\y) {$\frac{1}{6}$};

\foreach \x/\y/\w in {.5/-.25, -1/-2.2, 2.5/.25, -2.5/.25}
\node at (\x,\y) {$0$};

\end{tikzpicture}
}
\end{center}

\caption{(a) The bidirected plane graph $G$ having $6$ faces and its
  edge weights are in black.  A shortest-path tree $T$ rooted at $v_1$
  is in blue dashed lines. (b) The plane triangulation $G_\vartriangle$ consists of all
  edges. The numbers are weights of the faces of $G_\vartriangle$. The
  undirected version $G_0$ of $G$ consists of the black solid edges and the blue dashed
  edges. The undirected version $T_0$ of $T$ is in blue dashed lines. The edges in
  $G_\vartriangle\setminus G_0$ are in red dotted lines.}
\label{figure:figure4}
\end{figure}
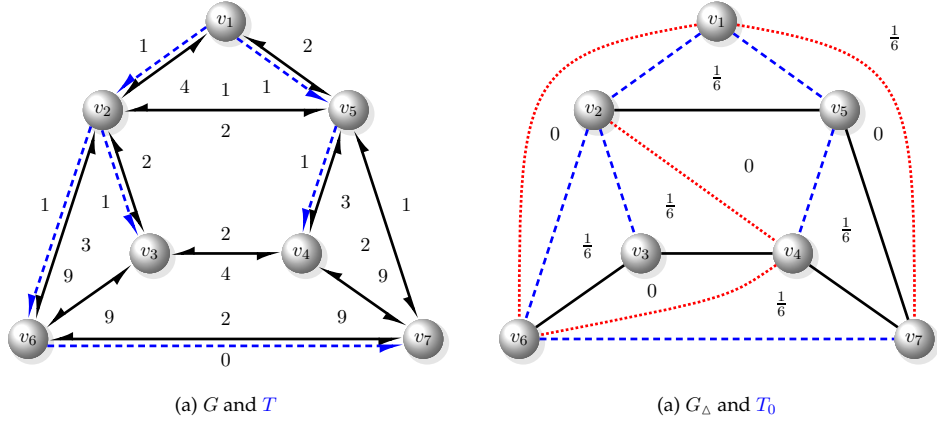

\begin{proof}[Proof of Lemma~\ref{lemma:lemma2.1}]
We give a divide-and-conquer recursive algorithm on the input graph
$H$, which can be assumed to be connected without loss of generality.
For each degree-$2$ node $y$ whose neighbors $x$ and $z$ are
non-adjacent, we replace $y$ and its incident edges by edges $xz$ and
$zx$ with weights $w(xy)+w(yz)$ and $w(zy)+w(yx)$, respectively.  The
resulting graph $G$ obtainable in $O(n)$ time from $H$ remains an
$O(1)$-degree simple connected bidirected plane graph.  See
Figure~\ref{figure:figure3} for an example of $H$ and $G$.  Let $\ell$
be the number of faces in $G$.  Since each maximal simple path on the
degree-$2$ nodes of $G$ has $O(1)$ edges, $G$ has $O(\ell)$ nodes.  A
shortest non-degenerate cycle of $H$ can be obtained in $O(n)$ time
from a shortest non-degenerate cycle of $G$, which can be found in
$O(1)$ time for the case with $\ell\leq 4$.

To obtain a shortest non-degenerate cycle of $G$ for the case with
$\ell\geq 5$, let $T$ be an $O(n)$-time obtainable shortest-path tree
of $G$ rooted at an arbitrary node as ensured by
Lemma~\ref{lemma:lemma3.2}.  For each face $f$ of the simple
undirected unweighted version $G_0$ of $G$ having size $k\geq 3$, (1)
let $f$ be triangulated into $k-2$ faces via adding $k-3$ edges
without introducing multiple edges, (2) let an arbitrary one of the
$k-2$ faces be assigned weight $\frac{1}{\ell}$, and (3) let the
remaining $k-3$ faces be assigned weights $0$. Let $G_\vartriangle$ be the
resulting simple plane triangulation.  The undirected version $T_0$ of
$T$ is a spanning tree of $G_0$ and $G_\vartriangle$.  See
Figure~\ref{figure:figure4} for an example.
Lemma~\ref{lemma:lemma3.3} ensures an edge $xy$ of $G_\vartriangle\setminus
T_0$ obtainable in $O(n)$ time such that the face weights of $G_\vartriangle$
inside (respectively, outside) of the simple cycle of $T_0\cup \{xy\}$
sum to at most $\frac{3}{4}$.  For instance, such an edge $xy$ in the
example in Figure~\ref{figure:figure4} is $v_3v_4$.  If $x$ and $y$
are adjacent in $G$, then let $E=\varnothing$; otherwise, let $E$
consist of edges $xy$ and $yx$ with weights $\infty$.  We union $G$
and $E$ to obtain a simple bidirected plane graph $G^*$.  Let $s$ be
the least common ancestor of $x$ and $y$ in $T$.  Let $C$ be the
segmented simple non-degenerate cycle of $G^*$ consisting of (1) the
$sx$-path of $T$, (2) edge $xy$, and (3) the reverse of the $sy$-path
of $T$.  By Lemma~\ref{lemma:lemma3.1}, it takes $O(n\log\log n)$ time
to compute a $C$-short cycle $C_3$ of $G^*$.  Let
$H_1=\textit{int}_{G^*}(C)\setminus E$ and
$H_2=\textit{ext}_{G^*}(C)\setminus E$.  No matter $E=\varnothing$ or
not, $H_1$ and $H_2$ are subgraphs of $G$ .  We recursively compute a
shortest non-degenerate cycle $C_1$ (respectively, $C_2$) in $H_1$
(respectively, $H_2$), which is also a shortest non-degenerate cycle
of $\textit{int}_{G^*}(C)$ (respectively, $\textit{ext}_{G^*}(C)$).
By definition of $C_3$, a cycle $C^*$ in $\{C_1,C_2,C_3\}$ with
minimum weight is a shortest non-degenerate cycle of $G^*$.  If $C^*$
passes an edge in $E\ne\varnothing$, then the weight of each
non-degenerate cycle of $G^*$ and $G$ is $\infty$.  Otherwise, we
return $C^*$ as a shortest non-degenerate cycle of $G$.  The algorithm
runs in $O(n\log\log n)$ time without accounting for the time for its
subsequent recursive calls.  By $\ell\geq 5$, the number $\ell_1$
(respectively, $\ell_2$) of faces in $H_1$ (respectively, $H_2$) is at
most $\frac{3}{4}\ell+1\leq \frac{19}{20}\ell$, implying that there
are $O(\log n)$ levels of recursion.  By $\ell_1+\ell_2\leq \ell+2$,
the overall number of faces in each recursion level is $O(n)$,
implying that the overall number of nodes in each recursion level is
$O(n)$.  The algorithm runs in $O(n\log n\log\log n)$ time.
\end{proof}

\section{Non-degenerate cycles that cross the separating cycle}
\label{section:section4}

This section proves Lemma~\ref{lemma:lemma3.1} by
Lemma~\ref{lemma:lemma4.2}, which is proved by
Lemmas~\ref{lemma:lemma3.2} and \ref{lemma:lemma4.1}.
Section~\ref{section:section5} proves Lemma~\ref{lemma:lemma4.1}.  If
graph $G$ has $uv$-paths, then let $d_G(u,v)$ denote the weight of a
shortest $uv$-path of $G$.  If $G$ has no $uv$-paths, then let
$d_G(u,v)=\infty$.

\begin{lemma}
\label{lemma:lemma4.1}
Let $G$ be an $n$-node simple connected bidirected plane graph with
nonnegative edge weights.  Let $u_1,\ldots,u_\ell,v_\ell,\ldots,v_1$
be $O(n)$ nodes on the boundary of the external face of $G$ in order.
It takes overall $O(n\log\log n)$ time to compute $d_G(u_i,v_i)$ for
each $i=1,\ldots,\ell$.
\end{lemma}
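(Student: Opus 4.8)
The plan is to reduce the problem to a single invocation of a multiple-source shortest-path structure on a planar graph, and then to replace the $O(n\log n)$ multiple-source machinery of Klein~\cite{mssp} with the faster $O(n\log\log n)$ structure of Italiano, Nussbaum, Sankowski, and Wulff-Nilsen~\cite{ital}, adapted from the undirected to the directed (bidirected) setting. First I would observe that the terminal nodes $u_1,\dots,u_\ell,v_\ell,\dots,v_1$ appear in this cyclic order along the boundary of the external face, so the pairs $(u_i,v_i)$ are \emph{well nested}: for $i<j$, the pair $(u_j,v_j)$ lies ``between'' $u_i$ and $v_i$ along the boundary walk. This is exactly the nesting structure that makes the shortest $u_iv_i$-paths choosable to be pairwise noncrossing, which in turn is what lets a divide-and-conquer / persistent-data-structure approach share work across all $\ell$ queries and avoid an $\Omega(\ell n)$ blowup.

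The key steps, in order, would be: (1) Fix a shortest-path tree structure that can answer, for a source on the external boundary, distances to all nodes — this is the multiple-source shortest path (MSSP) data structure. Klein's algorithm builds, in $O(n\log n)$ total time, a representation (via a dynamic tree / link-cut structure over the dual) of the shortest-path tree rooted at $u$ as $u$ moves along the boundary face; each ``move'' to an adjacent boundary vertex costs amortized $O(\log n)$ edge pivots, and from the maintained structure one reads off $d_G(u,v)$ for the current root $u$ and any fixed $v$ in $O(\log n)$ time. Plugging Klein directly already gives $O(n\log n)$, hence (as the excerpt notes) the $O(n\log^2 n)$-time main results. (2) To get $O(n\log\log n)$, replace the $\Theta(\log n)$-per-operation link-cut tree at the heart of the MSSP structure with the faster data structures of Italiano et al.~\cite{ital}: they show how to support the relevant dynamic-tree / interval operations arising in the pivot sequence in $O(\log\log n)$ amortized time on planar MSSP, by exploiting the structure of the pivots (the changes between consecutive roots form a small, well-behaved set) together with fast priority queues on a bounded universe. (3) Argue that the total number of pivots over the whole sweep of $u$ around the external face is $O(n)$ — this is the amortization lemma underlying MSSP, unaffected by directedness provided edge weights are nonnegative — so the total time is $O(n\log\log n)$; and that answering the $\ell = O(n)$ queries $d_G(u_i,v_i)$ costs an additional $O(n\log\log n)$ (or $O(n)$) via point location / persistence in the maintained structure, since each $v_i$ is queried when $u$ is at $u_i$.

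The main obstacle I anticipate is Step~(2) together with the passage from undirected to bidirected plane graphs. The algorithm of Italiano et al.\ is stated for undirected plane graphs, and its speedup relies on a careful analysis of how the shortest-path tree changes — and, crucially, on the interaction between primal tree and dual cotree — when the source moves along a face; one must check that nothing in that analysis used symmetry of edge weights, only their nonnegativity and the planarity/duality of the embedding. A bidirected plane graph, where each undirected edge is realized as an antiparallel pair of bundled directed edges with possibly different weights, is precisely the right model for this: the cotree / interdigitating-tree duality still holds combinatorially, and a pivot replaces one tree edge by another exactly as in the undirected case, the only difference being that the relaxation uses the correctly oriented copy's weight. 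I would therefore spend most of the proof re-deriving the MSSP pivot-amortization bound and the Italiano et al.\ data-structural speedup in the directed/bidirected embedded setting, being explicit that (a) the sequence of roots $u$ traverses the boundary face once, so each boundary edge is crossed $O(1)$ times; (b) each pivot can be located and executed in $O(\log\log n)$ amortized time; and (c) the well-nested terminal pairs guarantee that all $\ell$ answers can be extracted in the course of this single monotone sweep. Details of steps (1) and (3), being essentially verbatim from \cite{mssp} and \cite{ital}, I would cite and only sketch.
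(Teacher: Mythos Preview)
Your proposal rests on a misconception about the source of the $O(n\log\log n)$ bound in Italiano et al.~\cite{ital}. They do \emph{not} speed up Klein's MSSP by replacing link-cut trees with an $O(\log\log n)$-per-pivot data structure; no such replacement is known, and your Step~(2) asserts one without justification. Their speedup comes from an entirely different architecture: build an $r$-division with $r=\log^{\Theta(1)} n$, compute the dense distance graph $K(D)$ on the $O(n/\sqrt{r})$ boundary nodes (using Klein's MSSP only \emph{inside each piece}, where $\log r = O(\log\log n)$), and run the Reif-style divide-and-conquer at the level of $K(D)$ using FR-Dijkstra~\cite{fr} with Monge structures so that each shortest-path computation on $K(D)$ costs $O((n/\sqrt{r})\log^2 n)$. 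The $\log\log n$ factor appears precisely because the expensive MSSP work is confined to pieces of polylogarithmic size, not because any single pivot got cheaper.

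The paper's proof follows exactly this two-level architecture and adds several directed-setting ingredients your sketch does not touch: a data structure $B(D)$ (Lemma~\ref{lemma:lemma5.6}) to split the boundary-node set $X(a,b)$ along a newly found path $P_i$ without traversing the shared portion $P_i\cap P_a$; a labeling subroutine $\textsc{Label}$ so that $w(P_a\cap P_i)$ is recoverable in $O(1)$ time; and the case analysis of $\textsc{Solve}(a,b)$ (Figure~\ref{figure:figure13}), which stops unfolding the underlying path of $\Pi$ as soon as it meets $P_a\cup P_b$ and hands the remaining subproblem to Lemma~\ref{lemma:lemma5.8}. These are what make the total edge-work over all recursion levels $O(n)$ rather than $O(n\log n)$. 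Your single-sweep MSSP plan, even granting directedness is not an obstacle, would still give only $O(n\log n)$.
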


Let $G$ be a simple bidirected plane graph.  A simple path $Q$ of $G$
{\em aligns} with subgraph $H$ of $G$ if $Q$ or the reverse of $Q$ is
a path of $H$.  A simple path $Q$ of $G$ passing at least one edge
{\em deviates} from subgraph $H$ of $G$ if the edges and the internal
nodes of $Q$ are not in $H$.  For any simple path $P$ of $G$, a
non-degenerate cycle of $G$ is a {\em $P$-cycle} if it consists of a
path aligning with $P$ and a path deviating from $P$.  For any simple
non-degenerate cycle of $G$ and any path $P$ of $G$ aligning with $C$,
a $P$-cycle is a {\em $(C,P)$-cycle} if the first edge of its path
deviating from $P$ is in $\textit{int}_G(C)$ if and only if the last
edge of its path deviating from $P$ is in $\textit{ext}_G(C)$.  For
instance, the $C^*$ in Figure~\ref{figure:figure3}(a) is a $P_1$-cycle
of $G$ whose path aligning with $P_1$ is node $v_2$. The first edge
$v_2v_6$ (respectively, last edge $v_5v_2$) of its path deviating from
$P_1$ is in $\textit{ext}_G(C)$ (respectively, $\textit{int}_G(C)$),
so $C^*$ is a $(C,P_1)$-cycle of $G$.  $C^*$ is also a
$(C,P_2)$-cycle.
A non-degenerate cycle of $G$ is {\em $(C,P)$-short} if its weight is
no more than that of any $(C,P)$-cycle of $G$.

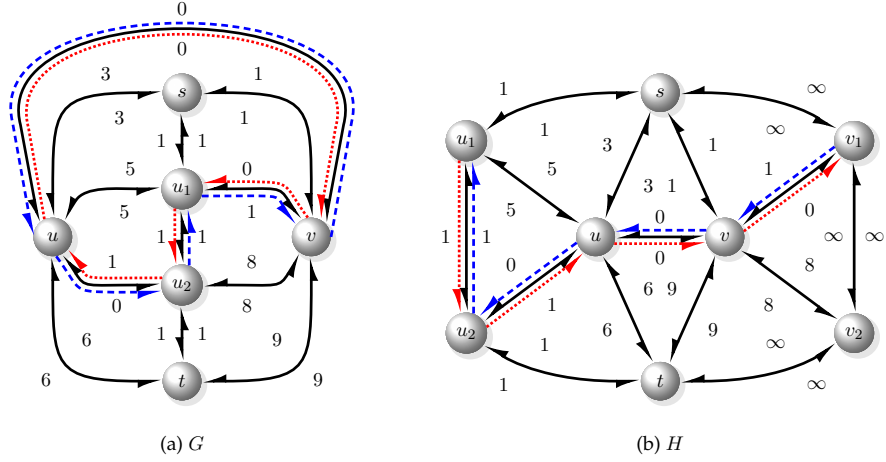
\begin{figure}[t]
\begin{center}
\scalebox{0.85}{
\begin{tikzpicture}[general]
\node[3d]  (s) at  (0,4.5) {$s$};
\node[3d] (u1) at  (0,3) {$u_1$};
\node[3d] (u2) at  (0,1.5) {$u_2$};
\node[3d]  (t) at  (0,0) {$t$};
\node[3d]  (u) at (-2,2.25) {$u$};
\node[3d]  (v) at  (2,2.25) {$v$};
\node (uu) at (-3,4.25) {};
\node (vv) at ( 3,4.25) {};

\foreach \x/\y/\wb/\wa in 
{
	s/u1/1/1, u1/u2/1/1, u2/t/1/1}
\draw [bundled] (\x) to node[auto] {$\wb$} node[auto, swap] {$\wa$} (\y);

\draw[bundled] (u) .. controls +(0,2) and +(-2,0) .. node [auto, very near end] {$3$} node [auto, swap, near end] {$3$} (s);

\draw[bundled] (u) .. controls +(.5,.8) and +(-1,0) .. node [auto, near end] {$5$} node [auto, swap] {$5$} (u1);

\draw[bundled] (u) .. controls +(.5,-.8) and +(-1.5,0) .. node [auto] {$1$} node [auto, swap, near end] {$0$} (u2);

\draw[bundled] (u) .. controls +(0,-2) and +(-2,0) .. node [auto] {$6$} node [auto, swap] {$6$} (t);

\draw[bundled] (v) .. controls +(0,2) and +(2,0) .. node [auto, near end] {$1$} node [auto,swap, very near end] {$1$} (s);

\draw[bundled] (v) .. controls +(-.5,.8) and +(1.5,0) .. node [auto] {$1$} node [auto, swap, near end] {$0$} (u1);

\draw[bundled] (v) .. controls +(-.5,-.8) and +(1.5,0) .. node [auto, near end] {$8$} node [auto, swap] {$8$} (u2);

\draw[bundled] (v) .. controls +(0,-2) and +(2,0) .. node [auto] {$9$} node [auto, swap] {$9$} (t);

\draw[bundled] (u.90+45) -- (-2.5,4) .. controls (-3,6) and (3,6) .. node [auto] {$0$} node [auto,swap] {$0$} (2.5,4) -- (v.90-45);

\draw[redarrowright, densely dotted] (u1.270-20) -- (u2.90+20);
\draw[redarrowright, densely dotted] (u2.180-20) .. controls +(-1.25,0) and +(.3,-.5) .. (u.270+35+20);
\draw[redarrowright, densely dotted] (u.90+45-20) -- (-2.375,3.95) .. controls (-2.9,5.9) and (2.9,5.9) .. (2.375,3.95) -- (v.90-45+20);
\draw[redarrowright, densely dotted] (v.90+30-20) .. controls +(-.45,.6) and +(1.5,0) .. (u1.20);

\draw[bluearrow, densely dashed] (u2.90-20) -- (u1.270+20);
\draw[bluearrow, densely dashed] (u1.0-20) .. controls +(1.25,0) and +(-.3,.5) .. (v.90+35+20);
\draw[bluearrow, densely dashed] (v.0) -- (2.625,4.05) .. controls (3.1,6.1) and (-3.1,6.1) .. (-2.625,4.05) -- (u.180);
\draw[bluearrow, densely dashed] (u.270+30-20) .. controls +(.45,-.6) and +(-1.5,0) .. (u2.180+20);

\node at (0,-1) {(a) $G$};
\end{tikzpicture}  
\quad
\begin{tikzpicture}[general]
\node[3d]  (s) at   (0,4.5)    {$s$};
\node[3d] (u1) at    (-3,3.75)    {$u_1$};
\node[3d] (u2) at    (-3,.75)  {$u_2$};
\node[3d] (v1) at     (3,3.75)    {$v_1$};
\node[3d] (v2) at     (3,.75)  {$v_2$};
\node[3d]  (t) at     (0,0)    {$t$};
\node[3d]  (u) at    (-1,2.25) {$u$};
\node[3d]  (v) at     (1,2.25) {$v$};

\foreach \x/\y/\wb/\wa in 
{
	u1/u2/1/1, 
	u/u2/1/0, 
	v1/v2/\infty/\infty, 
	v/v2/8/8, 
	s/u/3/3, u1/u/5/5, t/u/6/6, s/v/1/1, t/v/9/9, v1/v/0/1
}
\draw [bundled] (\x) to node[auto] {$\wb$} node[auto, swap] {$\wa$} (\y);

\foreach \x/\y/\wa/\wb/\xa/\ya/\xb/\yb in
{
	s/u1/1/1/-2/0/.4/.25,
	s/v1/\infty/\infty/2/0/-.4/.25,
	t/u2/1/1/-2/0/.4/-.25,
	t/v2/\infty/\infty/2/0/-.4/-.25
}
\draw [bundled] (\x) .. controls  +(\xa,\ya) and +(\xb,\yb) .. node [auto] {$\wa$} node [auto, swap] {$\wb$}  (\y);

\draw [bundled] (u) to node[auto] {$0$} node [auto, swap] {$0$} (v);

\draw[redarrowright, densely dotted] (u1.270-20) -- (u2.90+20);
\draw[redarrowright, densely dotted] (u2.37-20) -- (u.180+37+20);
\draw[redarrowright, densely dotted] (u.-20) -- (v.180+20);
\draw[redarrowright, densely dotted] (v.37-20) -- (v1.180+37+20);

\draw[bluearrow, densely dashed] (v1.180+37-20) -- (v.37+20);
\draw[bluearrow, densely dashed] (v.180-20) -- (u.20);
\draw[bluearrow, densely dashed] (u.180+37-20) -- (u2.37+20);
\draw[bluearrow, densely dashed] (u2.90-20) -- (u1.270+20);

\node at (0,-1) {(b) $H$};
\end{tikzpicture}  
}
\end{center}
\caption{(a) With $P=su_1u_2t$ and $C=su_1u_2tvs$, the red dotted cycle $u_1u_2uvu_1$  and
the blue dashed cycle $u_1vuu_2u_1$  are $(C,P)$-cycles of $G$ with minimum weight $2$.
(b) $H$ is obtained from incising $G$ along $P$.}
\label{figure:figure5}		
\end{figure}

\begin{lemma}
\label{lemma:lemma4.2}
Let $G$ be an $n$-node $O(1)$-degree simple bidirected plane graph
with nonnegative edge weights. Let $C$ be a simple non-degenerate
cycle of $G$.  Given a path $P$ of $G$ aligning with $C$, it takes
$O(n\log\log n)$ time to compute a $(C,P)$-short cycle of $G$.
\end{lemma}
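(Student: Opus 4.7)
My plan is to reduce the problem to Lemma~\ref{lemma:lemma4.1} via the incision technique of Reif~\cite{reif}. Writing $P = p_0 p_1 \cdots p_k$, I would cut $G$ open along $P$ to form a plane graph $H$ in which each internal node $p_i$ of $P$ splits into a left copy $u_i$ on the $\textit{int}_G(C)$ side and a right copy $v_i$ on the $\textit{ext}_G(C)$ side, while $p_0$ and $p_k$ remain shared nodes. Each non-$P$ edge of $G$ is reattached to whichever copy lies on its original side of $P$, and each $P$-edge is duplicated into a left copy along $P^L = p_0\, u_1\, u_2 \cdots u_{k-1}\, p_k$ and a right copy along $P^R = p_0\, v_1\, v_2 \cdots v_{k-1}\, p_k$, both now on the boundary of the new external face. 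A counterclockwise traversal of this boundary visits the nodes in the order $p_0, u_1, \ldots, u_{k-1}, p_k, v_{k-1}, \ldots, v_1$, so the pairs $(u_i, v_i)$ already appear in the well-ordered arrangement demanded by Lemma~\ref{lemma:lemma4.1}.

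Next, I would set the weights on $P^L$ to equal the original $P$-edge weights and reset every edge of $P^R$ to weight $\infty$. The central correspondence is that any $(C,P)$-cycle whose deviating path starts at $p_j$ with first edge in $\textit{int}_G(C)$ (so the edge attaches to $u_j$ in $H$) and ends at $p_i$ with last edge in $\textit{ext}_G(C)$ (attaching to $v_i$ in $H$) translates into a path in $H$ from $u_j$ to $v_i$ of weight equal to the deviating part; combining it with an aligning traversal along $P^L$ from $u_i$ back to $u_j$ produces a $u_i$-to-$v_i$ path in $H$ whose weight is exactly that of the full $(C,P)$-cycle. Conversely, any finite-weight $u_i$-to-$v_i$ path of $H$ must avoid $P^R$ and therefore decomposes as a $P^L$ traversal followed by a single crossing through non-$P$ edges, rebuilding a genuine $(C,P)$-cycle of $G$ with matching weight. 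Hence $\min_i d_H(u_i,v_i)$ equals the minimum weight of a $(C,P)$-cycle in this orientation.

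Applying Lemma~\ref{lemma:lemma4.1} to $H$ yields all the values $d_H(u_i,v_i)$ in $O(n\log\log n)$ time. To cover the opposite orientation, where the deviating first edge is in $\textit{ext}_G(C)$ and the last in $\textit{int}_G(C)$, I would either invoke Lemma~\ref{lemma:lemma4.1} in the reverse direction to compute $d_H(v_i,u_i)$ or, equivalently, rerun the construction with the weight roles of $P^L$ and $P^R$ swapped, at another $O(n\log\log n)$ cost. Taking the minimum over both orientations identifies the weight of a $(C,P)$-short cycle, and the cycle itself can be reconstructed from the shortest-path structure maintained by the algorithm underlying Lemma~\ref{lemma:lemma4.1}. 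Since incising $G$ takes $O(n)$ time, the overall running time meets the $O(n\log\log n)$ bound.

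The main obstacle I anticipate is making the correspondence between $(C,P)$-cycles in $G$ and matched $u_i$-to-$v_i$ paths in $H$ rigorous, particularly given the subtle iff-condition on the first and last edges of the deviating path. I will need to verify that the $\infty$ weights on $P^R$ really force every finite-weight path in $H$ to cross the incision exactly once in the intended orientation, and that the optimum $u_i$-to-$v_i$ path yields a \emph{non-degenerate} cycle in $G$ rather than a trivial back-and-forth walk or a walk that only touches $p_0$ or $p_k$. Corner cases such as the deviating path using an edge of $C\setminus P$, the aligning part collapsing to a single node, or $P$ having only a few edges will require dedicated bookkeeping, but I expect them to be resolvable by appealing to nonnegativity of weights together with the bidirected structure of $G$.
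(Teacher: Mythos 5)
Your reduction is essentially the incision technique the paper itself uses: cut $G$ open along $P$, let one copy of $P$ keep its original weights and the other copy get weight $\infty$, place the matched copies $u_i,v_i$ on the external face in the order required by Lemma~\ref{lemma:lemma4.1}, and query both directions $d_H(u_i,v_i)$ and $d_H(v_i,u_i)$ (which side keeps the finite copy is immaterial). However, there is a genuine gap that you flag but do not close: $(C,P)$-cycles that pass an endpoint of $P$. Since $p_0$ and $p_k$ are not split, a minimum-weight $(C,P)$-cycle whose deviating path begins or ends at $p_0$ or $p_k$ corresponds in $H$ to a walk that reaches, say, $v_j$ and then has no finite-weight way to traverse its aligning part back to the shared endpoint (that side of the incision carries the $\infty$ copy), so it is captured by no $u_i$-to-$v_i$ or $v_i$-to-$u_i$ distance with $1\le i\le k-1$. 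Consequently $\min_i d_H(u_i,v_i)$, even combined with the reverse direction, can strictly exceed the minimum weight of a $(C,P)$-cycle, and the cycle you return need not be $(C,P)$-short. The paper closes this hole before incising: it computes in $O(n)$ time a shortest non-degenerate cycle $C_0$ of $G$ passing an endpoint of $P$, by running one linear-time shortest-path computation (Lemma~\ref{lemma:lemma3.2}) in $G\setminus\{uv,vu\}$ for each of the $O(1)$ edges $uv$ incident to an endpoint of $P$; this is precisely where the $O(1)$-degree hypothesis of the lemma is used, and the fact that your argument never invokes that hypothesis is a symptom of the omission. The answer is then the lightest of $C_0$ and the two incision-based cycles: if the optimal $(C,P)$-cycle passes an endpoint, $w(C_0)$ already undercuts it; otherwise the incision argument applies.

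Two smaller points. Your converse claim—that every finite-weight $u_i$-to-$v_i$ path of $H$ decomposes into a $P^L$ traversal followed by a single crossing, hence rebuilds a $(C,P)$-cycle—is false as stated (a finite-weight path may weave between $P^L$ and the interior several times), and the asserted equality with the minimum $(C,P)$-cycle weight is really only a one-sided inequality. Neither matters for the lemma: all you need is that a simple finite-weight $u_iv_i$- or $v_iu_i$-path of $H$ projects to a non-degenerate cycle of $G$ of the same weight (true, because the two antiparallel edges between any pair of nodes of $G$ lie on one side of $P$ and hence map to edges joining the same pair of nodes of $H$, so a simple path cannot use both), together with the inequality in the captured orientation; you should state it that way. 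Finally, Lemma~\ref{lemma:lemma4.1} returns only distances, so to output the cycle itself you should, as the paper does, take the minimizing indices and run one extra linear-time shortest-path computation (Lemma~\ref{lemma:lemma3.2}) in $H$ to extract the corresponding paths; this is an easy repair.
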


\begin{proof}
Let $C^*$ be a $(C,P)$-cycle of $G$ with minimum weight.  For
instance, the red and blue cycles in Figure~\ref{figure:figure5}(a) are two
$(C,P)$-cycles with minimum weight $2$.  Let $C_0$ be a shortest
non-degenerate cycle of $G$ passing at least one endpoint of $P$,
which can be obtained in $O(n)$ time via examining shortest $uv$-paths
in $G\setminus \{uv,vu\}$ by Lemma~\ref{lemma:lemma3.2} for all $O(1)$
edges $uv$ of $G$ incident to at least one endpoint of $P$.  If $C^*$
passes some endpoint of $P$, then $w(C_0)\leq w(C^*)$, implying that
$C_0$ is a cycle ensured by the lemma.  The rest of the proof assumes
that $C^*$ does not pass any endpoint of $P$. Thus, $P$ has internal
nodes.  Let $H$ be an $O(n)$-node $O(1)$-degree simple bidirected
plane graph obtainable in $O(n)$ time as follows: Suppose that
$u_0,\ldots,u_{\ell+1}$ with $\ell\geq 1$ are the nodes of $P$ in
order.  Let $s=v_0=u_0$ and $t=v_{\ell+1}=u_{\ell+1}$.  We {\em
  incise} $G$ along $P$ by
\begin{itemize}
\tightened
\item adding new nodes $v_1,\ldots,v_\ell$, a new path $P'=sv_1\cdots
  v_\ell t$ and the reverse of $P'$,

\item for each $i=1,\ldots,\ell$, letting each edge $vu_i$
  (respectively, $u_iv$) incident to $u_i$ in
  $\textit{int}_G(C)\setminus P$ be replaced by $vv_i$ (respectively,
  $v_iv$) with the same weight,

\item letting the weight of each edge in $P'$ and the reverse of $P'$
  be $\infty$, and

\item embedding the resulting graph $H$ such that $P$ and $P'$ are on
  the external face.
\end{itemize}
See Figure~\ref{figure:figure5} for an example. 
By Lemma~\ref{lemma:lemma4.1}, it 
takes overall $O(n\log\log n)$ time to compute $d_H(u_i,v_i)$ 
and $d_H(v_i,u_i)$ for
each $i\in\{1,\ldots,\ell\}$.
Let $i_1$ (respectively, $i_2$) be an $i\in\{1,\ldots,\ell\}$
that minimizes $d_H(u_i,v_i)$ (respectively, $d_H(v_i,u_i)$).
By 
Lemma~\ref{lemma:lemma3.2}, 
it takes
$O(n)$ time to obtain 
a simple shortest $u_{i_1}v_{i_1}$-path $P_1$ of $H$
and 
a simple shortest $v_{i_2}u_{i_2}$-path $P_2$ of $H$.
The weight of $P_1$ (respectively, $P_2$) is minimum over 
all $u_iv_i$-paths (respectively, $v_iu_i$-paths) of $H$ with $1\leq i\leq \ell$.
Let $C_1$ (respectively, $C_2$)
be the non-degenerate cycle of $G$ corresponding to $P_1$
(respectively, $P_2$).  Let $Q$ be the path of $C^*$ that deviates
from $P$. Let $u_i$ and $u_j$ with $1\leq i,j\leq\ell$ be the first
and last nodes of $Q$, respectively.  If the first edge of $Q$ is in
$\textit{int}_G(C)$, then $C^*$ corresponds to a $v_iu_i$-path of $H$,
implying $w(C_2)\leq w(C^*)$.  If the last edge of $Q$ is in
$\textit{int}_G(C)$, then $C^*$ corresponds to a $u_jv_j$-path of $H$,
implying $w(C_1)\leq w(C^*)$.  For instance, the red (respectively,
blue) cycle of $G$ in Figure~\ref{figure:figure5}(a) corresponds to
the red $u_1v_1$-path (respectively, blue $v_1u_1$-path) of $H$ in
Figure~\ref{figure:figure5}(b).  Thus, one of $C_0$, $C_1$, and $C_2$
with minimum weight is a cycle ensured by the lemma.
\end{proof}

\begin{figure}[t]
\begin{center}
\scalebox{0.85}{
\begin{tikzpicture}[general]
\node[3d] (s) at (0,4)  {};
\node[3d] (x) at (-1.5,0) {};
\node[3d] (y) at  (1,0) {};
\node[3d] (u) at (-1.5,2.66) {};
\node[3d] (v) at (-1.5,1.33) {};
\node[3d] (u2) at (1,2.66) {};
\node[3d] (v2) at (1,1.33) {};

\draw[path, solid] (s) .. controls +(-1,0) and +(0,1) .. (u) -- (v) -- (x);
\draw[pathleft, solid] (s) .. controls +( 1,0) and +(0,1) .. (u2) -- (v2) -- (y);
\draw[path, color=red] (v.90-20) -- (u.270+20);
\draw[path, color=red] (u) .. controls +(2,0) and +(2,0) .. (v);
\draw[path, color=blue, densely dashed] (v2.90-20) -- (u2.270+20);
\draw[path, color=blue, densely dashed] (u2) .. controls +(1.5,0) and +(1.5,0) .. (v2);

\draw[edge] (x) -- (y);

\node at (-.75,3.5) {$P_1$};
\node at (0.5,3.5) {$P_2$};
\node at (0,-1) {(a)};
\end{tikzpicture}
\begin{tikzpicture}[general]
\node[3d] (s) at (0,4)  {};
\node[3d] (x) at (-1.5,0) {};
\node[3d] (y) at  (1.5,0) {};
\node[3d] (u1) at (-1.5,2.66) {};
\node[3d] (v1) at (-1.5,1.33) {};
\node[3d] (u2) at ( 1.5,2.66) {};
\node[3d] (v2) at ( 1.5,1.33) {};

\draw[path, solid] (s) .. controls +(-1,0) and +(0,1) .. (u1) -- (v1) -- (x);
\draw[pathleft, solid] (s) .. controls +( 1,0) and +(0,1) .. (u2) -- (v2) -- (y);
\draw[path, color=red] (v1.90-20) -- (u1.270+20);
\draw[path, color=red] (u1) -- (v2);
\draw[path, color=red] (v2.90-20) -- (u2.270+20);
\draw[path, color=red] (u2) .. controls +(1.5,3) and +(-2,4.33) .. (v1);
\draw[edge] (x) -- (y);

\node at (-.75,3.5) {$P_1$};
\node at (0.75,3.5) {$P_2$};
\node at (-1,2) {\color{red}$Q_1$};
\node at (0,2.5) {\color{red}$R_1$};
\node at (1,2) {\color{red}$Q_2$};
\node at (-2,4.75) {\color{red}$R_2$};
\node at (0,-1) {(b)};
\end{tikzpicture}
\qquad
\begin{tikzpicture}[general]
\draw[path, solid] (0,6) -- (0,0.25);
\node[3d] (u1) at (0,5.5) {$u_1$};
\node[3d] (v2) at (0,4) {$v_2$};
\node[3d] (v1) at (0,2.5) {$v_1$};
\node[3d] (u2) at (0,1) {$u_2$};

\node at (-0.75,5.75) {$P_i$};

\draw[path,color=red] (u1.180+30) .. controls +(-1.5,-0.5) and +(-1.5,0.5) .. (v1.180-30); 
\draw[path,color=red] (v1.270-20) -- (u2.90+20);
\draw[path,color=red] (u2.30-10) .. controls +(1.6,0.5) and +(1.6,-0.5) .. (v2.-30+10); 
\draw[path,color=red] (v2.90-20) -- (u1.270+20);
\draw[path,color=blue, densely dashed] (u1.270-20) -- (v2.90+20);
\draw[path,color=blue, densely dashed] (v2.270-20) -- (v1.90+20);
\draw[pathleft,color=black!30!green, dashdotted] (v2.270+20) -- (v1.90-20);
\draw[pathleft,color=black!30!green, dashdotted] (v1.270+20) -- (u2.90-20);
\draw[pathleft,color=black!30!green, dashdotted] (u2.30+10) .. controls +(1.5,0.5) and +(1.5,-0.5) .. (v2.-30-10); 

\node at (0,0) {(c)};
\end{tikzpicture}
}
\end{center}
							
\caption{(a) The red dotted $P_1$-cycle not intersecting $P_2$ is in
  $\textit{int}_G(C)$.  The blue dashed $P_2$-cycle not intersecting $P_1$
  is in $\textit{ext}_G(C)$.  (b) The red dotted cycle consists of $Q_1$,
  $R_1$, $Q_2$, and $R_2$ in order is a $(C,P_1)$-cycle.  (c) The
  degenerate cycle $C'$ is obtained from the non-degenerate red dotted cycle
  $C^*$ by replacing the $u_1v_1$-path of $C^*$ with the blue dashed
  $u_1v_1$-path of $P_i$. The green dash-dotted cycle $C''$ is a
  non-degenerate cycle contained by $C'$.}
\label{figure:figure6}
\end{figure}
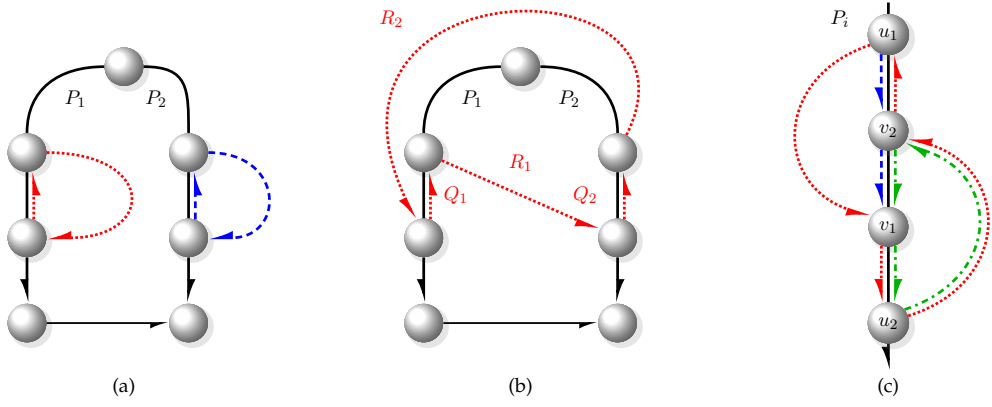

\begin{proof}[Proof of Lemma~\ref{lemma:lemma3.1}]
Let $G_1=\textit{int}_G(C)$ and $G_2=\textit{ext}_G(C)$.  Let $P_1$
and $P_2$ be the given segments of $C$.  Let $C^*$ be a shortest
non-degenerate cycle of $G$ whose number of edges not in $P_1\cup P_2$
is minimized over all shortest non-degenerate cycles of $G$.  If $C^*$
is a cycle of $G_1$ or $G_2$, then any cycle of $G$ is $C$-short,
including the one ensured by Lemma~\ref{lemma:lemma4.2}.  The rest of
the proof assumes that neither $G_1$ nor $G_2$ contains $C^*$. By
Lemma~\ref{lemma:lemma4.2}, it suffices to ensure that $C^*$ is a
$(C,P_1)$-cycle. We need the following claim:
\begin{quote}
{\em For each $i\in\{1,2\}$, if $C^*\cap P_i\ne\varnothing$, then
  $C^*$ is a $P_i$-cycle of $G$.}
\end{quote}
By the claim, $C^*$ intersects both $P_1$ and $P_2$ or else $C^*$
would be a cycle of $G_1$ or $G_2$, as illustrated by
Figure~\ref{figure:figure6}(a), contradicting the assumption. Since $C^*$
is a $P_1$-cycle and a $P_2$-cycle, $C^*$ consists of four paths
$Q_1$, $R_1$, $Q_2$, and $R_2$ in order such that $Q_i$ aligns with
$P_i$ and $R_i$ deviates from $P_1\cup P_2$ for each $i\in\{1,2\}$.
By the assumption, if $R_1\subseteq G_i$ and $R_2\subseteq G_j$, then
$\{i,j\}=\{1,2\}$. Thus, $C^*$ is a $(C,P_1)$-cycle.  See Figure
\ref{figure:figure6}(b) for an illustration.  It remains to prove the
claim.  Assume for contradiction that $C^*$ intersects $P_i$ but is
not a $P_i$-cycle for an index $i\in\{1,2\}$.  There are nodes
$u_1,v_1,u_2,v_2$ of $P_i$ with $u_1\ne v_1$ and $u_2\ne v_2$ such
that
\begin{itemize}
\tightened
\item $u_1$ precedes $v_1$ in $P_i$,
\item $u_2$ succeeds $v_2$ in $P_i$,
\item the $u_1v_1$-path and the $u_2v_2$-path of $C^*$ deviate from
  $P_i$, and
\item the $u_1v_1$-path of $C^*$ deviates from the $u_2v_2$-path of
  $C^*$.
\end{itemize}
Let $C'$ be the cycle of $G$ obtained from $C^*$ by replacing the
$u_1v_1$-path of $C^*$ with the $u_1v_1$-path of $P_i$.  Since $P_i$
is a shortest path of $G$, $w(C')\leq w(C^*)$.  Since $C^*$ is
non-degenerate, the reverse of each of the $u_2v_2$-path of $C'$ is
not in $C'$.  Thus, even if $C'$ is degenerate, there is a
non-degenerate cycle $C''$ in $C'$.  See
Figure~\ref{figure:figure6}(c) for an illustration.  By nonnegativity
of edge weights, $w(C'')\leq w(C')$.  By $w(C'')\leq w(C^*)$, $C''$ is
a shortest non-degenerate cycle of $G$ whose number of edges not in
$P_1\cup P_2$ is fewer than the number of edges of $C^*$ not in
$P_1\cup P_2$, contradicting the definition of $C^*$.
\end{proof}

\section{Noncrossing shortest paths among nodes on external face}
\label{section:section5}

This section proves Lemma~\ref{lemma:lemma4.1} via extending
techniques of Reif~\cite{reif} and~Italiano et al.~\cite{ital} for
undirected planar graphs.  Algorithms for $r$-divisions
(Lemma~\ref{lemma:lemma5.1}) and dense-distance graphs
(Lemma~\ref{lemma:lemma5.2}) are reviewed
in~\S\ref{subsection:subsection5.1}.  Data structures for
fast-Dijkstra algorithm (Lemma~\ref{lemma:lemma5.5}) are given
in~\S\ref{subsection:subsection5.2}.  Data structures that enables
efficient partition of boundary nodes via noncrossing paths
(Lemma~\ref{lemma:lemma5.6}) are given
in~\S\ref{subsection:subsection5.3}.  Tools involving noncrossing
shortest paths (Lemma~\ref{lemma:lemma5.8}) are given
in~\S\ref{subsection:subsection5.4}.  
Lemma~\ref{lemma:lemma4.1} is proved by Lemmas~\ref{lemma:lemma3.2},
\ref{lemma:lemma5.1}, \ref{lemma:lemma5.2}, \ref{lemma:lemma5.5},
\ref{lemma:lemma5.6}, and \ref{lemma:lemma5.8}
in~\S\ref{subsection:subsection5.5}.

\subsection{Dense-distance graph}
\label{subsection:subsection5.1}

Let $G$ be a simple bidirected plane graph.
A {\em division} $D$ of $G$ is an
edge-disjoint partition of $G$ into bidirected plane subgraphs, each
of which is a {\em piece} of $D$.  The {\em multiplicity} of node $v$
of $G$ in $D$ is the number of pieces of $D$ containing $v$.  A node
of $G$ with multiplicity two or more in $D$ is a {\em boundary} node
of $D$. A face of a piece of $D$ is a {\em hole} of the piece if it is
not a face of $G$. For any $r>0$, an {\em $r$-division} (see,
e.g.,~\cite{Frederickson87,henz,ital,r-divi,lacki}) of $G$ is a division of $G$ with
$O(n/r)$ pieces, each having $O(r)$ nodes, $O(\sqrt{r})$ boundary
nodes, and $O(1)$ holes.

\begin{lemma}[Klein, Mozes, and Sommer~\cite{r-divi}]
\label{lemma:lemma5.1}
For any $r>0$, it takes $O(n)$ time to compute an $r$-division for an
$n$-node simple bidirected plane graph each of whose faces contains at
most three nodes.
\end{lemma}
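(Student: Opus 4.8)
The plan is to reduce to the known linear-time $r$-division algorithm for undirected plane graphs and then lift the resulting decomposition back to the bidirected setting. Concretely, I would (i) pass from the input bidirected plane graph $G$ to its underlying simple undirected plane graph $G_0$, (ii) invoke the algorithm of Klein, Mozes, and Sommer~\cite{r-divi} to obtain an $r$-division $D_0$ of $G_0$ in $O(n)$ time, and (iii) lift $D_0$ to a division $D$ of $G$ by restoring, inside the piece that received the undirected edge $uv$, the bundle of antiparallel directed edges $uv$ and $vu$ with their original embedding, then verify that $D$ satisfies all four requirements of an $r$-division.

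For the reduction, replace each bundle $\{uv,vu\}$ of $G$ by a single undirected edge $uv$ and keep the plane embedding of $G$; since $G$ is simple and bidirected, $G_0$ is a simple plane graph with the same nodes and faces as $G$ and at most half as many edges, and the hypothesis that every face of $G$ contains at most three nodes carries over verbatim. Assuming $G$ (hence $G_0$) connected without loss of generality, I would first triangulate $G_0$ in $O(n)$ time by adding at most a constant number of edges per face without creating parallel edges—exactly the step used in Section~\ref{section:section3} for $G_\vartriangle$—so that $G_0$ lies in the class on which the Klein–Mozes–Sommer machinery runs in linear time, producing $O(n/r)$ pieces with $O(r)$ nodes, $O(\sqrt r)$ boundary nodes, and $O(1)$ holes each. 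I then delete the triangulating edges from the pieces (this can only decrease face sizes and merge holes, so the $O(1)$-holes and $O(\sqrt r)$-boundary bounds survive) and finally perform the lift.

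The key verification, and where I expect most of the work to go, is that $D$ is a genuine $r$-division of the \emph{bidirected} plane graph $G$ in the sense defined before Lemma~\ref{lemma:lemma5.1}. Edge-disjointness of the pieces is immediate because bundles are kept intact and $D_0$ was edge-disjoint; node counts are unchanged by the lift, so each piece still has $O(r)$ nodes; and a node lies in a given piece of $D$ iff it lies in the corresponding piece of $D_0$, so the multiplicity of every node, and hence the set of boundary nodes, is unchanged, preserving the $O(\sqrt r)$ bound. The delicate point is the $O(1)$-holes requirement: I would argue that bundling antiparallel edges and reinserting them into a single piece does not alter the planar face structure of any piece—a face of a piece of $D$ is a face of $G$ precisely when the corresponding face of the piece of $D_0$ is a face of $G_0$—because both the undirected-to-bundled replacement and the deletion of triangulating edges are local to a single face of $G$ and reversible within that face, so neither splits one hole into two nor turns a genuine face of $G$ into a hole. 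Thus the number of holes per piece is exactly the $O(1)$ delivered by Klein–Mozes–Sommer. Finally, the running time is $O(n)$ throughout, since the reduction to $G_0$, the triangulation, the Klein–Mozes–Sommer call, the edge deletions, and the lift each take time linear in $n$ because all graphs involved have $O(n)$ edges.
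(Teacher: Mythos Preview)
The paper does not give a proof of Lemma~\ref{lemma:lemma5.1}; it is stated as a citation to Klein, Mozes, and Sommer~\cite{r-divi} and used as a black box. So there is nothing to compare your argument against on the paper's side.

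Your reduction is the natural way to make the (essentially trivial) passage from the undirected result to the bidirected setting explicit, and the core of it is correct: since edges between adjacent nodes are bundled in the embedding, $G$ and its underlying undirected graph $G_0$ have the same nodes, the same faces, and the same incidence structure, so an $r$-division of $G_0$ lifts verbatim to one of $G$ with identical piece/boundary/hole counts. One remark: the triangulate-then-delete detour is both unnecessary and the only place where your argument is shaky. The hypothesis already guarantees that every face of $G_0$ has at most three nodes, which is precisely the input class for~\cite{r-divi}, so you can invoke their algorithm directly on $G_0$. If you do triangulate $G_0$ to $G_0'$ and then delete the added edges from each piece, your claim that this ``can only \ldots\ merge holes'' is not quite right: a triangulating edge of $G_0'$ lying \emph{outside} a piece $H'$ can, upon deletion, merge a genuine face of $G_0'$ that was a face of $H'$ with a face outside $H'$, turning it into a hole of $H$ relative to $G_0$. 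Dropping the triangulation step removes this complication and leaves a clean two-line proof.
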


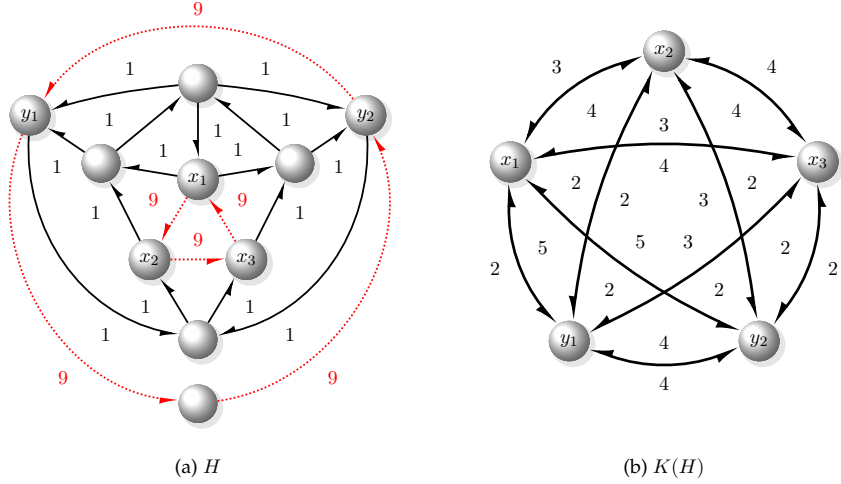
\begin{figure}[t]
\begin{center}
\scalebox{0.85}{
\begin{tikzpicture}[general]
\node[3d] (x) at (-90-360/3:3) {$y_1$};
\node[3d] (z) at (0,2) {}; 
\node[3d] (y) at (-90+360/3:3) {$y_2$};
\node[3d] (al) at (-1.5,0.75) {};
\node[3d] (a) at (0,0.5) {$x_1$};
\node[3d] (ar) at (1.5,0.75) {};
\node[3d] (b) at (-.75,-.75) {$x_2$};
\node[3d] (c) at (.75,-.75) {$x_3$};
\node[3d] (d) at (0,-2) {};
\node[3d] (zd) at (-90:3) {};

\draw[edge, red, densely dotted, bend right=52] (y) to node [auto, swap] {$9$} (x);
\draw[edge, bend right=40] (x) to node [auto, swap, pos=.8] {$1$} (d);
\draw[edge, bend left=40] (y) to node [auto, pos=.8] {$1$} (d);
\draw[edge, red, densely dotted, bend right=52] (x) to node [auto, swap, pos=.7] {$9$} (zd);
\draw[edge, red, densely dotted, bend right=52] (zd) to node [auto, swap, pos=.3] {$9$} (y);

\draw[edge] (al) -- node [auto, pos=.2] {$1$} (x);
\draw[edge] (al) -- node [auto, pos=.2] {$1$} (z);
\draw[edge] (ar) -- node [auto, swap, pos=.2] {$1$} (y);
\draw[edge] (ar) -- node [auto, swap, pos=.2] {$1$} (z);
\draw[edge, bend right=5] (z) to node [auto, swap, pos=.2] {$1$} (x);
\draw[edge, bend left=5] (z) to node [auto, pos=.2] {$1$} (y);
\draw[edge] (z) -- node [auto] {$1$} (a);
\draw[edge] (a) -- node [auto, swap, pos=.6] {$1$} (al);
\draw[edge] (a) -- node [auto, pos=.7] {$1$} (ar);
\draw[edge, densely dotted, red] (a) -- node [auto, swap] {$9$} (b);
\draw[edge, densely dotted, red] (c) -- node [auto, swap] {$9$} (a);
\draw[edge] (b) -- node [auto, pos=.8] {$1$} (al);
\draw[edge] (c) -- node [auto,swap, pos=.8] {$1$} (ar);
\draw[edge, densely dotted, red] (b) -- node [auto] {$9$} (c);
\draw[edge] (d) -- node [auto,pos=.8] {$1$} (b);
\draw[edge] (d) -- node [auto, swap, pos=.8] {$1$} (c);

\node at (0,-4) {(a) $H$};
\end{tikzpicture}
\qquad
\begin{tikzpicture}[general]
\node[3d] (b) at (90+0:2.5) {$x_2$};
\node[3d] (a) at (90+360/5:2.5) {$x_1$};
\node[3d] (x) at (90+2*360/5:2.5) {$y_1$};
\node[3d] (y) at (90+3*360/5:2.5) {$y_2$};
\node[3d] (c) at (90+4*360/5:2.5) {$x_3$};

\draw[bundled, bend left=20] (a) to node[auto,pos=.525] {$3$} 
                                    node[auto, swap, pos=.475] {$4$} (b);
\draw[bundled, bend left=20] (b) to node[auto, pos=.475] {$4$} 
                                     node[auto, swap, pos=.525] {$4$} (c);
\draw[bundled, bend right=10] (c) to node[auto, pos=.5] {$4$} 
                                    node[auto, swap, pos=.5] {$3$} (a);

\draw[bundled, bend right=20] (x) to node[auto, pos=.5] {$4$} 
                                     node[auto, swap, pos=.5] {$4$} (y);

\draw[bundled, bend right=10] (a) to node[auto, pos=.475] {$5$} 
                                     node[auto, swap, pos=.525] {$2$} (y);
\draw[bundled, bend right=20] (a) to node[auto, pos=.575] {$5$} 
                                    node[auto, swap, pos=.425] {$2$} (x);

\draw[bundled, bend right=10] (b) to node[auto, pos=.45] {$2$} 
                                     node[auto, swap, pos=.55] {$2$} (x);
\draw[bundled, bend left=10] (b) to node[auto, pos=.55] {$2$} 
                                    node[auto, swap, pos=.45] {$3$} (y);

\draw[bundled, bend left=10] (c) to node[auto, pos=.525] {$2$} 
                                    node[auto, swap, pos=.475] {$3$} (x);
\draw[bundled, bend left=20] (c) to node[auto, pos=.425] {$2$} 
                                    node[auto, swap, pos=.575] {$2$} (y);
                                    
\node at (0,-4) {(b) $K(H)$};
\end{tikzpicture}
}
\end{center}

\caption{(a) A piece $H$ in which $x_1$, $x_2$, and $x_3$ are the
  boundary nodes in one hole and $y_1$ and $y_2$ the boundary nodes in
  the other hole. (b) $K(H)$.}
\label{figure:figure7}
\end{figure}

Let $D$ be an $r$-division of $G$.  For any connected component $H$ of
any piece of $D$, let $K(H)$ denote the complete directed graph on the
boundary nodes of $D$ in $H$ in which $w(uv)=d_H(u,v)$.
See Figure~\ref{figure:figure7} for an example.
The {\em dense distance graph} (see, e.g.,~\cite{ital})
$K(D)$
of $D$ is the $O(n)$-edge simple directed graph on the $O(n/\sqrt{r})$
boundary nodes of $D$ {\em simplified} from the union of $K(H)$ over
all connected components $H$ of all pieces of $D$ by keeping exactly
one copy of parallel edges with minimum weight.  For any edge $uv$ of
$K(D)$, an {\em underlying $uv$-path} is a $uv$-path in some connected
component $H$ of some piece of $D$ with weight equal to $w(uv)$ in
$K(D)$.  For any path $\Pi$ of $K(D)$, an {\em underlying path} of
$\Pi$ consists of an underlying $uv$-path for each edge $uv$ of $\Pi$.

\begin{lemma}[Klein~\cite{mssp}]
\label{lemma:lemma5.2}		
For any given $r$-division $D$ of an $n$-node simple bidirected plane
graph with nonnegative edge weights, it takes $O(n\log r)$ time to
compute $K(D)$ and a data structure from which, for any path $\Pi$ of
$K(D)$, the first $c$ edges of an underlying path of $\Pi$ can be
obtained in $O(c\log\log r)$ time.
\end{lemma}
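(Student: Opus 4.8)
The plan is to invoke the directed multiple-source shortest-path (MSSP) structure of Klein~\cite{mssp} separately on each connected component of each piece of $D$, and then to assemble $K(D)$ and the path-retrieval structure from these per-component structures. Fix a piece $P$ of $D$ and a connected component $H$ of $P$. Because $D$ is an $r$-division, $H$ has $O(r)$ nodes and $O(\sqrt r)$ boundary nodes, all of which lie on the $O(1)$ holes of $P$ meeting $H$, and each such hole can be viewed as a face of $H$. For each such hole $h$ I would run Klein's MSSP on $H$ with $h$ as the distinguished face; using nonnegativity of the edge weights this takes $O(|V(H)|\log|V(H)|)=O(r\log r)$ time and yields a representation of the shortest-path trees of $H$ rooted at all nodes on $h$, supporting (i) a query for $d_H(u,v)$ for any node $u$ on $h$ and any node $v$ of $H$ in $O(\log r)$ time and (ii) enumeration of the edges of the $u$-to-$v$ tree path in forward order starting at $u$ at $O(\log\log r)$ amortized time per edge. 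Over the $O(1)$ holes meeting $H$ this is $O(r\log r)$ per component; since the components of $P$ partition its $O(r)$ nodes and $D$ has $O(n/r)$ pieces, the total is $O(n\log r)$.

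Next, for each component $H$ I would compute $K(H)$ by querying $d_H(u,v)$ for each of the $O(\sqrt r\cdot\sqrt r)=O(r)$ ordered pairs $(u,v)$ of boundary nodes of $H$, using the MSSP structure of a hole through $u$ (every boundary node lies on some hole of $H$); this costs $O(\log r)$ per query, hence $O(r\log r)$ per component and $O(n\log r)$ over all components. For each such edge I record, besides its weight, a pointer to the MSSP structure that produced it and to $v$, which together determine an underlying $uv$-path of $G$. The union of all $K(H)$ has $O(n)$ edges on $O(n/\sqrt r)$ boundary nodes, and since each pair of boundary nodes lies in $O(1)$ common components, simplifying to one minimum-weight copy per parallel class (carrying along the recorded pointers) takes $O(n)$ time, producing $K(D)$ together with the auxiliary data structure, which consists of the per-hole MSSP structures and the edge labels.

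Finally, given a path $\Pi=w_0w_1w_2\cdots$ of $K(D)$ and a bound $c$, I would produce the first $c$ edges of an underlying path of $\Pi$ by processing the edges $w_0w_1,w_1w_2,\ldots$ in order: for $w_jw_{j+1}$, follow its recorded pointers to the appropriate MSSP structure and enumerate the edges of the corresponding underlying $w_jw_{j+1}$-path starting at $w_j$, at $O(\log\log r)$ per edge, moving on to $w_{j+1}w_{j+2}$ once that underlying path is exhausted and stopping the whole process as soon as $c$ edges have been emitted. Since the emitted edges are exactly the first $c$ edges of the concatenated underlying path of $\Pi$, the total cost is $O(c\log\log r)$.

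The delicate point is property (ii) in the first step: Klein's MSSP must be augmented so that the tree path from the root $u$ to any node $v$ can be enumerated \emph{in forward order} and \emph{prefix by prefix}, so that the expansion of a single $K(D)$-edge can be aborted once enough edges have been collected, at $O(\log\log r)$ amortized time per edge rather than $O(\log r)$. This requires combining the dynamic-tree encoding of the MSSP trees with a van Emde Boas style predecessor structure over the $O(\sqrt r)$ boundary nodes of each hole; everything else is routine bookkeeping about $r$-divisions and summation over the pieces.
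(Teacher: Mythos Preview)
The paper does not prove Lemma~5.2; it is stated as a citation to Klein and used as a black box. Your outline---run Klein's MSSP on each component for each of its $O(1)$ holes, query the $O(r)$ boundary-pair distances to assemble each $K(H)$, simplify the union to $K(D)$, and expand $K(D)$-edges by walking MSSP trees---is the standard way to unpack that citation, and your time accounting for part~(i) and for the construction of $K(D)$ is correct.

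The genuine gap is precisely the step you flag, and your last paragraph does not close it. Two things are missing. First, Klein's MSSP as usually stated gives $O(\log r)$ per parent query via the persistent dynamic trees, not $O(\log\log r)$; to get the faster bound one records, for each node $v$ of $H$, the $O(\deg v)$ source positions at which $\mathrm{parent}(v)$ changes (Klein's analysis guarantees $O(|E(H)|)$ pivots in total over the whole sweep), and then answers each parent query by a predecessor search among these breakpoints in the universe of $O(\sqrt r)$ sources. That is where a van~Emde~Boas--type structure enters, and it is indexed by \emph{source positions around the hole}; your phrase ``over the $O(\sqrt r)$ boundary nodes of each hole'' is at best an oblique way of saying this, and you never connect it to per-node parent lookups. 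Second, even with $O(\log\log r)$ parent queries, the tree rooted at $u$ only lets you walk from $v$ back toward $u$, i.e., emit the \emph{last} $c$ edges of the $u$-to-$v$ path, not the first $c$, so you cannot abort early in the forward direction. The clean fix is to also run MSSP on the edge-reversed component $H'$ with each hole as the distinguished face: the tree rooted at $v$ in $H'$ is a shortest-path-\emph{to}-$v$ tree in $H$, and tracing parents from $u$ in that tree walks the $u$-to-$v$ path forward, abortable after $c$ steps. You do not mention this, so as written the $O(c\log\log r)$ bound is asserted rather than established.
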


\subsection{Fast-Dijkstra algorithm}
\label{subsection:subsection5.2}

Consider the following equation
\begin{equation}
\label{equation:equation2}
w(u_1v_1)+w(u_2v_2)\leq w(u_1v_2)+w(u_2v_1)
\end{equation}
for distinct nodes $u_1,u_2,v_1,v_2$ of a simple directed graph $H$
with edge weights.  A {\em type-$1$ Monge unit} is a complete $H$
equipped with a cyclic ordering for the nodes of $H$ such that
Equation~(\ref{equation:equation2}) holds for any distinct nodes
$u_1,u_2,v_2,v_1$ of $H$ in order.  A {\em type-$2$ Monge unit} is a
complete bipartite $H$ equipped with an ordering for each of the two
maximal independent sets of $H$ such that
Equation~(\ref{equation:equation2}) holds for any distinct nodes $u_1$
and $u_2$ of one independent set in order and any distinct nodes $v_1$
and $v_2$ of the other independent set in order.

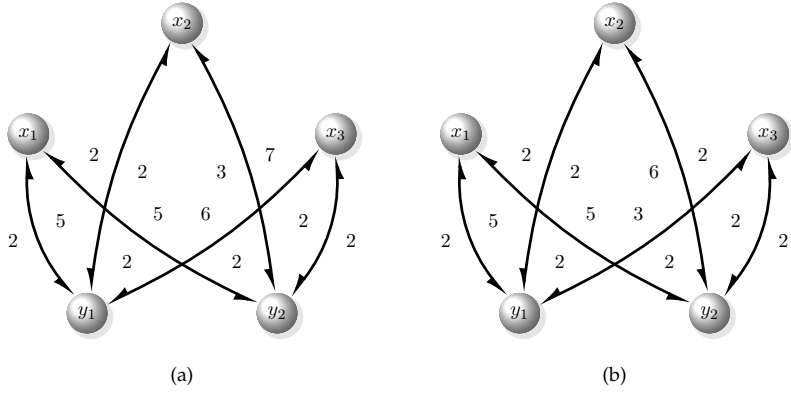
\begin{figure}[t]
\begin{center}
\scalebox{0.85}{
\begin{tikzpicture}[general]
\node[3d] (b) at (90+0:2.5) {$x_2$};
\node[3d] (a) at (90+360/5:2.5) {$x_1$};
\node[3d] (x) at (90+2*360/5:2.5) {$y_1$};
\node[3d] (y) at (90+3*360/5:2.5) {$y_2$};
\node[3d] (c) at (90+4*360/5:2.5) {$x_3$};

\draw[bundled, bend right=10] (a) to node[auto, pos=.475] {$5$} 
                                     node[auto, swap, pos=.525] {$2$} (y);
\draw[bundled, bend right=20] (a) to node[auto, pos=.575] {$5$} 
                                    node[auto, swap, pos=.425] {$2$} (x);

\draw[bundled, bend right=10] (b) to node[auto, pos=.45] {$2$} 
                                     node[auto, swap, pos=.55] {$2$} (x);
\draw[bundled, bend left=10] (b) to node[auto, pos=.55] {$7$} 
                                    node[auto, swap, pos=.45] {$3$} (y);

\draw[bundled, bend left=10] (c) to node[auto, pos=.525] {$2$} 
                                    node[auto, swap, pos=.475] {$6$} (x);
\draw[bundled, bend left=20] (c) to node[auto, pos=.425] {$2$} 
                                    node[auto, swap, pos=.575] {$2$} (y);
\node at (0,-3) {(a)};
\end{tikzpicture}
\qquad
\begin{tikzpicture}[general]
\node[3d] (b) at (90+0:2.5) {$x_2$};
\node[3d] (a) at (90+360/5:2.5) {$x_1$};
\node[3d] (x) at (90+2*360/5:2.5) {$y_1$};
\node[3d] (y) at (90+3*360/5:2.5) {$y_2$};
\node[3d] (c) at (90+4*360/5:2.5) {$x_3$};

\draw[bundled, bend right=10] (a) to node[auto, pos=.475] {$5$} 
                                     node[auto, swap, pos=.525] {$2$} (y);
\draw[bundled, bend right=20] (a) to node[auto, pos=.575] {$5$} 
                                    node[auto, swap, pos=.425] {$2$} (x);

\draw[bundled, bend right=10] (b) to node[auto, pos=.45] {$2$} 
                                     node[auto, swap, pos=.55] {$2$} (x);
\draw[bundled, bend left=10] (b) to node[auto, pos=.55] {$2$} 
                                    node[auto, swap, pos=.45] {$6$} (y);

\draw[bundled, bend left=10] (c) to node[auto, pos=.525] {$2$} 
                                    node[auto, swap, pos=.475] {$3$} (x);
\draw[bundled, bend left=20] (c) to node[auto, pos=.425] {$2$} 
                                    node[auto, swap, pos=.575] {$2$} (y);

\node at (0,-3) {(b)};
\end{tikzpicture}
}
\end{center}
\caption{ Each of the two graphs can be simplified from the union of
  two type-$2$ Monge units.}
\label{figure:figure8}
\end{figure}

A {\em Monge decomposition} of a simple directed graph $K$ with edge
weights is a set $M$ of Monge units on node subsets of $K$ such that
$K$ is the graph simplified from the union of the Monge units in $M$.
The {\em multiplicity} of a node $v$ of $K$ in $M$ is the number of
Monge units in $M$ that contain $v$.  The {\em size} of $M$ is the sum
of the multiplicities of all nodes of $K$ in $M$.
An equivalent form of the following lemma is proved by Mozes and
Wulff-Nilsen~\cite[\S4.4]{MozesW10} using the algorithm of
Klein~\cite{mssp} and used by Kaplan, Mozes, Nussbaum, and
Sharir~\cite[\S5.2]{monge}.
Specifically, for any hole $C$ of a piece $H$ of $D$, the complete
graph on the nodes of $C$ with $w(uv)=d_H(u,v)$ for any nodes $u$ and
$v$ in $C$ equipped with the cyclic ordering of $C$ is a type-$1$
Monge unit.  For instance, the subgraphs of $K(H)$ in
Figure~\ref{figure:figure7}(b) induced by $\{x_1,x_2,x_3\}$ and
$\{y_1,y_2\}$ equipped with their cyclic orders on the holes are two
type-$1$ Monge units.  For any two holes $C_1$ and $C_2$ of a piece
$H$ of $D$, Mozes et al.~showed that the complete bipartite graph on
the nodes of $C_1$ and $C_2$ with $w(uv)=d_H(u,v)$ for nodes $u$ and
$v$ such that each of $C_1$ and $C_2$ contains exactly one of $u$ and
$v$ can be simplified from the union of $O(1)$ type-$2$ Monge units.
For instance, the subgraph of
$K(H)$ in Figure~\ref{figure:figure7} consisting of edges between
$\{x_1,x_2,x_3\}$ and $\{y_1,y_2\}$ can be simplified from the union
of the graphs in Figures~\ref{figure:figure8}(a)
and~\ref{figure:figure8}(b).  The edges of the graph in
Figure~\ref{figure:figure8}(a) from $\langle x_1,x_2,x_3\rangle$
(respectively, $\langle y_2,y_1\rangle$) to $\langle y_1,y_2\rangle$
(respectively, $\langle x_3,x_1,x_2\rangle$) form a type-$2$ Monge
unit.  The edges of the graph in Figure~\ref{figure:figure8}(b) from
$\langle x_3,x_1,x_2\rangle$ (respectively, $\langle y_1,y_2\rangle$)
to $\langle y_1,y_2\rangle$ (respectively, $\langle
x_3,x_1,x_2\rangle$) form a type-$2$ Monge unit.
\begin{lemma}
\label{lemma:lemma5.3}		
For any given $r$-division $D$ of an $n$-node simple bidirected plane
graph with nonnegative edge weights, it takes $O(n\log r)$ time to
obtain a Monge decomposition $M(D)$ of $K(D)$ such that the
multiplicity of a node of $K(D)$ in $M(D)$ is $O(1)$ times its
multiplicity in $D$.
\end{lemma}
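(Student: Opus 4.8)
The plan is to assemble $M(D)$ piece by piece from the within-hole and between-holes structure of the pieces of $D$, relying on the Monge property of boundary-to-boundary distances established by Mozes and Wulff-Nilsen~\cite[\S4.4]{MozesW10} via Klein's multiple-source shortest-path algorithm~\cite{mssp}. First I would run, for each connected component $H$ of each piece of $D$ and each of the $O(1)$ holes of $H$, an instance of Klein's algorithm with the boundary nodes on that hole as sources; this yields all of the graphs $K(H)$ (equivalently, all distances $d_H(u,v)$ between boundary nodes of $D$ in $H$) in the same $O(n\log r)$ total time as in Lemma~\ref{lemma:lemma5.2}. It then remains to exhibit a set $M(D)$ of Monge units whose simplification is $K(D)$. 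For a fixed $H$, classify the edges of $K(H)$ by which holes their endpoints lie on: kind (i) consists, for each hole $C$ of $H$, of the subgraph of $K(H)$ induced by the boundary nodes on $C$; kind (ii) consists, for each of the $O(1)$ unordered pairs $\{C_1,C_2\}$ of holes of $H$, of the bipartite subgraph of $K(H)$ with the boundary nodes on $C_1$ on one side and those on $C_2$ on the other. I would let $M(D)$ be the union, over all $H$ and over all holes and hole-pairs of $H$, of one type-$1$ Monge unit per subgraph of kind (i) and $O(1)$ type-$2$ Monge units per subgraph of kind (ii); since $K(D)$ is by definition the simplification of $\bigcup_H K(H)$, the union of the units in $M(D)$ simplifies to exactly $K(D)$, so $M(D)$ is a Monge decomposition of $K(D)$.

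The heart of the argument is the Monge property itself, which I would invoke rather than re-derive. For kind (i), the complete graph on the boundary nodes of a hole $C$ of $H$ with $w(uv)=d_H(u,v)$, equipped with the cyclic order of $C$, satisfies Equation~(\ref{equation:equation2}) for any four nodes taken in cyclic order; this is the statement that the distance function between nodes on the boundary of a single face of a bidirected plane graph is Monge, which follows from a crossing-exchange argument on shortest paths together with nonnegativity of edge weights, as in~\cite{MozesW10,monge}. Hence each subgraph of kind (i) is, by definition, a type-$1$ Monge unit. For kind (ii), the complete bipartite graph between the boundary nodes of two distinct holes $C_1$ and $C_2$ of $H$ is in general not a single type-$2$ Monge unit, because the two holes each carry a cyclic order rather than a linear one; but, as Mozes et al.\ show, cutting each of the two cyclic orders at a constant number of suitably chosen places breaks this bipartite graph into the union of $O(1)$ type-$2$ Monge units (Figure~\ref{figure:figure8} illustrates the two units that suffice in one such case). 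This $O(1)$-way split is the source of the constant factor in the statement.

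For the multiplicity bound, a boundary node $v$ of $D$ lies on $O(1)$ holes of each piece containing it, since each piece has $O(1)$ holes; hence $v$ occurs in $O(1)$ Monge units of kinds (i) and (ii) for each piece of $D$ containing it, so the multiplicity of $v$ in $M(D)$ is $O(1)$ times the number of pieces of $D$ containing $v$, which is its multiplicity in $D$. For the running time, the MSSP runs cost $O(r\log r)$ per $O(r)$-node piece and hence $O((n/r)\cdot r\log r)=O(n\log r)$ overall; once $K(H)$ is in hand, listing its $O(r)$ edges under the $O(1)$ hole and hole-pair groupings and forming the corresponding Monge units (including the $O(1)$-way split of each bipartite group) costs $O(r)$ per piece, hence $O(n)$ in total, so the whole procedure runs in $O(n\log r)$ time. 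The main obstacle is the verification of the cyclic Monge inequality for kind (i) and the $O(1)$-unit decomposition for kind (ii) in the bidirected setting, where shortest paths between well-ordered boundary nodes need not be noncrossing; I would handle this by appealing to the crossing-exchange analysis of Mozes and Wulff-Nilsen~\cite[\S4.4]{MozesW10}, which already applies to directed plane graphs.
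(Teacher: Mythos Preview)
Your proposal is correct and follows essentially the same approach as the paper. The paper does not give a standalone proof of Lemma~\ref{lemma:lemma5.3}; instead, in the paragraph immediately preceding the lemma, it attributes the result to Mozes and Wulff-Nilsen~\cite[\S4.4]{MozesW10} and Kaplan et al.~\cite[\S5.2]{monge}, and sketches exactly the construction you spell out: one type-$1$ Monge unit per hole (with its cyclic order) and $O(1)$ type-$2$ Monge units per pair of holes, with distances computed via Klein's MSSP algorithm. Your multiplicity and running-time arguments match the paper's implicit accounting.
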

As summarized in the following lemma, given a size-$m$ Monge
decomposition of graph $K$, there are $O(m\log^2 m)$-time obtainable
data structures for range minimum queries~(see, e.g., Kaplan et
al.~\cite{monge} and Gawrychowski, Mozes, and
Weimann~\cite{GawrychowskiMW15}) with which the fast-Dijkstra
algorithm of Fakcharoenphol and Rao~\cite{fr} outputs a shortest-path
tree of $K$ in $O(m\log^2 m)$ time.

\begin{lemma} 
\label{lemma:lemma5.4} 
Given a size-$m$ Monge decomposition of a simple strongly connected
directed graph $K$ with nonnegative edge weights, it takes $O(m\log^2
m)$ time to compute a shortest-path tree of $K$ rooted at any given
node.
\end{lemma}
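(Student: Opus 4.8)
The plan is to simulate Dijkstra's algorithm on $K$ from the given root, but to avoid ever touching the $\Theta(k^2)$ edges of a Monge unit on $k$ nodes explicitly. Since $K$ is strongly connected and the weights are nonnegative, Dijkstra finalizes the nodes in nondecreasing order of distance, and at any moment the tentative distance of a not-yet-finalized node $v$ equals $\min\{d[u]+w(uv):u\text{ already finalized}\}$, where $d[u]$ is the (now final) distance of $u$. As $w(uv)=\min_U w_U(uv)$ over the Monge units $U$ of the decomposition that contain both $u$ and $v$, this tentative distance is the minimum over such units $U$ of the \emph{per-unit tentative distance} $\mathrm{best}_U(v)=\min\{d[u]+w_U(uv):u\text{ finalized and a source of }U\}$. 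It therefore suffices to maintain, for each unit $U$, a data structure supporting: inserting a finalized source $u$ with key $d[u]$; deleting a node once it is finalized as a target of $U$; and reporting the live target $v$ of $U$ minimizing $\mathrm{best}_U(v)$ together with the source achieving it (which becomes the parent of $v$ in the tree). Keeping the current reported minimum of every unit in one global binary heap and repeatedly extracting the overall minimum drives the simulation: on finalizing a node $v$ with distance $D$, we record its parent from the achieving source, and for each of the $O(1)$ units $U$ containing $v$ we delete $v$ as a target, insert $v$ as a source with key $D$, and refresh $U$'s entry in the global heap.

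The per-unit structure is the online Monge-matrix structure underlying the fast-Dijkstra algorithm of Fakcharoenphol and Rao~\cite{fr}. For a type-$2$ unit, with sources ordered $u_1,u_2,\ldots$ and targets ordered $v_1,v_2,\ldots$, Inequality~(\ref{equation:equation2}) forces the map taking a live source to the target index it is currently ``responsible'' for to be monotone; hence the live sources, together with the contiguous target intervals they dominate, form a staircase, which one stores in a balanced binary search tree keyed by intervals, augmented by an auxiliary priority queue holding the minimum of $\mathrm{best}_U(\cdot)$ over each interval. Inserting a source is a binary search for the range of targets it takes over, splitting the affected intervals and evicting dominated sources; deleting a target splits its interval and updates the auxiliary queue; charging each eviction to the insertion that created that source yields $O(\log^2 k)$ amortized time per operation. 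A type-$1$ unit reduces to $O(1)$ type-$2$ sub-units by cutting its cyclic order at a fixed place, so wraparound is handled by a constant number of linearized pieces. Equivalent structures are spelled out by Kaplan, Mozes, Nussbaum, and Sharir~\cite{monge} and by Gawrychowski, Mozes, and Weimann~\cite{GawrychowskiMW15}; we need only the $O(\log^2 m)$ amortized bound.

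Assembling the pieces: the simulation finalizes each of the $O(m)$ distinct nodes of $K$ exactly once, and the total number of source-insertions and target-deletions over all units is $O(m)$, since each such operation on a unit containing a node $v$ is charged to the finalization of $v$, and $v$ lies in at most its multiplicity many units while the multiplicities sum to $m$. Each unit operation costs $O(\log^2 m)$ amortized and each of the $O(m)$ global-heap operations costs $O(\log m)$, so the total time is $O(m\log^2 m)$. Correctness is inherited from that of Dijkstra's algorithm together with the identity $w(uv)=\min_U w_U(uv)$, which makes the minimum of the unit-reported values coincide with the usual Dijkstra tentative distance; strong connectivity guarantees every node is reached, so no node needs special handling for an infinite distance.

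The part I expect to be the real work is the staircase structure and its amortized analysis: one must verify that the Monge property is precisely what keeps the responsibility map monotone even as sources arrive online and targets are deleted, and that the potential argument survives the interaction of both update types and the reduction from type-$1$ to type-$2$ units. Because the paper already invokes Fakcharoenphol--Rao~\cite{fr} and the range-minimum structures of~\cite{monge} and~\cite{GawrychowskiMW15}, the cleanest route is to phrase $K$ (with its given size-$m$ Monge decomposition) as an instance satisfying their hypotheses --- a simplification of a union of type-$1$ and type-$2$ Monge units, queried online in nondecreasing key order --- and quote their bounds, rather than re-deriving the data structure from scratch.
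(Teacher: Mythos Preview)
Your proposal is correct and matches the paper's approach: the paper does not prove Lemma~\ref{lemma:lemma5.4} at all but merely summarizes it as the fast-Dijkstra algorithm of Fakcharoenphol and Rao~\cite{fr} driven by the range-minimum/Monge data structures of~\cite{monge,GawrychowskiMW15}, which is exactly what your final paragraph recommends doing. Your sketch of the per-unit staircase structure and the global heap is a faithful (and more detailed) account of that same machinery.
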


\begin{lemma}
\label{lemma:lemma5.5}
Let $D$ be a given $r$-division of an $n$-node simple plane graph with
nonnegative edge weights.  It takes $O(n\log r)$ time to compute a
data structure from which, for any subset $X$ of the boundary nodes of
$D$ such that the subgraph $K$ of $K(D)$ induced by $X$ is strongly
connected, it takes $O(m\log^2 m)$ time to compute a shortest-path
tree of $K$ rooted at any given node, where $m$ is the sum of the
multiplicities of the nodes of $X$ in $D$.
\end{lemma}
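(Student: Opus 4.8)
The plan is to compute the Monge decomposition $M(D)$ of $K(D)$ once, augmented with a little bookkeeping, and then answer each query by restricting $M(D)$ to $X$ and handing the restriction to the fast-Dijkstra algorithm of Lemma~\ref{lemma:lemma5.4}.

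For the preprocessing, I would invoke Lemma~\ref{lemma:lemma5.3} to obtain, in $O(n\log r)$ time, a Monge decomposition $M(D)$ of $K(D)$ of size $O(n/\sqrt r)$ in which every boundary node of $D$ has multiplicity $O(1)$ times its multiplicity in $D$. I would keep each Monge unit $U\in M(D)$ as its ordering (or its pair of orderings if $U$ is of type-$2$) together with a representation of its weight matrix from which any entry $w(uv)$ of $U$ is readable in $O(1)$ time once the positions of $u$ and $v$ in the ordering(s) of $U$ are known; the total size of these matrices is $O(n)$. In the same $O(n)$ additional time I would build, for each boundary node $v$ of $D$, the list of all pairs $(U,p)$ such that $U\in M(D)$ contains $v$ at position $p$; the total length of these lists equals the size of $M(D)$.

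To answer a query for a set $X$ of boundary nodes with $K$ strongly connected, I would first build the restriction $M(D)|_X$ obtained from $M(D)$ by replacing each unit $U$ with the unit induced by $V(U)\cap X$, keeping the inherited orderings, and discarding every unit that retains no edge. This is done by scanning $X$: for each $v\in X$ and each $(U,p)$ in $v$'s stored list, append $v$ with position $p$ to a bucket for $U$, created on first use; then sort each surviving bucket by position (each side separately for a type-$2$ unit). Each surviving unit keeps a pointer into the weight matrix of its parent $U$ together with the original position in $U$ of each of its nodes, so $O(1)$-time entry access is preserved. Since a sub-ordering of a Monge matrix is again Monge, completeness and complete-bipartiteness are preserved under vertex deletion, and restriction commutes with the simplification operation, $M(D)|_X$ is a genuine Monge decomposition of $K$; and its size is at most $\sum_{v\in X}\mathrm{mult}_{M(D)}(v)=O(m)$ by Lemma~\ref{lemma:lemma5.3}. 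Thus the scan costs $O(m)$ and the sorting costs $O(m\log m)$, after which Lemma~\ref{lemma:lemma5.4} applied to the size-$O(m)$ decomposition $M(D)|_X$ produces a shortest-path tree of $K$ rooted at the given node in $O(m\log^2 m)$ time; the total query time is $O(m\log^2 m)$.

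The crux is not the Monge algebra, which is routine, but making the restriction step \emph{output sensitive}: it must be produced in $O(m\log m)$ time, depending only on $m$, rather than in time proportional to the size $O(n/\sqrt r)$ of $M(D)$, which is exactly what the per-node lists of incident units provide; and the restricted decomposition must still expose a constant-time entry oracle to the fast-Dijkstra machinery of Lemma~\ref{lemma:lemma5.4}, which is why each restricted unit points back into its parent's matrix rather than materializing a fresh submatrix, whose total size could exceed $O(m\log^2 m)$. A routine point to verify is that $K$ is indeed the simplification of the union of the restricted units, which follows from Lemma~\ref{lemma:lemma5.3} once one notes that for $u,v\in X$ the units of $M(D)$ containing both $u$ and $v$ are exactly the restricted units containing both.
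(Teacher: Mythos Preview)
Your proposal is correct and follows essentially the same approach as the paper: compute the Monge decomposition $M(D)$ via Lemma~\ref{lemma:lemma5.3} as the preprocessing, and for each query restrict every Monge unit to $X$ (which preserves the Monge property under the induced orderings), obtaining a size-$O(m)$ decomposition of $K$ to which Lemma~\ref{lemma:lemma5.4} applies. The paper's proof is terser---it simply asserts that the induced orderings can be extracted in $O(m)$ time and that edge weights are readable in $O(1)$ time from the parent unit---while you spell out the per-node incidence lists and the sorting step; your $O(m\log m)$ sorting cost is slightly weaker than the paper's claimed $O(m)$ but is absorbed into the $O(m\log^2 m)$ bound anyway.
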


\begin{proof}
Let $M(D)$ be a Monge decomposition of $K(D)$ as ensured by
Lemma~\ref{lemma:lemma5.3}.  Let $M$ consist of the subgraph $H[X]$
of $H$ induced by $X$ for each Monge unit $H$ in $M(D)$.  Each $H[X]$
remains a Monge unit with the induced cyclic ordering (respectively,
orderings) of the nodes in $H[X]$ for the first (respectively, second)
type. Thus, $M$ is a Monge decomposition of $K$ preserving the
property that the multiplicity of a node of $K$ in $M$ is $O(1)$ times
its multiplicity in $D$, implying that the size of $M$ is $O(m)$.  It
takes overall $O(m)$ time to obtain the induced cyclic ordering or the
two induced orderings of the nodes of $H[X]$ from $H$ for each Monge
unit $H$ in $M(D)$. Since the weight of each edge of $H[X]$ can be
obtained in $O(1)$ time from its weight in $H$, we have an implicit
representation of $M$ in $O(m)$ time.  The lemma follows from
Lemma~\ref{lemma:lemma5.4}.
\end{proof}

\subsection{Noncrossing paths}
\label{subsection:subsection5.3}

\begin{figure}[t]
	
	\begin{center}
	\scalebox{0.85}{
		\begin{tikzpicture}[general]
		\node[3d] (c) at  (0,0) {$z$};
		\node[3d] (l) at  (-1.5,0) {$x$};
		\node[3d] (r) at  (1.5,0) {$y$};
		\node[3d]  (u1) at (-2.5,1.5) {$u_1$};
		\node[3d]  (u2) at (-2.5,-1.5) {$u_2$};
		\node[3d]  (v1) at (2.5,1.5) {$v_1$};
		\node[3d]  (v2) at (2.5,-1.5) {$v_2$};

		\foreach \x/\y/\wa/\wb/\sx/\sy in 
		{ u1/u2/1/1/2.5/0, v1/v2/1/1/2.5/0, u1/v1/6/4/0/0, u2/v2/5/3/0/0
			, u1/l/2/1/6/4, u2/l/1/1/-6/4, r/v1/1/1/-6/4, r/v2/0/2/6/4}
		\draw [bundled] (\x) to node[auto, xshift=-\sx, yshift=-\sy] {$\wb$} node[auto, swap, xshift=\sx, yshift=\sy] {$\wa$} (\y);
		
		\draw [bundled] (l.20) to node[auto] {$2$} node[auto, swap, yshift=-4] {$1$} (c.160);
		\draw [bundled] (c.20) to node[auto] {$1$} node[auto, swap, yshift=-4] {$0$} (r.160);
		
		\draw[redarrowright, densely dotted] (u1.270+33.69-20) -- (l.90+33.69+20);
		\draw[redarrowright, densely dotted] (l) -- (c);
		\draw[redarrowright, densely dotted] (c) -- (r);
		\draw[redarrowright, densely dotted] (r.90-33.69-20) -- (v1.270-33.69+20);
		
		\draw[bluearrow, densely dashed] (u2.90-33.69-20) -- (l.270-33.69+20);
		\draw[bluearrow, densely dashed] (l.-20) -- (c.180+20);
		\draw[bluearrow, densely dashed] (c.-20) -- (r.180+20);
		\draw[bluearrow, densely dashed] (r.270+33.69-20) -- (v2.90+33.69+20);
		
		\node at (0,-2.5) {(a) $G$, ${\color{red}P_1}$, and ${\color{blue}P_2}$};
		\end{tikzpicture}  
		\quad
		\begin{tikzpicture}[general]
		\node[3d] (l) at  (-1.5,0) {$x$};
		\node[3d] (r) at  (1.5,0) {$y$};
		\node[3d] (c) at  (0,0) {$z$};
		\node[3d]  (u1) at (-2.5,1.5) {$u_1$};
		\node[3d]  (u2) at (-2.5,-1.5) {$u_2$};
		\node[3d]  (v1) at (2.5,1.5) {$v_1$};
		\node[3d]  (v2) at (2.5,-1.5) {$v_2$};

		\foreach \x/\y/\wa/\wb/\sx/\sy in 
		{ u1/u2/1/1/2.5/0, v1/v2/1/1/2.5/0,
		 l/c/1/2/0/0, c/r/0/1/0/0
			, u1/l/2/1/6/4, u2/l/1/1/-6/4, r/v1/1/1/-6/4, r/v2/0/2/6/4}
		\draw [bundled] (\x) to node[auto, xshift=-\sx, yshift=-\sy] {$\wb$} node[auto, swap, xshift=\sx, yshift=\sy] {$\wa$} (\y);
		
		\node at (0,-2.5) {(b) $G[{\color{red}P_1},{\color{blue}P_2}]$};
		\end{tikzpicture}  
		}
	\end{center}
\caption{(a) $P_1=u_1xyzv_1$ and $P_2=u_2xyzv_2$ are noncrossing
  shortest paths of $G$.  (b) $G[P_1,P_2]$.}
\label{figure:figure9}		
\end{figure}

Let $G$ be a simple connected bidirected plane graph.  Let
$u_1,u_2,v_2,v_1$ be distinct nodes on the boundary of the external
face of connected plane graph $G$ in order.  A simple $u_1v_1$-path
$P_1$ and a simple $u_2v_2$-path $P_2$ of $G$ are {\em noncrossing}
if $P_1\cap P_2$ is empty or is a path.  For instance, in
Figure~\ref{figure:figure9}, $P_1$ in red and $P_2$ in blue are
noncrossing.  For noncrossing $P_1$ and $P_2$, let $G[P_1,P_2]$
denote the connected bidirected plane subgraph of $G$ enclosed by
$P_1$, $P_2$, and the $u_1u_2$-path and $v_2v_1$-path on the boundary
of the external face of $G$ following the order of $u_1,u_2,v_2,v_1$.
See Figure~\ref{figure:figure9} for an example.

\begin{figure}[t]
\begin{center}
\scalebox{0.85}{
\begin{minipage}[c]{9cm}
\begin{tikzpicture}[general]
\node[3d] (u2) at (-4,0)    {$u_2$};
\node[3d] (v2) at (4,0)     {$v_2$};
\node[3d] (u1) at (-4,1.5)  {$u_1$};
\node[3d] (x) at (-2,1.5) {}; 
\node[3d] (y) at (0,1.5)    {}; 
\node[3d] (z) at (2,1.5) {}; 

\node[3d] (p) at (-2,-1.5) {}; 
\node[3d] (q) at (0,-1.5)  {}; 
\node[3d] (r) at (2,-1.5)  {}; 

\node (xbelow) at (-2, 1.35) {};
\node (ybelow) at (0, 1.35) {};
\node (zbelow) at (2, 1.35) {};

\node[3d] (v1) at (4,1.5)   {$v_1$};
\node[3d] (u3) at (-4,-1.5) {$u_3$};
\node[3d] (v3) at (4,-1.5)  {$v_3$};
\node at (0,-2.5) {(a) $G[{\color{blue}P_1},{\color{black!30!green}P_3}]$};

\node[blue] at (-3,2) {$P_1$};
\node[red] at (-3,0) {$P_2$};
\node[color=black!30!green] at (-3,-2) {$P_3$};

\draw[bundled, solid, blue] (u1) -- (x);
\draw[bundled, solid, blue] (x) -- (y);
\draw[bundled, color=red, densely dotted] (xbelow) -- (ybelow);
\draw[bundled, color=red, densely dotted] (ybelow) -- (zbelow);

\draw[bundled, solid, blue] (y) -- (z);
\draw[bundled, solid, blue] (z) -- (v1);

\draw[bundled, solid] (u1) -- (u2);
\draw[bundled, solid] (u2) -- (u3);

\draw[bundled, solid] (v1) -- (v2);
\draw[bundled, solid] (v2) -- (v3);

\draw[bundled, dashdotted, color=black!30!green]     (u3) -- (p);
\draw[bundled, dashdotted, color=black!30!green]     (p) -- (q);
\draw[bundled, dashdotted, color=black!30!green]     (q) -- (r);
\draw[bundled, dashdotted, color=black!30!green]     (r) -- (v3);

\draw[bundled, bend right=15, color=red, densely dotted] (u2) to (x);
\draw[bundled, bend right=15, color=red, densely dotted]  (z) to (v2);
\end{tikzpicture}
\end{minipage}
\quad
\begin{minipage}[c]{9cm}
\begin{tikzpicture}[general]
\node[3d] (u2) at (-4,0)    {$u_2$};
\node[3d] (v2) at (4,0)     {$v_2$};
\node[3d] (u1) at (-4,1.5)  {$u_1$};
\node[3d] (x) at (-2,1.5) {}; 
\node[3d] (y) at (0,1.5){}; 
\node[3d] (z) at (2,1.5) {}; 

\node (xbelow) at (-2, 1.35) {};
\node (ybelow) at (0, 1.35) {};
\node (zbelow) at (2, 1.35) {};

\node[3d] (v1) at (4,1.5)   {$v_1$};
\node at (0,0) {(b) $G[{\color{blue}P_1},{\color{red}P_2}]$};

\node[blue] at (-3,2) {$P_1$};
\node[red] at (-3,0) {$P_2$};

\draw[bundled, solid, blue] (u1) -- (x);
\draw[bundled, solid, blue] (x) -- (y);
\draw[bundled, color=red, densely dotted] (xbelow) -- (ybelow);
\draw[bundled, color=red, densely dotted] (ybelow) -- (zbelow);

\draw[bundled, solid, blue] (y) -- (z);
\draw[bundled, solid, blue] (z) -- (v1);

\draw[bundled, solid] (u1) -- (u2);

\draw[bundled, solid] (v1) -- (v2);

\draw[bundled, bend right=15, color=red, densely dotted] (u2) to (x);
\draw[bundled, bend right=15, color=red, densely dotted]  (z) to (v2);
\end{tikzpicture}

\bigskip

\begin{tikzpicture}[general]
\node[3d] (u2) at (-4,0)    {$u_2$};
\node[3d] (v2) at (4,0)     {$v_2$};
\node[3d] (x) at (-2,1.5) {}; 
\node[3d] (y) at (0,1.5){}; 
\node[3d] (z) at (2,1.5) {}; 

\node[3d] (p) at (-2,-1.5) {}; 
\node[3d] (q) at (0,-1.5)  {}; 
\node[3d] (r) at (2,-1.5)  {}; 

\node (xbelow) at (-2, 1.35) {};
\node (ybelow) at (0, 1.35) {};
\node (zbelow) at (2, 1.35) {};

\node[3d] (u3) at (-4,-1.5) {$u_3$};
\node[3d] (v3) at (4,-1.5)  {$v_3$};
\node at (0,-2.5) {(c) $G[{\color{red}P_2},{\color{black!30!green}P_3}]$};

\node[red] at (-3,0) {$P_2$};
\node[color=black!30!green] at (-3,-2) {$P_3$};

\draw[bundled, color=red, densely dotted] (x) -- (y);
\draw[bundled, color=red, densely dotted] (y) -- (z);

\draw[bundled, solid] (u2) -- (u3);

\draw[bundled, solid] (v2) -- (v3);

\draw[bundled, dashdotted, color=black!30!green]     (u3) -- (p);
\draw[bundled, dashdotted, color=black!30!green]     (p) -- (q);
\draw[bundled, dashdotted, color=black!30!green]     (q) -- (r);
\draw[bundled, dashdotted, color=black!30!green]     (r) -- (v3);

\draw[bundled, bend right=15, color=red, densely dotted] (u2) to (x);
\draw[bundled, bend right=15, color=red, densely dotted]  (z) to (v2);
\end{tikzpicture}
\end{minipage}
}
\end{center}
\caption{An illustration for the definition of $B(D)$, where $P_1$ is the blue solid 
$u_1v_1$-path, $P_3$ is the green dash-dotted $u_3v_3$-path, and 
$P_2$ is the red dotted $u_2v_2$-path.
$P_1$ and $P_3$ are disjoint. 
$P_1$ and $P_2$ are noncrossing.
(a) $G[P_1,P_3]$, in which the boundary nodes of $D$ form $X(1,3)$.
(b) $G[P_1,P_2]$, in which the boundary nodes of $D$ form $X(1,2)$.
(c) $G[P_2,P_3]$, in which the boundary nodes of $D$ form $X(2,3)$.
}
\label{figure:figure10}
\end{figure}
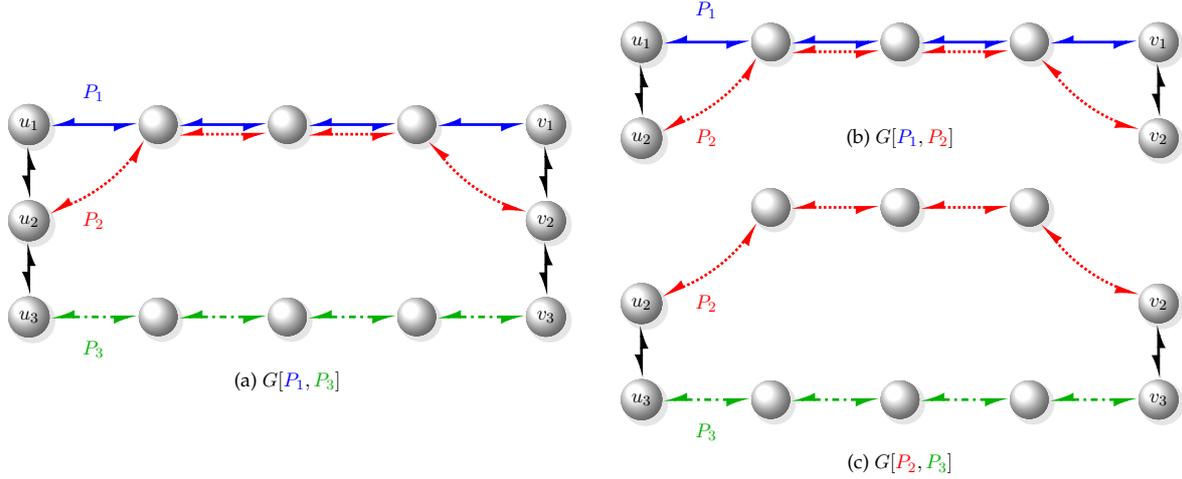

Let $D$ be an $r$-division of $G$.  Our proof of
Lemma~\ref{lemma:lemma4.1} needs a data structure $B(D)$ with the
following property: For distinct nodes $u_1,u_2,u_3,v_3,v_2,v_1$ on
the external face of $G$ in order, any disjoint simple $u_1v_1$-path
$P_1$ and $u_3v_3$-path $P_3$ of $G$, and any simple $u_2v_2$-path
$P_2$ of $G[P_1,P_3]$ such that $P_1$ and $P_2$ are noncrossing,
given $X(1,3)$ and $P_2\setminus P_1$, it takes $O((m_1+m_2)\log r)$
time to obtain $X(1,2)$ and $X(2,3)$, where $X(i,j)$ with $1\leq
i<j\leq 3$ consists of the boundary nodes of $D$ in $G[P_i,P_j]$,
$m_1$ is the sum of multiplicities of the nodes of $X(1,3)$ in $D$,
and $m_2$ is the number of edges in $P_2\setminus P_1$.
See Figure~\ref{figure:figure10} for an illustration.

\begin{lemma}
\label{lemma:lemma5.6}
It takes $O(n)$ time to compute a data structure $B(D)$ for any given
$r$-division $D$ of any $n$-node simple connected bidirected plane
graph.
\end{lemma}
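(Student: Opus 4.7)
The plan is to store, as $B(D)$, for each piece $H$ of $D$ and each connected boundary component of $H$ (the external face of $H$ or one of its $O(1)$ holes), the cyclic sequence of nodes of $H$ along that component, together with a balanced binary search tree over the subsequence of boundary nodes of $D$ on it that supports $O(\log r)$-time predecessor and successor queries between two positions of the component. I also keep, for each edge and each node of $G$, the list of pieces containing it. Each piece has $O(r)$ nodes and $O(\sqrt{r})$ boundary nodes of $D$, so a single piece is preprocessed in $O(r)$ time; since the total size of the pieces is $O(n)$, the whole structure is built in $O(n)$ time.

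To answer a query, I would walk along the edges of $P_2\setminus P_1$ in path order and group consecutive edges by the piece of $D$ that contains them, using the edge-to-piece lists from $B(D)$. For each visited piece $H$, the edges of $P_2$ in $H$ form one or more maximal subpaths of $P_2\cap H$; each endpoint of such a subpath lies either on $P_1$ (where $P_2$ returns to or leaves $P_1$) or on a shared boundary of $H$ with a neighbouring piece (where $P_2$ crosses into that piece). The subpaths of $P_2\cap H$, together with the intersection $P_1\cap H$, draw a planar picture inside $H$ that partitions each boundary component of $H$ into arcs. Because $P_1$ and $P_2$ are noncrossing and $P_1$ and $P_3$ are disjoint, each arc lies entirely on the $P_1$-side or entirely on the $P_3$-side of $P_2$ within $G[P_1,P_3]$. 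For every arc so produced, a predecessor/successor query in the BST for the corresponding boundary component of $H$ extracts the boundary nodes of $D$ lying in that arc in $O(\log r)$ time per node.

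Summing contributions, a boundary node of $X(1,3)$ with multiplicity $k$ is inspected once per incident piece for a total of $O(k\log r)$ work, yielding $O(m_1\log r)$ across all of $X(1,3)$, while walking the edges of $P_2\setminus P_1$ and locating their pieces costs $O(m_2\log r)$. The main obstacle I anticipate is the local bookkeeping inside a piece with several holes that $P_2$ may enter and leave several times: I will need to prove, from the noncrossing and disjointness hypotheses together with the planar topology of $G[P_1,P_3]$, that the arcs induced on the boundary components of $H$ can be consistently two-coloured by $P_1$-side and $P_3$-side, so that every boundary node of $X(1,3)\cap H$ is unambiguously sorted into $X(1,2)$ or $X(2,3)$. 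Once this combinatorial claim is established, the data-structural side of the argument reduces to the straightforward BST manipulations described above.
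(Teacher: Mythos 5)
Your preprocessing is fine and your handling of a hole that $P_2$ actually touches is in the spirit of the paper's Case~1, but the query procedure has a genuine gap: it only ever looks at pieces that contain an edge of $P_2\setminus P_1$. Boundary nodes of $X(1,3)$ can lie in pieces that $P_2$ never enters, and also on holes of visited pieces that $P_2$ never touches; for these your walk produces no arcs at all, so the nodes are never assigned to $X(1,2)$ or $X(2,3)$, and nothing in your $B(D)$ lets you decide their side within the $O((m_1+m_2)\log r)$ budget --- you cannot afford to explore such a piece (that costs $\Theta(r)$), and a BST over a boundary cycle carries no topological information relating an untouched hole to $P_2$. The paper's proof is built around exactly these two situations: (i) for pieces containing nodes of $X(1,3)$ but no edge of $P_2\setminus P_1$, it forms an auxiliary bipartite graph on those pieces and the unclassified nodes, argues via connectedness of $G[P_1,P_3]$ that every connected component of this graph contains a node lying in some piece that $P_2$ does enter, and propagates the classification from there; (ii) for a hole $C$ with $C\cap P_2=\varnothing$ inside a visited piece, $B(D)$ stores a precomputed simple path $Q$ from $C$ to the external face of the piece, and the first intersection of $Q$ (or, failing that, of the piece's external boundary) with $P_2$ determines the side of all of $C\cap X$ at once. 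Your proposal contains no analogue of either mechanism, and without them the claimed running time cannot be met.

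A secondary, smaller issue is that the step you explicitly defer --- consistently labelling each arc as lying in $G[P_1,P_2]$ or in $G[P_2,P_3]$ --- is not just bookkeeping: it requires an $O(1)$-time primitive that, given an edge $uv$ with $u\in P_2$ and $v\notin P_2$, decides whether $v\in G[P_1,P_2]$. The paper obtains this from the rotation system (the cyclic order of edges around each node), which it makes an explicit part of $B(D)$; your structure stores cyclic sequences of nodes on boundary components but never states this primitive, and your two-colouring claim cannot be completed without it (or something equivalent). So as it stands the proposal establishes the preprocessing bound but not the query property that defines $B(D)$.
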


\begin{proof}
Given $X(1,3)$ and the edge set $E$ of $P_2\setminus P_1$, it takes
$O(m_1+m_2)$ time to obtain the nodes of $X(1,3)$ in $E$, which
belongs to $X(1,2)\cap X(2,3)$.  Let $X$ consist of the nodes of
$X(1,3)$ not in $E$.  If $X=\varnothing$, then
$X(1,2)=X(2,3)=X(1,3)$. The rest of the proof assumes
$X\ne\varnothing$.
Let $\mathbbmsl{H}_0$ (respectively, $\mathbbmsl{H}_1$) consist of the
pieces $H$ of $D$ such that $H$ contains nodes of $X$ and no
(respectively, some) edges of $E$.  We have
$\mathbbmsl{H}_1\ne\varnothing$, since $G[P_1,P_3]$ is connected and
$E\ne\varnothing$.  Let $A$ be the $O(m_1+m_2)$-time obtainable
undirected bipartite graph on the nodes $x$ in $X$ and the pieces $H$
of $D$ in $\mathbbmsl{H}_0$ such that $H$ and $x$ are adjacent in $A$
if and only if $H$ contains $x$ in $G$.  The nodes of $X$ in the same
connected component of $A$ either all belong to $X(1,2)$ or all belong
to $X(2,3)$.  Since $G[P_1,P_3]$ is connected, each connected
component of $A$ contains a node of $X$ that belongs to a piece of
$H\in \mathbbmsl{H}_1$ in $G$.  It takes overall $O(m_1+m_2)$ time to
obtain $H\cap E$, $C\cap E$, 
and $C\cap X$ for each hole
$C$ of each piece $H$ of $D$ in $\mathbbmsl{H}_1$.  
Since each piece
of $D$ has $O(1)$ holes, it remains to show that with the $B(D)$
defined below, for each hole $C$ of each piece $H$ of $D$ in
$\mathbbmsl{H}_1$, it takes $O(m\log r)$ time to determine the nodes
of $C\cap X$ in $X(1,2)$, where $m$ is the number of nodes in $H\cap
X$ plus the number of edges in $H\cap E$.

Assume without loss of generality that the external face of each piece
$H$ of $D$ is a hole of $H$.  The $O(n)$-time obtainable data
structure $B(D)$ consists of (1) the cyclic ordering of the incident
edges around each node of $G$ and (2) the following items for each
hole $C$ of each piece $H$ of $D$:
\begin{itemize}
\tightened
\item 
An arbitrary simple path 
$Q$
of $H$ from a node of $C$ to a node $q$
on the external face of $H$.

\item 
The ordering indices of the nodes on 
$Q$.
	
\item 	
The cyclic ordering indices of the nodes on $C$.
\end{itemize}
It takes overall $O(m_1+m_2)$ time to
obtain 
$Q\cap E$
for each hole
$C$ of each piece $H$ of $D$ in $\mathbbmsl{H}_1$.  
With the first part of $B(D)$, if $uv$ is an edge of $G[P_1,P_3]$ with
$u\in P_2$ and $v\notin P_2$, then it takes $O(1)$ time to determine
whether $v\in G[P_1,P_2]$.  With the second part of $B(D)$, for any
$k$-node subset $U$ of any piece $H$ of $D$ and any hole $C$ of $H$,
it takes $O(k)$ time to determine the ordering indices of the nodes of
$U\cap Q$ in 
$Q$
and the cyclic ordering indices of the nodes of
$U\cap C$ in $C$.

\begin{figure}[t]
\begin{center}
\scalebox{0.85}{
\begin{tikzpicture}[general]
\node [3d] (x1) at (1,1.5) {$x$};
\node [3d] (x2) at (-1,-1.5) {};
\node [3d] (v1) at (-1,1.5) {$v$};
\node [3d] (u1) at (-1,0) {$u$};
\node [3d] (v2) at (1,-1.5) {};
\node [3d] (u2) at (1,0) {};

\draw [bundled] (v1) to (x1);
\draw [bundled] (x1) to (u2);
\draw [bundled] (u2) to (v2);
\draw [bundled] (v2) to (x2);
\draw [bundled] (x2) to (u1);
\draw [bundled] (u1) to (v1);

\draw [path, rounded corners] (-2.5,-3) rectangle (2.5,3);

\node [3d] (l) at (-2.5,0) {};
\node [3d] (r) at (2.5,0) {};

\draw [pathleft] (l) -- node [auto] {$E$} (u1);
\draw [pathleft] (u2) -- node [auto] {$E$} (r);
\node at (0,0) {$C$};
\node at (0,-4) {(a)};
\end{tikzpicture}
\quad
\begin{tikzpicture}[general]
\draw [path] (0.75,1.5) circle (0.75);
\draw [path, rounded corners] (-1.5,-3) rectangle (2.25,3);
\node at (0.74,1.5) {$C$};

\node [3d] (c) at (0,1.5) {};
\node [3d] (q) at (0,-3) {$q$};

\node [3d] (l) at (-1.5,-1.5) {};
\node [3d] (r) at (2.25,-1.5) {};
\node [3d] (u) at (0,-1.5) {$u$};
\node [3d] (v) at (0, 0) {$v$};

\draw [pathleft] (c) to node [auto, swap]
{$Q$} (v);

\draw [path] (u) to 
node [auto] {$Q$} (q);

\draw [pathleft] (l) to node [auto] {$E$} (u);
\draw [pathleft] (u) to node [auto] {$E$} (r);

\draw [bundled] (v) to (u);

\node at (.5,-4) {(b)};
\end{tikzpicture}
\quad
\begin{tikzpicture}[general]
\draw [path] (0.75,1.5) circle (0.75);
\draw [path, rounded corners] (-1.5,-3) rectangle (2.25,3);
\node at (0.74,1.5) {$C$};
\node at (.4,-2.5) {$C'$};

\node [3d] (c) at (0,1.5) {};
\node [3d] (q) at (-1.5,1.5) {$q$};

\node [3d] (u) at (-1.5,-1.5) {$u$};
\node [3d] (r) at (2.25,-1.5) {};
\node [3d] (v) at (-1.5, 0) {$v$};

\draw [bundled] (v) to (u);

\draw [pathleft] (u) to node [auto] {$E$} (r);

\draw [path] (c) to node [auto] {$Q$} (q);

\node at (0.5,-4) {(c)};
\end{tikzpicture}
}
\end{center}
\caption{Illustrations for the proof of Lemma~\ref{lemma:lemma5.6}.}
\label{figure:figure11}
\end{figure}
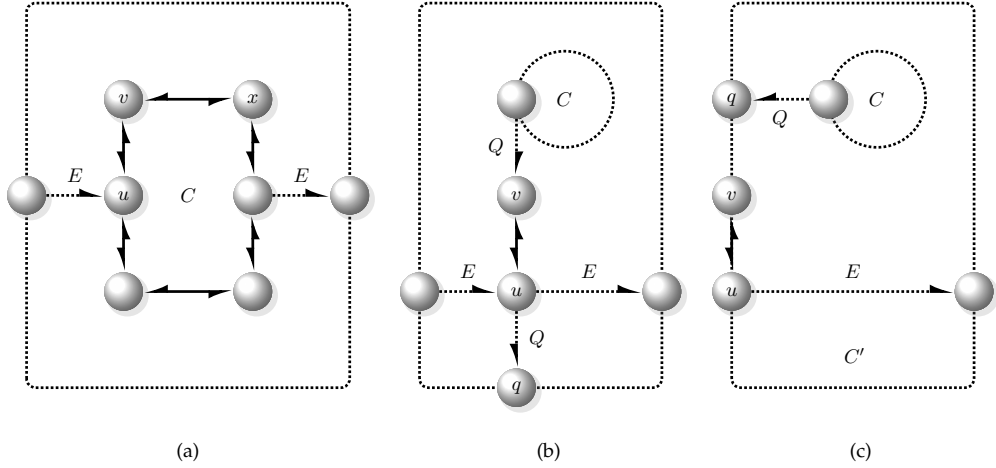

\medskip
\noindent
Case~1: $C\cap E\ne\varnothing$.  As illustrated by
Figure~\ref{figure:figure11}(a), it takes overall $O(m\log r)$ time
via sorting their ordering indices to compute, for each node $x$ of
$C\cap X$, the first node $u\in E$ in the traversal of $C$ starting
from $x$ following the order of $u_1,u_3,v_3,v_1$ and the node $v$ of
$C$ preceding $u$ in the traversal.  We have $x\in X(1,2)$ if and only
if $v\in G[P_1,P_2]$, which can be determined in $O(1)$ time.
	
\medskip
\noindent
Case~2: $C\cap E=\varnothing$.  As illustrated by
Figure~\ref{figure:figure11}(b), if 
$Q\cap E\ne\varnothing$,
then let
$v$ be the node preceding the first node $u$ of $Q$ in $E$.  
Let $C'$ be the boundary of the external face of $H$.
As
illustrated by Figure~\ref{figure:figure11}(c), if 
$Q\cap E=\varnothing$,
then let $v$ be the node of $C'$
preceding the first
node $u$ of $C'$ in $E$ on the traversal of $C'$
starting from $q$ following the order of
$u_1,u_3,v_3,v_1$.  Either way, it takes $O(m)$ time to obtain $v$ and
determine whether $v\in G[P_1,P_2]$.  If $v\in G[P_1,P_2]$, then
$C\cap X\subseteq X(1,2)$. Otherwise, $C\cap X\subseteq X(2,3)$.
\end{proof}

\subsection{Noncrossing shortest paths}
\label{subsection:subsection5.4}

\begin{lemma}
\label{lemma:lemma5.7}
Let $G$ be a simple connected bidirected plane graph with nonnegative
edge weights.  If nodes $u_1,u_2,v_2,v_1$ are on the boundary of the
external face of $G$ in order, then for any shortest $u_1v_1$-path
$P_1$ of $G$, there is a shortest $u_2v_2$-path $P_2$ of $G$ such that
$P_1$ and $P_2$ are noncrossing.
\end{lemma}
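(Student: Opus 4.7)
The plan is to use an exchange argument among shortest $u_2 v_2$-paths. First I would pick $P_2$ to be a shortest $u_2 v_2$-path of $G$ that minimizes $|E(P_2)\setminus E(P_1)|$, breaking ties by requiring the pair of consecutive common segments of $P_1\cap P_2$ examined below to be innermost along $P_1$. I claim that this $P_2$ is noncrossing with $P_1$.

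Since $P_2$ is simple, I would observe that $P_2$ traverses each maximal shared subpath with $P_1$ contiguously, so $P_1\cap P_2$ decomposes into \emph{common segments} $S_1,\ldots,S_k$ listed in the order $P_2$ meets them. If $k\leq 1$, then $P_1\cap P_2$ is empty or a single path and $P_1$ and $P_2$ are noncrossing. Otherwise, I would fix two consecutive common segments $S_i,S_{i+1}$, let $b$ be the vertex where $P_2$ leaves $S_i$ and $a$ the vertex where $P_2$ enters $S_{i+1}$, and note that $P_2[b,a]$ is internally vertex-disjoint from $P_1$.

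The hard part will be a planarity argument showing that $b$ precedes $a$ along $P_1$ in the direction $u_1\to v_1$ and that $P_1[b,a]$ avoids all other common segments $S_j$ with $j\notin\{i,i+1\}$. The hypothesis that $u_1,u_2,v_2,v_1$ appear in this cyclic order on the external face means that $P_1$, completed by the external-face arc from $v_1$ back to $u_1$ avoiding $u_2$ and $v_2$, is a Jordan curve in the plane with both $u_2$ and $v_2$ sitting on the opposite external-face arc. Consequently, the excursion $P_2[b,a]$ lies on one designated side of $P_1$. Were $a$ to precede $b$ along $P_1$, then $P_2[b,a]$ together with $P_1[a,b]$ would bound a Jordan region that traps the continuation $P_2[a,v_2]$ on the wrong side, forcing either a self-intersection of $P_2$ or the presence of a further common segment strictly nested inside $P_1[a,b]$; the innermost-pair tiebreaker in the choice of $P_2$ rules out the latter, and the former contradicts simplicity of $P_2$.

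Once $b$ precedes $a$ on $P_1$, both $P_1[b,a]$ and $P_2[b,a]$ are shortest $ba$-paths, as they are subpaths of the shortest paths $P_1$ and $P_2$, so $w(P_1[b,a])=w(P_2[b,a])=d_G(b,a)$. Replacing $P_2[b,a]$ by $P_1[b,a]$ then yields a $u_2 v_2$-walk of the same total weight; the disjointness claim from step three ensures that no internal vertex of $P_1[b,a]$ already lies on $P_2\setminus P_2[b,a]$, so the result is a simple $u_2 v_2$-path $P_2'$ in which $S_i$, $P_1[b,a]$, and $S_{i+1}$ merge into a single common segment with $P_1$. This strictly decreases $|E(P_2')\setminus E(P_1)|$, contradicting the minimality of $P_2$. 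The bidirected nature of $G$ causes no trouble in the exchange because $P_1[b,a]$ is used in its native $P_1$ direction.
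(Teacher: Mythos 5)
Your extremal setup (choose a shortest $u_2v_2$-path $P_2$ minimizing $|E(P_2)\setminus E(P_1)|$) is workable, but the step you yourself flag as the hard part is a genuine gap, and the justification you sketch for it is incorrect. The paper uses the cyclic order of $u_1,u_2,v_2,v_1$ only through the two \emph{end} excursions of a shortest $u_2v_2$-path: if $u$ and $v$ are the first and last nodes of that path on $P_1$, then the prefix into $u$ and the suffix out of $v$ are disjoint arcs lying in the region bounded by $P_1$ and the external arc through $u_2$ and $v_2$, and if $v$ preceded $u$ on $P_1$ their four endpoints would interleave on that region's boundary, forcing a crossing. The anchoring of those two arcs at $u_2$ and $v_2$ on the outer face is exactly what fixes the orientation. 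Your local exchange between consecutive common segments $S_i,S_{i+1}$ has no such anchor, and the claim that $b$ precedes $a$ along $P_1$ for your chosen (innermost) pair is simply not a consequence of planarity. Concretely, let $P_1=u_1p_1\cdots p_9v_1$ and let a simple path from $u_2$ to $v_2$ meet $P_1$ exactly at the single vertices $p_5$, then $p_2$, then $p_8$, in that order: go from $u_2$ below $P_1$ to $p_5$, then above $P_1$ back to $p_2$, then again above $P_1$, arcing over the previous excursion, to $p_8$, then below $P_1$ to $v_2$. This is plane-embeddable and simple; the innermost consecutive pair along $P_1$ is $(\{p_5\},\{p_2\})$, and for it $a=p_2$ \emph{precedes} $b=p_5$. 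Your trapping dichotomy fails on this configuration: the continuation $P_2[a,v_2]$ leaves the region bounded by $P_2[b,a]\cup P_1[a,b]$ through the vertex $a$ itself (paths in a plane graph may share a vertex without crossing), producing neither a self-intersection nor a further common segment inside $P_1[a,b]$. So the orientation claim would have to be extracted from shortestness/minimality, which your argument never uses at that point; as a topological assertion it is false, and note that your innermost tie-break in fact \emph{selects} the bad pair here. (The tie-break is also misstated: it constrains which pair of segments you examine, not which path you pick, so it cannot disambiguate the choice of $P_2$.)

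The repair is essentially the paper's proof, and it combines cleanly with your extremal choice: if $P_1\cap P_2$ is not empty and not a single path, take $u$ and $v$ to be the first and last nodes of $P_2$ on $P_1$; the interleaving argument above (this is where the order $u_1,u_2,v_2,v_1$ enters) shows $u$ precedes $v$ on $P_1$, so $P_1[u,v]$ exists; replacing $P_2[u,v]$ by $P_1[u,v]$ keeps the path simple (the prefix meets $P_1$ only at $u$, the suffix only at $v$), keeps it shortest because $P_1[u,v]$ is a shortest $uv$-path, and strictly decreases $|E(P_2)\setminus E(P_1)|$ since some excursion between the at least two components of $P_1\cap P_2$ lies inside $P_2[u,v]$ and is discarded. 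That single global exchange yields the contradiction your consecutive-segment exchange was meant to provide, without needing any claim about intermediate excursions.
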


\begin{figure}[t]
	\begin{center}
	    \scalebox{0.85}{
		\begin{tikzpicture}[general]
		\fill[fill=lightlightgray] (0,-3) .. controls (-4,-3) and (-4,3) ..
		node [3d, near start] (u2) {$u_2$}
		node [3d] (u1) {$u_1$}
		(0,3) .. controls (4,3) and (4,-3) ..
		node [3d] (v1) {$v_1$}
		node [3d, near end] (v2) {$v_2$}
		(0,-3);
		\node (u) at (-1.8,0) {};
		\node (v) at (-0.4,0) {};
		\node (a) at (1.2,0) {};
		\node (b) at (1.9,0) {};
		\node (c1) at (-1.1,2.3) {};
		\node (c2) at (2.2,2.7) {};
		\node (c3) at (1.1,-1.8) {};
		\node (c4) at (0.4,2.2) {};
		\draw [path, blue, -] (-2.05,-1.85) to [out=70, in=-100] (-1.7,0);
		\draw [path, -, dashdotted] (u2) to [out=70, in=-100] (-1.8,0);
		\draw [path, -, dashdotted] (-1.8,0) .. controls (c1) and (c2) ..
		node [above, black] {$P'_2$}			
		(1.9,0);
		\draw [path, -, dashdotted] (1.9,0) .. controls (c3) and (c4) .. (v);

		\draw [path, blue] (-0.5,0) to [out=-115, in=160] (2,-2.1);
		\draw [pathleft, dashdotted] (-0.4,0) to  [out=-115, in=160] (v2);
		\draw [pathleft, red, solid] (u1) to
		node [above left, red, pos=.1] {$P_1$}	(v1);
		\draw [path, blue, -] (-1.8,-0.09) to (-0.4,-0.09);
		\node [3d] at (u) {$u$};
		\node [3d] at (v) {$v$};
		\node [3d] at (0.86,0) {};
		\node [3d] at (1.9,0) {};
		\node [blue] at (-.3,-1.75) {$P_2$};
		\end{tikzpicture}
		}
	\end{center}			
	\caption{An illustration for the proof of Lemma~\ref{lemma:lemma5.7}.}
	\label{figure:figure12}
\end{figure}
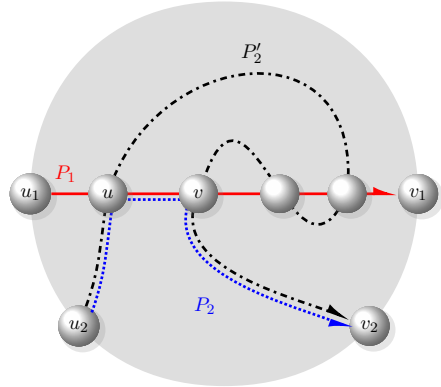

\begin{proof}
As illustrated by Figure~\ref{figure:figure12}, suppose that $P'_2$ is
a shortest $u_2v_2$-path of $G$ with $P_1\cap P'_2\ne\varnothing$.
Let $u$ (respectively, $v$) be the first (respectively, last) node of
$P'_2$ in $P_1$.  Let $P_2$ be obtained from $P'_2$ by replacing its
$uv$-path with the $uv$-path of $P_1$.  By the order of
$u_1,u_2,v_2,v_1$ on the boundary of the external face of $G$, $P_2$
is well defined and is a shortest $u_2v_2$-path of $G$ such that $P_1$
and $P_2$ are noncrossing.
\end{proof}

\begin{lemma}
\label{lemma:lemma5.8}
Let $G$ be an $n$-node simple connected bidirected plane graph with
nonnegative edge weights.  Let $u_1,\ldots,u_k,v_k,\ldots,v_1$ be $2k$
distinct nodes on the boundary of the external face of $G$ in order.
For each $i\in\{1,k\}$, let $P_i$ be a simple shortest $u_iv_i$-path
of $G$ such that $P_1$ and $P_k$ are noncrossing. Let $h$ be the
number of nodes of $G[P_1,P_k]$ not in $P_1\cap P_k$.  Given
$P_1\setminus P_k$ and $P_k\setminus P_1$, consider the problem of
computing $d_G(u_i,v_i)$ for all $i=1,\ldots,k$.
\begin{enumerate}
\tightened
\item
\label{statement:lemma5.8(1)} 
If $P_1\cap P_k=\varnothing$, then the problem can be solved in
$O(h\log k)$ time.
		
\item 
\label{statement:lemma5.8(2)}
If $P_1\cap P_k=\varnothing$ and we are given a set $Z$ of $O(1)$
nodes such that for each $i=1,\ldots,k$ at least one shortest
$u_iv_i$-path passes at least one node of $Z$, then the problem can be
solved in $O(h)$ time.
		
\item 
\label{statement:lemma5.8(3)}
If $P_1\cap P_k\ne\varnothing$ and we are given $w(P_1\cap P_k)$, then
the problem can be solved in $O(h)$ time.
\end{enumerate}
\end{lemma}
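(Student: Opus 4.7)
The plan is to prove the three statements in the order (3), (2), (1), each reducing to a constant number of single-source Dijkstra executions on the planar subgraph $G[P_1,P_k]$ via Lemma~\ref{lemma:lemma3.2}. Throughout, two applications of Lemma~\ref{lemma:lemma5.7} (once with $P_1$ as the fixed noncrossing boundary and once with $P_k$) supply, for every $i$, a shortest $u_iv_i$-path of $G$ that lies in $G[P_1,P_k]$, so every required $G$-distance can be computed inside this $O(h)$-node region.

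For Statement~(\ref{statement:lemma5.8(3)}), since $P_1$ and $P_k$ are simple shortest paths, $P_1\cap P_k$ is a contiguous shortest subpath from some node $a$ to some node $b$, which in the plane embedding of $G[P_1,P_k]$ acts as a \emph{pinch}: removing it separates $G[P_1,P_k]$ into a before-lobe $B$ containing $u_1,\ldots,u_k$ and meeting the pinch only at $a$, and an after-lobe $A$ containing $v_1,\ldots,v_k$ and meeting the pinch only at $b$. Every $u_iv_i$-path in $G[P_1,P_k]$ must therefore enter the pinch at $a$, traverse the shared subpath, and leave at $b$, so
\[
d_G(u_i,v_i)=d_B(u_i,a)+w(P_1\cap P_k)+d_A(b,v_i).
\]
A single reverse Dijkstra from $a$ in $B$ yields all $d_B(u_i,a)$ and a single forward Dijkstra from $b$ in $A$ yields all $d_A(b,v_i)$; by Lemma~\ref{lemma:lemma3.2} both runs cost $O(h)$.

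For Statement~(\ref{statement:lemma5.8(2)}), the hypothesis on $Z$ gives $d_G(u_i,v_i)=\min_{z\in Z}\bigl(d_G(u_i,z)+d_G(z,v_i)\bigr)$. Executing $O(|Z|)=O(1)$ forward and reverse Dijkstra runs from the nodes of $Z$ on $G[P_1,P_k]$ supplies every needed distance in $O(h)$ time by Lemma~\ref{lemma:lemma3.2}; Lemma~\ref{lemma:lemma5.7} identifies these $G[P_1,P_k]$-distances with their $G$-counterparts, and combining the $O(1)$ candidates for each of the $k\leq h$ pairs adds another $O(h)$.

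Statement~(\ref{statement:lemma5.8(1)}) follows by the classical divide-and-conquer of Reif. Choose the middle index $j=\lceil(1+k)/2\rceil$, compute a simple shortest $u_jv_j$-path $P_j$ inside $G[P_1,P_k]$ by Lemma~\ref{lemma:lemma3.2} in $O(h)$ time, and invoke Lemma~\ref{lemma:lemma5.7} to make $P_j$ noncrossing with each of $P_1,P_k$. Recurse on $G[P_1,P_j]$ with indices $1,\ldots,j$ and on $G[P_j,P_k]$ with indices $j,\ldots,k$, and whenever a recursive instance has boundary paths that share a nonempty subpath, dispatch it to Statement~(\ref{statement:lemma5.8(3)}). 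The main obstacle is the running-time analysis: a naive recurrence $T(h,k)\leq T(h_1,k_1)+T(h_2,k_2)+O(h)$ with $h_1+h_2\leq h+|P_j|$ only yields a polynomial in $k$, so one must argue level by level that the $2^\ell$ subregions at recursion depth $\ell$, although they share their dividing paths as boundaries, have total node count $O(h)$; this keeps the Dijkstra work at each of the $O(\log k)$ levels at $O(h)$ and produces the claimed $O(h\log k)$ total.
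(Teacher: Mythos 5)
Your proposal follows essentially the same route as the paper: Statements~(\ref{statement:lemma5.8(2)}) and~(\ref{statement:lemma5.8(3)}) via $O(1)$ linear-time planar Dijkstra runs (Lemma~\ref{lemma:lemma3.2}) inside $G[P_1,P_k]$, which contains shortest $u_iv_i$-paths by Lemma~\ref{lemma:lemma5.7}, and Statement~(\ref{statement:lemma5.8(1)}) by the same Reif-style median divide-and-conquer with the per-level $O(h)$ charging (subregions at one level share only their dividing paths, so each node lies in at most two of them) over $O(\log k)$ levels. The only divergences are minor: you dispatch subinstances whose bounding paths intersect to Statement~(\ref{statement:lemma5.8(3)}) (feasible, since both paths are then explicit and $w(P_a\cap P_j)$ is computable within the region) where the paper instead invokes Statement~(\ref{statement:lemma5.8(2)}) with a single intersection node as $Z$; and in your proof of Statement~(\ref{statement:lemma5.8(2)}) the phrase that Lemma~\ref{lemma:lemma5.7} ``identifies'' $d_{G[P_1,P_k]}(u_i,z)$ and $d_{G[P_1,P_k]}(z,v_i)$ with their $G$-counterparts is not literally what that lemma gives---what is actually needed (and what holds in every use of the statement in the paper, where $Z$ lies on the bounding shortest paths) is that some shortest $u_iv_i$-path through a node of $Z$ can be chosen inside $G[P_1,P_k]$, a point the paper's own one-line proof glosses over as well.
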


\begin{proof}
Since $P_1\setminus P_k$ and $P_k\setminus P_1$ are given, it takes
$O(h)$ time to obtain $G[P_1,P_k]$ excluding the edges and internal
nodes of $P_1\cap P_k$.  Statements~\ref{statement:lemma5.8(2)}
and~\ref{statement:lemma5.8(3)} follow from
Lemmas~\ref{lemma:lemma3.2} and~\ref{lemma:lemma5.7}.  As for
Statement~\ref{statement:lemma5.8(1)}, under the assumption that a
simple shortest $u_av_a$-path $P_a$ and a simple shortest
$u_bv_b$-path $P_b$ of $G$ are given and disjoint, below is the
recursive algorithm $\textsc{Measure}(a,b)$ with $1\leq a<b\leq k$ for
solving the {\em $(a,b)$-subproblem} of computing $d_G(u_i,v_i)$ for
all indices $i$ with $a<i<b$:
\begin{quote}
Let $i=\lfloor (a+b)/2\rfloor$.  By Lemma~\ref{lemma:lemma3.2}, it
takes time linear in the number of nodes in $G[P_a,P_b]$ to obtain
$d_G(u_i,v_i)$ and a simple shortest $u_iv_i$-path $P_i$ of
$G[P_a,P_b]$ that is noncrossing with both $P_a$ and $P_b$.  For the
$(a,i)$-subproblem, if $P_a\cap P_i=\varnothing$, then call
$\textsc{Measure}(a,i)$; otherwise, apply
Statement~\ref{statement:lemma5.8(2)} with $Z$ consisting of an
arbitrary node in $P_a\cap P_i$.  For the $(i,b)$-subproblem, if
$P_i\cap P_b=\varnothing$, then call $\textsc{Measure}(i,b)$;
otherwise, apply Statement~\ref{statement:lemma5.8(2)} with $Z$
consisting of an arbitrary node in $P_i\cap P_b$.
\end{quote}
The algorithm for the statement obtains $d_G(u_1,v_1)$ and
$d_G(u_k,v_k)$ from $P_1$ and $P_k$ and calls $\textsc{Measure}(1,k)$.
Since each $d_G(u_i,v_i)$ with $1< i<k$ is computed by
Lemma~\ref{lemma:lemma3.2} or Statement~\ref{statement:lemma5.8(2)},
the correctness holds trivially.  By the choice of $i$,
$\textsc{Measure}(1,k)$ runs in $O(\log k)$ levels of recursion.
Since $P_a\cap P_b=\varnothing$ holds for each call to
$\textsc{Measure}(a,b)$, each node of $G[P_1,P_k]$ appears in at most
two subgraphs $G[P_a,P_b]$ in the same level of recursion. Thus, the
overall running time for each level of recursion is $O(h)$. The
algorithm runs in $O(h\log k)$ time.
\end{proof}

\subsection{Proving Lemma~\ref{lemma:lemma4.1}}
\label{subsection:subsection5.5}
\begin{proof}[Proof of Lemma~\ref{lemma:lemma4.1}]
For each $i=1,\ldots,\ell$, let $d_i=d_G(u_i,v_i)$.  With the
modification below, each $d_i$ with $1\leq i\leq \ell$ equals the
weight of a shortest $u'_iv'_i$-path in the resulting $G$, which
remains an $O(n)$-node simple connected bidirected plane graph: 
(1) for each $i=1,\ldots,\ell$, add new nodes $u'_i$ and $v'_i$ in the
external face, zero-weighted edges $u'_iu_i$ and $v_iv'_i$, and
$\infty$-weighted edges $u_iu'_i$ and $v'_iv_i$,
(2) contract each zero-weighted strongly connected subgraph into a
single node,
(3) delete all self-loops, and
(4) delete all except one copy of each set of multiple edges with
minimum weight.
Thus, the rest of the proof assumes that
$u_1,\ldots,u_\ell,v_\ell,\ldots,v_1$ are distinct and $G$ does not
have any zero-weighted cycles, implying that all shortest paths of $G$
are simple.

Let $G_\vartriangle$ be an $O(n)$-node bidirected plane graph obtainable in
$O(n)$ time from $G$ by identifying nodes $u_i$ and $v_i$ into a new
node $z_i$ for each $i=1,\ldots,\ell$ and then triangulating each face
of size larger than $3$. Let
\begin{equation}
\label{equation:equation1}
r=\max(1,\lceil \log^6_2 n\rceil).
\end{equation}
By Lemma~\ref{lemma:lemma5.1}, an $r$-division $D_0$ for $G_\vartriangle$
can be computed in $O(n)$ time.
Let $D_1$ be the division of $G$ induced by $D_0$: Each piece of $D_1$
is obtained from a piece of $D_0$ by deleting the edges added to
triangulate faces of size larger than $3$.  Each piece of $D_0$ has
$O(r)$ nodes, $O(\sqrt{r})$ boundary nodes, and $O(1)$ holes, so does
each piece of $D_1$.
Let $I$ consist of indices $1$ and $\ell$ and the indices $i$ such
that at least one of $u_i$ and $v_i$ is a boundary node of $D_1$.
Since each $z_i$ with $i\in I\setminus\{1,\ell\}$ is a boundary node
in $D_0$, the cardinality of $I$ is $O(n/\sqrt{r})$.  To turn both of
$u_i$ and $v_i$ with $i\in I$ into boundary nodes, we introduce
$O(n/r)$ new $O(\sqrt{r})$-node pieces, which form a partition of the
nodes $u_i$ and $v_i$ with $i\in I$.  Let $D$ be the resulting
division of $G$.  Each new piece of $D$ has $O(\sqrt{r})$ nodes and no
edges, so it has $O(\sqrt{r})$ boundary nodes and $O(1)$ holes.  Thus,
$D$ is an $r$-division of $G$ such that each $u_i$ with $1\leq i\leq
\ell$ is a boundary node in $D$ if and only if so is $v_i$.
Let $G'$ be the simple bidirected plane graph with edge weights
obtained from $G$ by reversing the direction of each edge. Let $D'$ be
the $r$-division of $G'$ corresponding to $D$. By
Equation~(\ref{equation:equation1}), it takes $O(n\log\log n)$ time to
compute $K(D)$ and $K(D')$ and the data structures ensured by
Lemmas~\ref{lemma:lemma5.2} and~\ref{lemma:lemma5.5}.

For any nodes $x$ and $y$ in a shortest path $P$ of $G$, let $P[x,y]$
denote the $xy$-path of $P$.  We need a subroutine $\textsc{Label}(P)$
to compute label $\phi(z)$ for each node $z$ of a shortest path $P$ of
$G$ under the assumption that $\phi(z)$ for at most one node of $P$ is
pre-computed:
\begin{quote}
Let $z^*$ be the node with pre-computed $\phi(z^*)$. If there is no such a node,
then let $z^*$ be an arbitrary node of $P$ and let $\phi(z^*)=0$.  For
each node $z$ that precedes $z^*$ in $P$, let
$\phi(z)=\phi(z^*)-w(P[z,z^*])$.  For each node $z$ that succeeds
$z^*$ in $P$, let $\phi(z)=\phi(z^*)+w(P[z,z^*])$.
\end{quote}
Subroutine $\textsc{Label}(P)$ runs in $O(1)$ time per node of $P$ and
does not overwrite $\phi(z)$ for any $z$ with pre-computed $\phi(z)$.
After running $\textsc{Label}(P)$, for any nodes $x$ and $y$ of $P$,
$w(P[x,y])$ can be obtained from $\phi(y)-\phi(x)$ in $O(1)$ time.

\begin{figure}[t]
{\small
\hrule
\medskip
Subroutine~$\textsc{Solve}(a,b)$
\medskip
\hrule
\medskip
		
If $I(a,b)=\varnothing$, then solve the $(a,b)$-subproblem by
Lemma~\ref{lemma:lemma5.8}(\ref{statement:lemma5.8(1)}) and return.

\medskip
		
If $I(a,b)\ne \varnothing$, then let $i$ be a median of $I(a,b)$ and
let $P$ (respectively, $P'$) be a shortest $u_iv_i$-path whose first
(respectively, last) $c$ edges can be obtained in $O(c\log\log r)$
time.
		
\medskip
		
\noindent
Case~1: $P\cap (P_a\cup P_b)=\varnothing$. Let $P_i=P$.  Call
$\textsc{Label}(P_i)$, $\text{\sc Solve}(a,i)$, and $\text{\sc
  Solve}(i,b)$. Return.
		
\medskip
\noindent
Case~2: $P\cap (P_a\cup P_b)\ne\varnothing$.
\begin{itemize}
\tightened
\item Call $\textsc{Label}(P[u_i,x])$, where $x$ is the first node of
  $P$ in $P_a\cup P_b$.

\item 
 Call $\textsc{Label}(P'[y,v_i])$, where $y$ is the last node of $P'$
 in $P[u_i,x]\cup P_a\cup P_b$.
\end{itemize}
Case~2(1): $y\in P_a\cup P_b$. Let $j$ be the index in $\{a,b\}$ with $x\in P_j$.
\begin{itemize}
\tightened
\item If $y\notin P_j$, then solve the $(a,b)$-subproblem by
  Lemma~\ref{lemma:lemma5.8}(\ref{statement:lemma5.8(2)}) with
  $Z=\{x,y\}$. Return.

\item If $y\in P_j$, then let $P_i=P[u_i,x]P_j[x,y]P'[y,v_i]$,
  implying $w(P_i\cap P_j)=\phi(y)-\phi(x)$.

\begin{itemize}
\halftightened
\item If $x\in P_a$, then solve the $(a,i)$-subproblem by
  Lemma~\ref{lemma:lemma5.8}(\ref{statement:lemma5.8(3)}) and call
  $\text{\sc Solve}(i,b)$. Return.

\item If $x\in P_b$, then solve the $(i,b)$-subproblem by
  Lemma~\ref{lemma:lemma5.8}(\ref{statement:lemma5.8(3)}) and call
  $\text{\sc Solve}(a,i)$. Return.
\end{itemize}
\end{itemize}
Case~2(2): $y\notin P_a\cup P_b$, implying $y\in P[u_i,x]$ and $y\ne
x$.  Let $P_i=P[u_i,y]P'[y,v_i]$. Let $Z=\{x\}$.
\begin{itemize}
\tightened
\item 
If $x\in P_a$, then solve the $(a,i)$-subproblem by
Lemma~\ref{lemma:lemma5.8}(\ref{statement:lemma5.8(2)}) and call
$\text{\sc Solve}(i,b)$. Return.
			
\item 
If $x\in P_b$, then solve the $(i,b)$-subproblem by
Lemma~\ref{lemma:lemma5.8}(\ref{statement:lemma5.8(2)}) and call
$\text{\sc Solve}(a,i)$. Return.
\end{itemize}
\hrule
}
\caption{Subroutine $\text{\sc Solve}(a,b)$.}
\label{figure:figure13}
\end{figure}

For any indices $a$ and $b$, let set $I(a,b)$ consist of the indices
$i\in I$ with $a< i< b$.  
For each $i\in\{1,\ell\}$, let $P_i$ be a shortest $u_iv_i$-path of
$G$ obtainable in $O(n)$ time by Lemma~\ref{lemma:lemma3.2}.
If $P_1\cap P_{\ell}\ne\varnothing$, then the lemma follows from
Lemma~\ref{lemma:lemma5.8}(\ref{statement:lemma5.8(2)}) with $Z=\{x\}$
for an arbitrary node $x\in P_1\cap P_\ell$.  The rest of the proof
assumes $P_1\cap P_\ell=\varnothing$.
The algorithm proving the lemma calls $\textsc{Label}(P_1)$,
$\textsc{Label}(P_k)$, and $\textsc{Solve}(1,\ell)$, where the main
subroutine $\textsc{Solve}(a,b)$, as defined in
Figure~\ref{figure:figure13} and elaborated below, solves the {\em
  $(a,b)$-subproblem} of computing $d_i$ for all indices $i$ with
$a\leq i\leq b$ under the condition that
\begin{itemize}
\tightened
\item shortest $u_av_a$-path $P_a$ of $G$ and shortest $u_bv_b$-path
  $P_b$ of $G$ are disjoint,

\item $\phi(z)$ is pre-computed for each node $z\in P_a\cup P_b$, and

\item the set $X(a,b)$ of boundary nodes of $D$ in $G[P_a,P_b]$ is
  given.
\end{itemize}
By Equation~(\ref{equation:equation1}), it remains to prove that
$\textsc{Solve}(1,\ell)$ correctly solves the $(1,\ell)$-subproblem in
$O(n\log r)$ time.  If $I(a,b)=\varnothing$, then all $u_i$ with
$a<i<b$ are not boundary nodes in $D$.  Since these $u_i$ induce a
connected subgraph of $G$, they belong to a common piece of $D$,
implying $b-a=O(r)$.  The $(a,b)$-subproblem can be solved by
Lemma~\ref{lemma:lemma5.8}(\ref{statement:lemma5.8(1)}) in
$O(h(a,b)\log r)$ time, where $h(a,b)$ is the number of nodes in
$G[P_a,P_b]$ that are not in $P_a\cap P_b$.

For the case with $I(a,b)\ne\varnothing$, we cannot afford to directly
compute a shortest $u_iv_i$-path $P_i$ of $G$ for a median $i$ of
$I(a,b)$ by Lemma~\ref{lemma:lemma3.2}.  Instead, in the subgraph of
$K(D)$ induced by the given set $X(a,b)$ of boundary nodes of $D$ in
$G[P_a,P_b]$, we compute a shortest $u_iv_i$-path $\Pi$ (respectively,
$\Pi'$) of $K(D)$ (respectively, $K(D')$), the first (respectively,
last) $c$ edges of whose underlying path $P$ (respectively, $P'$) can
be obtained in $O(c\log\log r)$ time by Lemma~\ref{lemma:lemma5.2}.
By Lemma~\ref{lemma:lemma5.7}, $G[P_a,P_b]$ contains at least one
shortest $u_iv_i$-path of $G$, implying that the subgraph of $K(D)$
induced by $X(a,b)$ contains at least one shortest $u_iv_i$-path of
$K(D)$. Therefore, $P$ and $P'$ are shortest $u_iv_i$-paths of $G$ in
$G[P_a,P_b]$.
If $P$ does not intersect $P_a\cup P_b$, then it takes $O(\log\log r)$
time per node to obtain $P$.
As in Case~1 of Figure~\ref{figure:figure13}, the subroutine lets
$P_i=P$ and calls $\textsc{Label}(P_i)$, $\textsc{Solve}(a,i)$, and
$\textsc{Solve}(i,b)$.
If $P$ intersects $P_a\cup P_b$, it takes $O(\log\log r)$ time per
node to obtain $P[u_i,x]$ and $P'[y,v_i]$, where $x$ is the first node
of $P$ in $P_a\cup P_b$ and $y$ is the last node of $P'$ in
$P[u_i,x]\cup P_a\cup P_b$, as stated by the first two bullets in
Case~2 of Figure~\ref{figure:figure13}.  The subroutine calls
$\textsc{Label}(P[u_i,x])$ and $\textsc{Label}(P'[y,v_i])$.

\begin{figure}[t]
\begin{center}
\scalebox{0.85}{
\begin{tikzpicture}[general]
\node[3d] (ui) at (-2,0)   {$u_i$};
\node[3d] (vi) at (2,0)    {$v_i$};
\node[3d] (ua) at (-2,1.5) {$u_a$};
\node[3d] (x) at (0.5,1.5) {$x$};
\node[3d] (y) at (-0.5,-1.5) {$y$};
\node[3d] (va) at (2,1.5)  {$v_a$};
\node[3d] (ub) at (-2,-1.5) {$u_b$};
\node[3d] (vb) at (2,-1.5)  {$v_b$};
\node at (0,-2.5) {(a)};

\draw[pathleft, solid] (ua) -- (x) -- (va);
\draw[path, solid]     (ub) -- (y) -- (vb);
\draw[path, bend right=15, color=red] (ui) to (x);
\draw[pathleft, bend left=15, color=blue, densely dashed]  (y) to (vi);
\end{tikzpicture}
\qquad
\begin{tikzpicture}[general]
\node[3d] (ui) at (-2,0)   {$u_i$};
\node[3d] (vi) at (2,0)    {$v_i$};
\node[3d] (ua) at (-2,1.5) {$u_a$};
\node[3d] (x) at (-2/3,1.5) {$x$};
\node[3d] (y) at (2/3,1.5) {$y$};
\node[3d] (va) at (2,1.5)  {$v_a$};
\node[3d] (ub) at (-2,-1.5) {$u_b$};
\node[3d] (vb) at (2,-1.5)  {$v_b$};
\node at (0,-2.5) {(b)};

\draw[pathleft, solid] (ua) -- (x) -- (y) -- (va);
\draw[path, solid]     (ub) -- (vb);
\draw[path, bend right, color=red] (ui) to (x);
\draw[path, bend right, color=blue, densely dashed] (y) to (vi);
\end{tikzpicture}
\qquad
\begin{tikzpicture}[general]
\node[3d] (ui) at (-2,0)   {$u_i$};
\node[3d] (vi) at (2,0)    {$v_i$};
\node[3d] (ua) at (-2,1.5) {$u_a$};
\node[3d] (x) at (0.5,1.5) {$x$};
\node[3d] (va) at (2,1.5)  {$v_a$};
\node[3d] (ub) at (-2,-1.5) {$u_b$};
\node[3d] (vb) at (2,-1.5)  {$v_b$};
\node at (0,-2.5) {(c)};

\draw[pathleft, solid] (ua) -- (x) -- (va);
\draw[path, solid]     (ub) -- (vb);
\draw[path, bend right=20, color=red] (ui) to (x);
\node[3d] (y) at (-0.5,0.5) {$y$};
\draw[path, bend right=10, color=blue, densely dashed] (y) to (vi);
\end{tikzpicture}
}
\end{center}
\caption{Illustrations for the proof of
  Lemma~\ref{lemma:lemma4.1}. All $P_a$ and $P_b$ are in black.  Each
  $P[u_i,x]$ is in red dots. Each $P'[y,v_i]$ is in blue dashes.}
\label{figure:figure14}
\end{figure}
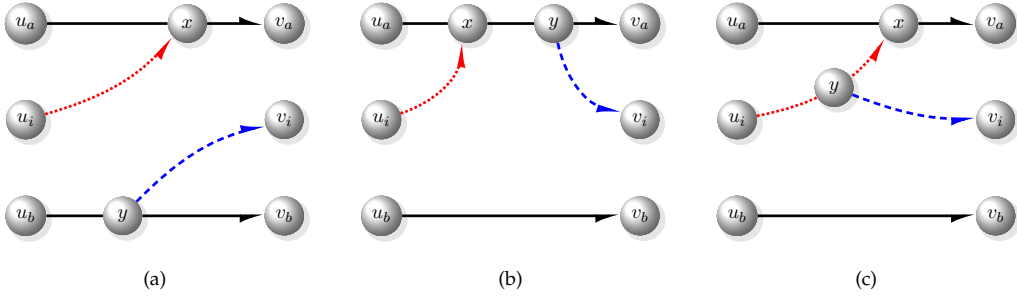

\begin{itemize}
\tightened
\item 
As illustrated by Figure~\ref{figure:figure14}(a), if each of $P_a$
and $P_b$ contains exactly one of $x$ and $y$, then the
$(a,b)$-subproblem is solved in $O(h(a,b))$ time by
Lemma~\ref{lemma:lemma5.8}(\ref{statement:lemma5.8(2)}) with
$Z=\{x,y\}$, as stated by the first bullet in Case~2(1) of
Figure~\ref{figure:figure13}.

\item 
As illustrated by Figure~\ref{figure:figure14}(b), if $x,y\in P_a$,
then let $P_i=P[u_i,x]P_a[x,y]P'[y,v_i]$.  The $(i,b)$-subproblem is
solved by calling $\textsc{Solve}(i,b)$. The $(a,i)$-subproblem is
solved by Lemma~\ref{lemma:lemma5.8}(\ref{statement:lemma5.8(3)}) with
$w(P_a\cap P_i)=\phi(y)-\phi(x)$ in $O(h(a,b))$ time.  The case with
$x,y\in P_b$ is similar.  The second bullet of Case~2(1) in
Figure~\ref{figure:figure13} states these two cases.

\item 
As illustrated by Figure~\ref{figure:figure14}(c), if $x\in P_a$ and
$y\notin P_a\cup P_b$, then the shortest $u_iv_i$-path
$P_i=P[u_i,y]P'[y,v_i]$ is disjoint with $P_a\cup P_b$. The
$(i,b)$-subproblem is solved by calling $\textsc{Solve}(i,b)$.  Since
at least one shortest $u_iv_i$-path of $G[P_a,P_i]$ passes $x$, the
$(a,i)$-subproblem can be solved in $O(h(a,b))$ time by
Lemma~\ref{lemma:lemma5.8}(\ref{statement:lemma5.8(2)}) with
$Z=\{x\}$.  The case with $x\in P_b$ and $y\notin P_a\cup P_b$ is
similar.  Case~2(2) in Figure~\ref{figure:figure13} states these two
cases.
\end{itemize}
The correctness holds trivially, since each $d_i$ with $1\leq
i\leq\ell$ is computed somewhere during the execution of
$\textsc{Solve}(1,\ell)$ by Lemma~\ref{lemma:lemma5.8}.  Since $i$ is
chosen to be a median of $I(a,b)$ in each subroutine call to
$\textsc{Solve}(a,b)$, there are $O(\log n)$ levels of recursion in
executing $\textsc{Solve}(1,\ell)$.  Let $m(a,b)$ be the sum of the
multiplicities of the nodes of $X(a,b)$ in $D$.
By Lemma~\ref{lemma:lemma5.5}, the time for computing $\Pi$ and $\Pi'$
is $O(m(a,b)\log^2 m(a,b))$.
In order to maintain the condition that $X(a,b)$ is given whenever
$\textsc{Solve}(a,b)$ is called, we apply Lemma~\ref{lemma:lemma5.6}
to obtain $X(a,i)$ and $X(i,b)$ in $O((m(a,b)+m_i)\log r)$ time before
calling $\textsc{Solve}(a,i)$ or $\textsc{Solve}(i,b)$, where $m_i$ is
the number of edges in $P_i\setminus(P_a\cup P_b)$.
Since $P_a$ and $P_b$ are disjoint, each boundary node of $D$ is
contained by one or two subgraphs $G[P_a,P_b]$ of the same recursion
level.  Since there are $O(n/r)$ pieces of $D$ and each piece of $D$
has $O(\sqrt{r})$ boundary nodes, the sum of $m(a,b)$ over all
subgraphs $G[P_a,P_b]$ at the same recursion level is $O(n/\sqrt{r})$.
Since each edge of $G$ appears in at most one $P_i\setminus(P_a\cup
P_b)$ for all subroutine calls to $\textsc{Solve}(a,b)$, the sum of all
$m_i$ throughout the execution of $\textsc{Solve}(1,\ell)$ is $O(n)$.
By Equation~(\ref{equation:equation1}), the overall time for computing
$\Pi$ and $\Pi'$ is
\begin{displaymath}
O\left(\log n\cdot
\frac{n}{\sqrt{r}}\log^2 n\right)=O(n).
\end{displaymath}
The overall time of finding all paths $P$, $P[u_i,x]$, and $P'[y,v_i]$
is $O(n\log\log r)$, since their edges are disjoint and all of them
are obtainable in $O(\log\log r)$ time per node.  Therefore, the
running time of $\textsc{Solve}(1,\ell)$ is dominated by the sum of
the $O(h(a,b)\log r)$ time for solving the $(a,b)$-subproblems by
Lemmas~\ref{lemma:lemma5.8}(\ref{statement:lemma5.8(1)}),
\ref{lemma:lemma5.8}(\ref{statement:lemma5.8(2)}),
and~\ref{lemma:lemma5.8}(\ref{statement:lemma5.8(3)}) at the bottom of
recursion.  Since the sum of $h(a,b)$ over all these
$(a,b)$-subproblems is $O(n)$, the running time of
$\textsc{Solve}(1,\ell)$ is $O(n\log r)$.  The lemma is proved.
\end{proof}

\section{Concluding remarks}
\label{section:section6}
We give the first known $O(n\log n\log\log n)$-time algorithms for
finding a minimum cut and a shortest cycle in an $n$-node simple
directed planar graph $G$ with nonnegative edge weights.  For the case
that $G$ is restricted to be unweighted, our shortest-cycle algorithm
remains the best known result for the shortest-cycle problem.  The
best algorithm for the minimum-cut problem, running in $O(n\log
n)$ time, is obtained by plugging in the $O(n)$-time minimum $st$-cut
algorithm of, e.g., Brandes and Wagner~\cite{bran00} and Eisenstat and
Klein~\cite{eise13} to a directed version of the reduction algorithm
of Chalermsook et al.~\cite{chal}.  Thus, an interesting future
direction is to further reduce the running time of our algorithms on
both problems for this special case.  Extending our results to
bounded-genus graphs is also of interest.

\section*{Acknowledgment}
We thank the anonymous reviewers for helpful comments.
\bibliographystyle{abbrv}

\end{document}